\documentclass[journal,onecolumn]{IEEEtran}
\usepackage{amsmath,amsfonts}
\usepackage{amssymb}
\usepackage{array}
\usepackage[caption=false,font=normalsize,labelfont=sf,textfont=sf]{subfig}
\usepackage{textcomp}
\usepackage{stfloats}
\usepackage{url}
\usepackage{color}
\usepackage{verbatim}
\usepackage{graphicx}
\usepackage{float}

\usepackage{amsmath}
\usepackage{amssymb}
\usepackage{mathrsfs}
\usepackage{makecell}
\usepackage{threeparttable}
\usepackage[figuresright]{rotating}
\usepackage{soul} 
\usepackage{color, xcolor} 
\usepackage{booktabs}
\usepackage{bm}
\usepackage{multicol}
\usepackage{algorithm}      
\usepackage{algpseudocode}
\usepackage{cite}

\usepackage{color}
\usepackage{colortbl}

\usepackage{setspace} 
\makeatletter
\let\NAT@parse\undefined
\makeatother
\usepackage{hyperref}  
\usepackage{diagbox}

\hypersetup{
	colorlinks=true,
	linkcolor=black,
	citecolor=black
}

\usepackage{tikz}
\usetikzlibrary{shapes,arrows}

\newtheorem{lemma}{\textbf{Lemma}}

\newtheorem{theorem}{\textbf{Theorem}}

\newtheorem{definition}{\textbf{Definition}}
\newenvironment{proof}{{\noindent\it \textbf{Proof}:} }{\hfill $\square$\par}
\newtheorem{example}{\textbf{Example}}

\hyphenation{op-tical net-works semi-conduc-tor IEEE-Xplore}

\begin{document}
	
	\title{
		 Generalized Repetition Codes and Their Application to HARQ
	}
	\author{Chaofeng Guan, Gaojun Luo, Lan Luo, Yangyang Fei, Hong Wang ~\IEEEmembership{}
	}

	\markboth{}%
	{}
	
	\IEEEpubid{}

	\maketitle
	
	\begin{abstract}

The inherent uncertainty of communication channels implies that any coding scheme has a non-zero probability of failing to correct errors, making retransmission mechanisms essential. To ensure message reliability and integrity, a dual-layer redundancy framework is typically employed: error correction codes mitigate noise-induced impairments at the physical layer, while cyclic redundancy checks verify message integrity after decoding. Retransmission is initiated if verification fails. This operational model can be categorized into two types of repeated communication models: Type-I systems repeatedly transmit identical codewords, whereas Type-II systems transmit distinct coded representations of the same message.
The core challenge lies in maximizing the probability of correct message decoding within a limited number of transmission rounds through verification-based feedback mechanisms.

In this paper, we consider a scenario where the same error-correcting code is used for repeated transmissions, and we specifically propose two classes of generalized repetition codes (GRCs), corresponding to the two repeated communication models.
In contrast to classical theory, we regard GRCs as error-correcting codes under multiple metrics—that is, GRCs possess multiple minimum distances. This design enables GRCs to perform multi-round error correction under different metrics, achieving stronger error-correction capabilities than classical error-correcting codes.
However, the special structure of GRCs makes their construction more challenging, as it requires simultaneously optimizing multiple minimum distances. To address this, we separately investigate the bounds and constructions for Type-I and Type-II GRCs, and obtain numerous optimal Type-I and Type-II GRCs.
When GRCs are used for repeated communication, it is only necessary to perform permutation operations on the codewords or linear transformations on the messages during each retransmission. 
This endows GRCs with the potential to seamlessly integrate with existing Hybrid Automatic Repeat-reQuest (HARQ) protocols and storage systems.

	\end{abstract}
	
	\begin{IEEEkeywords}
		Retransmission, generalized repetition code, multi-metric, code construction, bound
	\end{IEEEkeywords}
	\section{Introduction}

Reliability and low latency are fundamental requirements in modern communication systems. The inherent noise and interference in real-world communication channels make retransmission inevitable when channel impairments exceed the error correction capacity of the code. Reducing retransmission attempts is critical, as these processes introduce both energy consumption overhead—especially detrimental to satellite downlinks and IoT device communications—and operational delays that compound propagation latency in long-distance networks.

To enhance the reliability and reduce latency of communication systems, Hybrid Automatic Repeat Request (HARQ) schemes \cite{chase1985code,kallel1990analysis} employing various strategies are widely used.

\subsection{Background and motivation}
HARQ schemes \cite{Costello2004,zurawski2014industrial} combine forward error correction (FEC) coding with cyclic redundancy check (CRC) to ensure reliable data transmission. In HARQ schemes, if the receiver detects an uncorrectable error in a codeword, it feeds back a negative acknowledgment signal to request retransmission. Otherwise, the receiver sends an acknowledgement signal. Two primary classes of HARQ schemes are widely adopted in practice. The first one is the HARQ Chase Combining (HARQ-CC) scheme \cite{chase1985code}, in which each retransmission codeword is identical to the original one. The other is the HARQ Incremental Redundancy (HARQ-IR) scheme \cite{kallel1990analysis}, in which the codewords transmitted over the first transmission and subsequent transmissions are generally different. If all the message and parity bits are included in every retransmission, then such an HARQ-IR scheme is called self-decodable. The receiver combines multiple received codewords into a long codeword to enhance the error-correction capability. Here, we consider the HARQ-IR scheme, which is self-decodable and has the same codeword size for each retransmission. That is, a $[mn, k]$ linear code is partitioned into $m$ segments, each transmitted sequentially. Each segment can be decoded independently or combined into a longer code to improve error correction. In this work, we refer to such linear codes as \textit{IR-linear codes}. From an error-correction perspective, the HARQ-CC and self-decodable HARQ-IR schemes can be mathematically modeled as two repeated communication frameworks, as shown in Figure \ref{fig_retransmission channels}.

\begin{figure}[htbp]
	\centering
	\begin{tabular}{c}
		\includegraphics[width=80mm]{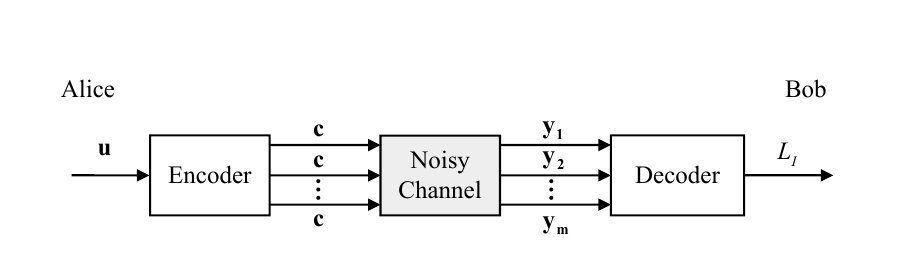} \\
		{\footnotesize (a) Type-I repeated communication model} \\
		\includegraphics[width=80mm]{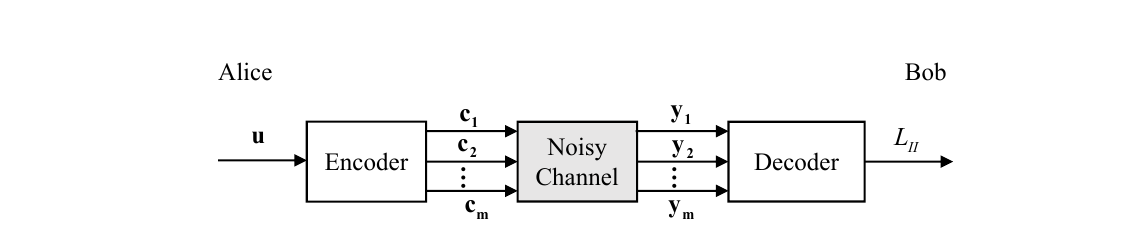} \\
		{\footnotesize (b) Type-II repeated communication model}
	\end{tabular}
	\caption{ 
		Two types of repeated communication channels:  
Alice repeatedly transmits the same encoded version $\mathbf{c}$ (resp. different encoded version $\mathbf{c}_i$) of message $\mathbf{u}$ to Bob through a noisy channel. Bob decodes the received codewords $\mathbf{y}_i$ to generate a list of candidate codewords $L_I$ (resp.\ $L_{II}$). If $L_I$ or $L_{II}$ contains the correct message $\mathbf{u}$, then decoding succeeds.
}
	\label{fig_retransmission channels}
\end{figure}

 FEC has a significant impact on the performance of HARQ schemes. 
A well-designed FEC can enhance the error-correcting capability of HARQ, reduce the number of retransmissions, and thereby reduce latency.
In \cite{Wang2003OnRT}, Wang and Orchard lowered the retransmission probability by employing embedded coset codes.
Li et al. \cite{li2010retransmission} introduced a method that modulates the transmitted codeword onto different subcarriers with different permutations, which demonstrated strong performance in overlapping channels.
Researchers have also proposed HARQ schemes based on turbo codes \cite{Zhang2020}, LDPC codes \cite{Kim2020}, and polar codes \cite{Mohammadi2017}. 
Given the critical application of HARQ in communication systems, this topic has been extensively studied, as summarized in the review literature \cite{Ahmed2021}.

In advanced storage systems, high-density data applications (e.g., molecular data storage and chemical informatics) encounter challenges analogous to iterative transmission models. 
This corresponds to transmitting identical codewords across $m$ independent channels. For a linear code with minimum Hamming distance $d$, Levenshtein~\cite{levenshtein2001efficient} proved that if $t > \lceil (d-1)/2 \rceil$, successful decoding requires at least $m$ channels satisfying
\begin{align}\label{Eq_Levenshtein}
    m > \sum_{i=0}^{t - \lceil (d-1)/2 \rceil -1} \binom{n - d}{i} \sum_{k=i + d - t}^{t - i} \binom{d}{k}.
\end{align}

However, as $t$ increases beyond $\lceil (d-1)/2 \rceil +1$, $m$ grows rapidly to the level of message length $n$. 
For example, binary Golay code has parameters $[23,12,7]_2$, which can correct $3$ bit errors. 
By Equation (\ref{Eq_Levenshtein}), we have $m> 1+\binom{7}{3}=36$ for $t=4$. 

To address these challenges, the symbol-pair read channel and symbol-pair codes~\cite{Cassuto2011CodesFS} were proposed for high-density storage.  
In the symbol-pair read channel, the probability of errors occurring in adjacent bits are correlated. 
Therefore, these codes generalize classical linear codes by defining the symbol-pair weight of $\mathbf{c}$ as the number of consecutive non-zero symbol pairs. 
Yaakobi et al.~\cite{Yaakobi2016ConstructionsAD} further extended this to $b$-symbol read channels, proving that a binary cyclic code with minimum Hamming distance $d$ has minimum symbol-pair distance at least $d+\lceil d/2 \rceil$. 
In \cite{shi2023connections}, Shi et al. proved that $[n,k,d]_q$ cyclic code  has a minimum 
$b$-symbol distance $\geq \sum_{i=0}^{b-1} \lceil d/q^i \rceil$. 
Since $b$-symbol codes can correct any error pattern correctable by classical repetition codes, $b$-symbol codes can be regarded as an enhanced version of classical repetition codes. 
They can be used to directly enhance the performance of Type-I repeated communication systems. 
 Nevertheless, three limitations exist:
\begin{itemize}
    \item The minimum $b$-symbol distance grows slowly and almost linearly as $b$ increases. 
    \item If a specific bit repeatedly fails during reading or transmission, the weight of the corresponding $b$-symbol error will be $b$. 
    \item For general linear codes, one can only be proved that $d_b\ge d_{b-1}+1$.
\end{itemize} 

Thus, $b$-symbol codes offer diminishing coding gains as $b$ increases. 
To verify this, we simulate the frame error rate (FER) of the binary cyclic Golay $[23,12,7]_2$ code under the $b$-symbol metric in an additive white Gaussian noise (AWGN) channel\footnote{Note that, in AWGN, the bit error probabilities of different bits are independent. Therefore, the simulation results cannot illustrate the performance of $b$-symbol codes in the $b$-symbol read channel.} using the minimum $b$-symbol distance decoding algorithm. The relationship between FER and different values of $b$ is illustrated in Figure  \ref{FER_of_b_symbol_Golay_codes}. 
The simulation results show that for $b\le 5$, the coding gain of the $b$-symbol metric decreases gradually. When $b>5$, the $b$-symbol metric provides only negative coding gain.

\begin{figure}[h]
	\centering
	\includegraphics[scale=0.3]{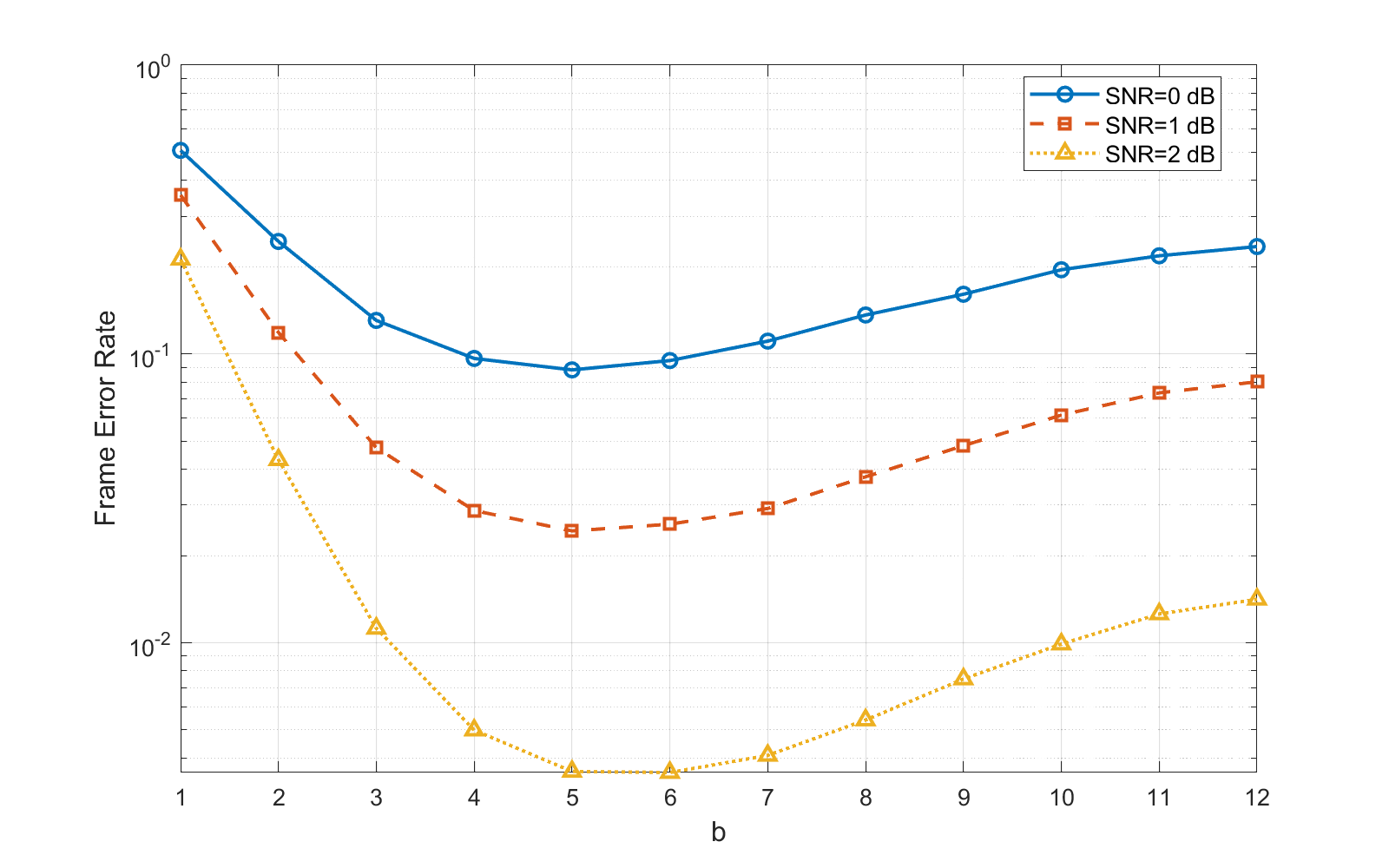}
	\caption{FER curves of binary cyclic Golay codes under $b$-symbol metric with different $b$: The signal-to-noise ratio (SNR) quantifies the ratio of signal power to noise power, expressed in decibels (dB) as \( 10 \log_{10}(P_{\text{signal}} / P_{\text{noise}}) \).}
\label{FER_of_b_symbol_Golay_codes}
\end{figure}

	From a coding theory perspective, the aforementioned methods enhance the fault tolerance of the two repeated communication models by treating multiple retransmitted codewords either as a single codeword of a linear repetition code, as a longer codeword in a code with the Hamming metric, or as a codeword measured by the $b$-symbol metric.
	Therefore, a natural question emerges: \textbf{whether linear repetition codes can have a greater error-correction capability}. 
	For Type-I repeated transmission scenario, is there a better encoding method, and if so, can similar approaches enhance the coding gain of its Type-II counterpart? 
	This motivates us to generalize the concept of linear repetition codes and investigate their constructions, their bounds and decoding methods.

\subsection{Contributions}
The following is a brief overview of the contributions made by this paper.

\begin{enumerate}
	\item 
	We introduce the concept of generalized repetition codes (GRCs) and classify them into Type-I and Type-II, corresponding to the two repeated communication models. 
	We regard GRCs as error-correcting codes with multiple metrics and multiple minimum distances.
	This design enables GRCs to perform multi-round error correction under different metrics, achieving stronger error-correction capabilities than classical error-correcting codes.
	Simulation results show that under multi-round decoding, GRCs outperform classical repetition codes, $b$-symbol codes, and IR-linear codes in performance. 
	\item 
	By leveraging the special algebraic structure of Type-I GRCs, we derive a Griesmer-type bound for them and provide explicit construction methods for optimal Type-I GRCs using cyclic codes, extended cyclic codes, and quasi-cyclic codes. 
	\item 
	We establish a mathematical relationship between general linear codes and Type-II GRCs, enabling the derivation of optimal Type-II GRCs.
	Finally, we construct numerous Type-II quasi-cyclic GRCs with good parameters via a systematic algorithmic search.
\end{enumerate}
\subsection{Organization}

The rest of this paper is organized as follows. Section \ref{Sec II} introduces the necessary preliminaries. Section \ref{Sect III} presents the definitions and metrics for Type-I and Type-II GRCs. Sections \ref{Sect IV} and \ref{Sect V} address the bounds and constructions of the Type-I and Type-II GRCs, respectively. Finally, Section \ref{Sect VI} concludes this paper. 

	\section{Preliminaries}\label{Sec II}
Let $m$ be a positive integer, and let $[m]$ denote the set ${1,2,\ldots,m}$. 
Let $q$ be a prime power and let $\mathbb{F} _q$ denote the finite field with $q$ elements. 
Let $\mathbf{v}=(v_1,v_2,\ldots,v_{mn})$ be a vector in $\mathbb{F} _q^{mn}$, where $n$ is a positive integer.
The block\footnote{In some literature, linear codes are also referred to as linear block codes, where ``block" denotes the code length \( n \). To avoid confusion, we clarify here that in this paper, "block" refers to the partitioning of the code length into segments of size \( m \), not \( n \).
} weight of $\mathbf{v}$ is 
\begin{equation}
	\mathrm{w}_m(\mathbf{v})=\#\left\{i \mid (v_{i},v_{n+i},\ldots,v_{(m-1)n+i}) \neq\mathbf{0}, 1 \leq i \leq n\right\}.
\end{equation}
When $m=1$, the block weight is indeed the same as the Hamming weight.
The block distance between $\mathbf{v_1},\mathbf{v_2}\in \mathbb{F} _q^{mn}$ is $\mathrm{d}_m(\mathbf{v_1},\mathbf{v_2})=\mathrm{w}_m(\mathbf{v_1}-\mathbf{v_2})$.
A linear subspace $\mathcal{C}$ of $\mathbb{F}_q^{mn}$ with minimum block distance $d_m$ is called an $[(n,m),k,d_m]_q$ linear block metric code, where $m$ is the block size.
Such a code can correct up to $\lfloor \frac{d_m-1}{2} \rfloor$ or detect $d_m-1$ block errors.
The vectors in $\mathcal{C}$ are called codewords. 
A $k \times mn$ matrix $G$ consisting of $k$ basis vectors of $\mathcal{C}$ is a generator matrix of $\mathcal{C}$.
If $m=1$, the block metric is reduced to the Hamming metric. In this case, we denote $\mathcal{C}$, $\mathrm{w}_m(\cdot)$,  $\mathrm{d}_m(\cdot)$, and parameters $[(n,m),k,d_m]_q$ by $C$, $\mathrm{w}(\cdot)$,  $\mathrm{d}(\cdot)$, and $[n,k,d]_q$, respectively.
   
    For $t$ linear codes $C_1$, $C_2$, $\ldots$, and $C_t$ of the same dimension, we denote by $C_1\times C_2\times\cdots\times C_t$ their juxtaposition code, i.e., 
 \begin{equation}
 C_1\times C_2\times\cdots\times C_t =\left\{\left(\mathbf{c}_1,\mathbf{c}_2,\ldots,\mathbf{c}_{t}\right)\,\bigg|\,\mathbf{c}_i\in C_i, i\in [t] \right \}.
 \end{equation}

The extended code of an $[n,k]_q$ linear code $C$ is
\begin{equation}
	\widehat{C} =\left\{\left(c_1,c_2,\ldots,c_{n},-\sum_{i=1}^{n}c_i \right)\,\bigg|\,\left(c_1,c_2,\ldots,c_{n} \right)\in C \right \}.
\end{equation} 
The support of a codeword $\mathbf{c}=\left(c_1,c_2,\ldots,c_{n} \right)$ of $C$ is $\text{Supp}(\mathbf{c})=\{i\mid c_i\ne 0, 1\le i \le n \}$.
The support of $C$ is $\text{Supp}(C)=\bigcup _{c \in C} \text{Supp}(\mathbf{c})$. Let $n(C)=|\text{Supp}(C)|$ be the \textit{effective length} of $C$.
If $n=n(C)$, then we call $C$ is \textit{full length}.

For $\mathbf{v}=(\mathbf{v}_1,\mathbf{v}_2,\ldots,\mathbf{v}_m)\in \mathbb{F}_q^{mn}$ and $\mathbf{v}_i\in \mathbb{F}_q^{n}$, we define 
	$
	\phi_m(\mathbf{v})$ as an $m\times n$ matrix $\begin{pmatrix}
		\mathbf{v}_1^T, \mathbf{v}_2^T, \ldots, \mathbf{v}_m^T 
	\end{pmatrix}^T
$. 
Then, $\mathrm{w}_m(\mathbf{v})$ is equal to the number of nonzero columns of $\phi_m(\mathbf{v})$.  
 
As shown in \cite{Cassuto2011CodesFS,Yaakobi2016ConstructionsAD}, for a vector $\mathbf{v}=(v_1,v_2,\ldots,v_{n})\in\mathbb{F}_q^n$ and an integer $b$ with $1 \le b \le n$, the $b$-symbol metric defines a $b$-symbol vector $$\mathbf{v}_b=(v_1,v_2,\ldots,v_{b},v_2,v_3,\ldots,v_{b+1},\ldots, v_{n},v_1,\ldots,v_{b-1})\in \mathbb{F}_q^{bn}.$$

The $b$-symbol weight of $\mathbf{v}_b$ is defined as $\mathrm{w}_b(\mathbf{v}_b)=\#\left\{ i\mid (v_{i},\ldots,v_{i+b-1})\ne \mathbf{0}, 1\le i\le n  \right\}$, where the indices are taken modulo $n$. 
Via an appropriate coordinate permutation, any $[n,k]_q$ linear code $C$ with minimum $b$-symbol distance $d_b$ is equivalent to an $[(n,b),k,d_b]_q$ linear block metric code whose generator matrix is the block matrix $(G,GX,\ldots,GX^{b-1})$, where $X$ is the cyclic shift matrix of order $n$.
 
\subsection{Bounds on linear block metric codes}

\begin{lemma}(Singleton Bound, \cite{huffman2010fundamentals})\label{Singleton_bound}
	If $\mathcal{C}$ is an $[(n,m),k,d_m]_q$ linear block metric code, then 
	\begin{equation}
		d_m\le n- \frac{k}{m} +1.
	\end{equation} 
\end{lemma}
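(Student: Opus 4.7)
The plan is to mimic the classical Singleton-bound argument but with blocks of size $m$ playing the role of single coordinates. Working through $\phi_m$, I view a codeword $\mathbf{c}\in\mathcal{C}$ as an $m\times n$ matrix whose columns are its $n$ blocks, and the block weight $\mathrm{w}_m(\mathbf{c})$ is the number of nonzero columns. The Singleton-type inequality then follows from counting after puncturing out $d_m-1$ blocks, which is exactly what costs $m(d_m-1)$ coordinates in the ambient space.

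Concretely, I would first define a puncturing map $\pi:\mathcal{C}\to\mathbb{F}_q^{m(n-d_m+1)}$ by deleting any fixed set of $d_m-1$ columns from $\phi_m(\mathbf{c})$ and re-vectorizing. Since $\mathcal{C}$ is a linear subspace and $\pi$ is $\mathbb{F}_q$-linear, $\pi(\mathcal{C})$ is again a subspace. The key step is to argue that $\pi$ is injective. If $\pi(\mathbf{c}_1)=\pi(\mathbf{c}_2)$ for two distinct codewords, then $\mathbf{c}_1-\mathbf{c}_2$ is a nonzero codeword whose block support lies within the $d_m-1$ deleted block positions, giving $\mathrm{w}_m(\mathbf{c}_1-\mathbf{c}_2)\le d_m-1$ and contradicting the minimum block distance.

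Injectivity yields $\dim \pi(\mathcal{C})=k$, while $\pi(\mathcal{C})\subseteq \mathbb{F}_q^{m(n-d_m+1)}$ forces $k\le m(n-d_m+1)$. Rearranging gives $d_m\le n-k/m+1$, which is the claimed bound.

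There is no real obstacle here, as this is a direct adaptation of the classical proof; the only mild subtlety worth emphasizing is that puncturing must be done block-wise (removing all $m$ coordinates of a given block simultaneously) rather than coordinate-wise, because the block weight is insensitive to how the $m$ rows inside a block are filled. Once this is recognized, the counting step is immediate and the bound follows.
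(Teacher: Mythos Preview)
Your argument is correct and is exactly the standard block-wise puncturing proof of the Singleton bound. The paper itself does not supply a proof of this lemma at all---it simply cites \cite{huffman2010fundamentals}---so there is nothing to compare against; your write-up is precisely the proof one would expect from that reference.
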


The code attaining the bound in Lemma \ref{Singleton_bound} is called \textit{maximum distance separable (MDS)} code.   
If $d_m= n-\left\lceil \frac{k}{m}\right\rceil+1$, then we call $\mathcal{C}$ a \textit{fractional MDS} code \cite{ball2025griesmer}. 

\begin{lemma}(Singleton Bound for $b$-symbol code, \cite{Ding2018MaximumDS})\label{Singleton_bound_b}
	If $\mathcal{C}$ is an $[(n,b),k,d_b]_q$ $b$-symbol code, then 
	\begin{equation}
		d_b\le n- k +b.
	\end{equation} 
\end{lemma}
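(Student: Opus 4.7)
My plan is to adapt the classical puncturing proof of the Singleton bound to the $b$-symbol metric. First I would observe that every nonzero codeword has $b$-symbol weight at least $b$: if $c_j \neq 0$, then the $b$ cyclic windows starting at indices $j-b+1, j-b+2, \ldots, j$ (mod $n$) each contain the position $j$ and are therefore nonzero. Hence $d_b \geq b$, and the claim $d_b \leq n-k+b$ is immediate when $d_b = b$, since $k \leq n$.

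For the main case $d_b > b$, set $s = d_b - b \geq 1$ and define the puncturing map $\pi \colon \mathbb{F}_q^n \to \mathbb{F}_q^{n-s}$ that deletes the first $s$ coordinates. If I can show that $\pi$ restricted to $\mathcal{C}$ is injective, then $q^k = |\mathcal{C}| = |\pi(\mathcal{C})| \leq q^{n-s}$, which rearranges to $d_b \leq n-k+b$.

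To prove injectivity, suppose $\mathbf{c}_1, \mathbf{c}_2 \in \mathcal{C}$ satisfy $\pi(\mathbf{c}_1) = \pi(\mathbf{c}_2)$, so that $\mathbf{c} := \mathbf{c}_1 - \mathbf{c}_2$ has Hamming support contained in $\{1, 2, \ldots, s\}$. The cyclic $b$-window starting at index $i$ is nonzero on $\mathbf{c}$ only if $\{i, i+1, \ldots, i+b-1\}$ (mod $n$) meets $\{1,\ldots,s\}$, which forces $i \in \{2-b, 3-b, \ldots, s\}$ (mod $n$). This is a single cyclic interval of length $s + b - 1 = d_b - 1$, so $\mathrm{w}_b(\mathbf{c}) \leq d_b - 1 < d_b$. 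By the minimum $b$-symbol distance hypothesis, $\mathbf{c} = \mathbf{0}$, as required.

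The only genuinely new ingredient beyond the classical argument is the window-count lemma in the last step. I expect the main subtlety to be the cyclic bookkeeping: one must verify that $\{2-b, \ldots, s\}$ (mod $n$) really has size $d_b - 1$ rather than wrapping onto itself, which is guaranteed by $d_b - 1 \leq n$ (trivially true for any non-degenerate code). Once this combinatorial fact is in hand, the puncturing step runs exactly as in the Hamming-metric case.
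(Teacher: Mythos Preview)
Your argument is correct: the key observation that a codeword supported on $s$ consecutive positions has $b$-symbol weight at most $s+b-1$ is exactly what makes the classical puncturing proof go through, and your cyclic bookkeeping is fine since $d_b\le n$ always. The paper itself does not prove this lemma but simply quotes it from \cite{Ding2018MaximumDS}; your puncturing approach is the standard one for Singleton-type bounds in this setting.
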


The Griesmer bound is essential in determining the optimality of linear codes.  The following is a generalization of Griesmer bound for linear block metric codes. 

\begin{lemma}\label{Griesmer_Bound}(Griesmer Bound,\cite{Guan2023SomeGQ,Luo2024GriesmerBA}) 
	If $\mathcal{C}$ is an $[(n,m),k,d_m]_{q}$ linear block metric code, for $k\ge m\ge1$, the following holds: 
	\begin{equation}\label{AGB}
		nN_m \geq g_{q,m}(k, d_m)=\sum\limits_{i=0}^{k-1}\left\lceil  \frac{q^{m-1}d_m}{q^{i}}\right\rceil, 
	\end{equation} 
where $N_m\triangleq\frac{q^m-1}{q-1}$.
\end{lemma}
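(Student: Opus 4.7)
The plan is to reduce the desired inequality to the classical Griesmer bound by a projective expansion of $\mathcal{C}$. Partition a generator matrix of $\mathcal{C}$ as $G=(H_1,H_2,\ldots,H_n)$ with each $H_j\in\mathbb{F}_q^{k\times m}$; then the $j$-th block of the codeword $\mathbf{u}G$ is $\mathbf{u}H_j\in\mathbb{F}_q^m$, so that $\mathrm{w}_m(\mathbf{u}G)=\#\{j\in[n]:\mathbf{u}H_j\ne\mathbf{0}\}$.

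Choose a set of representatives $\mathbf{v}_1,\ldots,\mathbf{v}_{N_m}\in\mathbb{F}_q^m$ for the one-dimensional subspaces of $\mathbb{F}_q^m$, and form the $k\times nN_m$ matrix $\tilde{G}=(\tilde{H}_1,\ldots,\tilde{H}_n)$ by setting $\tilde{H}_j:=(H_j\mathbf{v}_1,H_j\mathbf{v}_2,\ldots,H_j\mathbf{v}_{N_m})$. Let $\tilde{\mathcal{C}}$ be the classical linear code generated by $\tilde{G}$. For any $\mathbf{u}\in\mathbb{F}_q^k$ and any position $j$, if $\mathbf{u}H_j=\mathbf{0}$ then none of the $N_m$ corresponding coordinates of $\mathbf{u}\tilde{G}$ is nonzero; otherwise $\mathbf{v}\mapsto\mathbf{u}H_j\mathbf{v}$ is a nonzero linear functional on $\mathbb{F}_q^m$ whose kernel is a hyperplane containing exactly $(q^{m-1}-1)/(q-1)$ of the chosen representatives, so the block contributes exactly $N_m-(q^{m-1}-1)/(q-1)=q^{m-1}$ nonzero coordinates to $\mathbf{u}\tilde{G}$. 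Summing over $j$ gives the key identity $\mathrm{w}(\mathbf{u}\tilde{G})=q^{m-1}\mathrm{w}_m(\mathbf{u}G)$.

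Consequently, $\tilde{G}$ has the same kernel as $G$, so $\tilde{\mathcal{C}}$ is a classical $[nN_m,k,q^{m-1}d_m]_q$ linear code. Applying the classical Griesmer bound to $\tilde{\mathcal{C}}$ immediately yields $nN_m\ge\sum_{i=0}^{k-1}\lceil q^{m-1}d_m/q^i\rceil$, which is precisely the stated bound.

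The main step is the hyperplane counting that produces $\mathrm{w}(\mathbf{u}\tilde{G})=q^{m-1}\mathrm{w}_m(\mathbf{u}G)$; once this identity is in hand, the theorem follows from the classical Griesmer bound, and no induction specific to the block metric is required. The hypothesis $k\ge m$ is not needed for the argument itself but is the natural range in which the bound is nonvacuous.
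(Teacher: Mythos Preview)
Your proof is correct and is essentially the approach implicit in the paper: although Lemma~\ref{Griesmer_Bound} is cited without proof, the paper's Lemma~\ref{expand_flats} constructs exactly your expanded code (there called $\mathfrak{X}_m(\mathcal{C})$, which is your $\tilde{\mathcal{C}}$ up to a column permutation) and proves the same weight identity $\mathrm{w}(\cdot)=q^{m-1}\,\mathrm{w}_m(\cdot)$ via a hyperplane/$(m-1)$-flat incidence count, after which the classical Griesmer bound finishes the argument just as you do. One small remark on conventions: in this paper the $j$-th block of $\mathbf{v}\in\mathbb{F}_q^{mn}$ consists of the interleaved coordinates $j,n+j,\ldots,(m-1)n+j$, not the consecutive coordinates $(j-1)m+1,\ldots,jm$, so your $H_j$ should be read as the submatrix of $G$ on those column indices; this is only a reindexing and does not affect the argument.
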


The code satisfying this bound with equality is called \textit{Griesmer code}. In this paper, we mainly consider the case $m\ge 2$, and when $m = 1$, we denote $g_{q,m}(k, d_m)$ by $g_q(k, d_m)$.

If there is no $[(n,m),k,d_m+1]_q$ code, then an $[(n,m),k,d_m]_q$ code is called \textit{optimal}. 
Since an $[n,k_l,d_l]_{q^m}$ linear code over $\mathbb{F}_{q^m}$ is also an $[(n,m),mk_l,d_l]_q$ linear block metric code, an $[(n,m),k,d_m]_q$ linear code \textit{outperforms  $\mathbb{F}_{q^m}$-linear codes} if there is no $[(n,m),m\cdot\lceil \frac{k}{m} \rceil,d_m]_q$ code. 

\subsection{The geometry of linear block metric codes}
A $k$-dimensional vector space over $\mathbb{F} _q$ is viewed as the projective geometry of dimension $k-1$, denoted by $PG(k-1,q)$. 
An $(m-1)$-flat is a projective subspace of
dimension $m-1$ in $PG(k-1, q)$. $0$-flats, $1$-flats, and $(k - 2)$-flats are called points, lines, and hyperplanes, respectively.

Let $\mathcal{F}_m$ be an $(m-1)$-flat and $S_{\mathcal{F}_m}$ be the set of all $(m-1)$-flats of $PG(k-1,q)$.
Then, we have  $\theta_{k,m}=|S_{\mathcal{F}_m}|=\frac{\left(q^{k}-1\right)\left(q^{k-1}-1\right) \ldots\left(q^{k-m+1}-1\right)}{\left(q^{m}-1\right)\left(q^{m-1}-1\right) \ldots(q-1)}$ and $N_m=|\mathcal{F}_m|=\frac{q^m-1}{q-1}$.
The linear block metric Simplex code \cite{bierbrauer2017introduction} with parameters $[(\theta_{k,m},m),k,q^{k-m}\theta_{k-1,m-1}]_{q}$ is described
by $S_{\mathcal{F}_m}$.
All points of $PG(k-1,q)$ constitute the well-known Simplex code with parameters
$[\frac{q^k-1}{q-1},k,q^{k-1}]_q$.  
According to \cite{huffman2021concise}, both Simplex and Hamming codes are equivalent to cyclic codes. 
Moreover, all nonzero vectors in $\mathbb{F}_q^k$ also form a $[q^k-1,k,q^k-q^{k-1}]_q$ cyclic code\footnote{The BCH code with defining set $T_n\setminus C_1$, for details see Chapter 5 of \cite{huffman2010fundamentals}.}. 
An $[(n,m), k, d]_{q}$ linear block metric code $\mathcal{C}$ is equivalent to a collection $\Gamma $ of $n$ $(m-1)$-flats of $PG(k-1,q)$, such that any hyperplane $H$ contains at most $n-d$ of those $(m-1)$-flats.

\subsection{Permutations and permutation groups} 
A permutation is a bijective function \( \sigma: S \to S \), where \( S = [n] \).
All permutations form the \textit{symmetric group} \( Sym_n \), which has order \( |Sym_n| = n! \). 
A $k$-cycle $(a_1,a_2,\ldots,a_k)$ represents a permutation where: $a_1\to a_2$, $a_2\to a_3$, $\ldots$, $a_k\to a_1$ and other elements are fixed. 
Every permutation \( \sigma \in Sym_n \) admits a unique decomposition into disjoint cycles. 
Its \textit{cycle type} is specified by the sequence $(i^{\theta_i})_{i\in\mathbb{N}}$ where $\theta_i$ counts the number of cycles of length $i$ in this decomposition. 
We refer to the length of the longest cycle in the decomposition of \( \sigma \) as its \textit{maximum cyclic length}.
 For example, permutation $\sigma = \begin{pmatrix}
	1&2  & 3 & 4 &5 \\
	3& 2 & 1 &  5&4
\end{pmatrix}$ can be denoted by $\sigma=(1,3)(2)(4,5)$. 
Its cycle type and maximum cyclic length are $(1^1,2^2)$ and $2$, respectively. 
Two permutations \( \sigma, \tau \in Sym_n \) are \textit{equivalent} if they share the same cycle type, i.e., there exists \( \rho \in Sym_n \) such that $\tau = \rho \sigma \rho^{-1}$. 
The order of permutation $\sigma$ is the smallest integer $r$ satisfying  that $\sigma^r$ is the identity element of $Sym_n$.  
  For $\mathbf{v}=(v_1,v_2,\ldots,v_{n})\in \mathbb{F}_q^n$, we define  $\sigma(\mathbf{v})=(v_{\sigma(1)},v_{\sigma(2)},\ldots,v_{\sigma(n)})$.

\subsection{Cyclic and quasi-cyclic codes}

Let $\pi$ be the cyclic shift permutation in $Sym_n$, where \(\pi(i) = (i+1) \bmod n\) for all \(i \in [n] \).  
A linear code \(C\) of length \(n\) over \(\mathbb{F}_q\) is called a \textit{cyclic code} if it is invariant under $\pi$, i.e., \(\pi(\mathbf{c}) \in C\) for every codeword \(\mathbf{c} \in C\).  
This property allows cyclic codes to be algebraically represented as ideals in the polynomial ring $\mathbb{R}_{q,n} =\mathbb{F} _{q}[x] /\left\langle x^{n}-1\right\rangle$. 
Each codeword corresponds to a polynomial $c(x) = c_0 + c_1 x + \dots + c_{n-1} x^{n-1}$, and cyclic shifts correspond to multiplication by $x$ modulo $x^n - 1$. 
A cyclic code is uniquely defined by its generator polynomial $g(x)$, which satisfies $g(x) \mid (x^n - 1)$ and $\deg(g(x))=n - k$, where $k$ is the dimension of the code. The parity-check polynomial is given by $h(x) = (x^n - 1)/g(x)$. 
Suppose \(h(x)=h_1(x)\cdots h_t(x)\), where \(h_i(x)\) is irreducible over $\mathbb{F}_q$. We define 
\begin{equation}
	\kappa(g(x))=\min\{\deg(h_i(x)) \mid \deg(h_i(x))\ge2, i \in [t] \}.
\end{equation}
It follows that $C$ has no cyclic subcodes whose dimension lies strictly between $2$ and $\kappa(g(x))$.

Let $\pi_\ell$ be the permutation consisting of $\ell$ disjoint $n$-cycles:   $\pi_\ell=(1,\ldots,n)(n+1,\ldots,2n)\cdots((\ell-1)n+1,\ldots,\ell n )$. 
A linear code $C$ of length $\ell n$ over $\mathbb{F}_q$ is a quasi-cyclic code of index $\ell$ if $C$ is invariant under permutation $\pi_\ell$, and $\ell$ is the smallest integer with this property.
The generator matrix of quasi-cyclic code $C$ consists of connected circulant matrices, and each circulant matrix corresponds to a generator polynomial. 
According to \cite{LALLY2001157}, the generator polynomial
matrix of quasi-cyclic code $C$ can be written as an upper triangular polynomial matrix  
$$\mathbf{g}(x) =\begin{pmatrix}
	g_{1,1}(x)&\cdots   &g_{1,\ell}(x) \\
	& \ddots  & \vdots \\
	&   &g_{h,\ell}(x)
\end{pmatrix},$$ where $g_{i,j}\in \mathbb{R}_{q,n}$ and the empty entries denote zero. For precision, we refer to quasi-cyclic code $C$ generated by $\mathbf{g}(x)$ as $(h,\ell)$-quasi-cyclic code.

\section{Generalized repetition codes and their application to two repeated communication models}
\label{Sect III} 
In this section, we introduce the basic definitions of Type-I and Type-II GRCs. 
We then propose combinatorial block metrics and multi-round decoding methods for GRCs. 
Additionally, we show their applications in the two types of repeated communication models. Finally, we conduct simulations to demonstrate the advantages of GRCs over classical repetition codes, $b$-symbol codes, and IR-linear codes.

	\subsection{Generalized repetition codes}
Let $Ps(n,\mathbb{F}_q)$ denote the set of all permutation matrices of size $n\times n$ over $\mathbb{F}_q$. 
Let $GL(n,\mathbb{F}_q)$ denote the general linear group of size $n$ over $\mathbb{F}_q$. 
We now give a generalized definition of linear repetition codes.

	\begin{definition}\label{D_repetition code}
	Let $C$ be an $[n,k]_q$ linear code with generator matrix $G$. 
	If $\mathcal{C}$ is an $[(n,m),k]_q$ linear code with generator matrix 
	\begin{equation}\label{eq_Type-GRC}
		\mathcal{G}_m=(G,GA_1, \ldots,GA_{m-1}),
	\end{equation}
	or
	\begin{equation}
		\mathcal{G}_m^{\prime}=(G,B_1G, \ldots,B_{m-1}G),
	\end{equation}
	where $A_i \in Ps(n,\mathbb{F}_q)$ and $B_i \in GL_k(\mathbb{F}_q)$,
	then $\mathcal{C}$ is called a \textit{generalized repetition code (GRC)} of $C$. If all $A_i$ and $B_i$ are identity matrices, then $\mathcal{C}$ reduces to a classical linear repetition code. Specifically, $\mathcal{C}$ is called a \textit{Type-I GRC}\footnote{Note that all Type-I GRCs are equivalent (regardless of the permutations $A_1,\ldots,A_m$) over discrete memoryless channels. For details on code equivalence, see \cite{huffman2010fundamentals}.} if it is generated by $\mathcal{G}_m$, and a \textit{Type-II GRC} if generated by $\mathcal{G}_m^{\prime}$.
	 
 
 	For $i \ge 2$, if $A_i = A_1^{i-1}$ or $B_i = B_1^{i-1}$, then $\mathcal{C}$ is called \textit{regular}. 
	Moreover, if the matrices \( I_n, A_1, \ldots, A_{m-1} \) or \( I_k, B_1, \ldots, B_{m-1} \) are linearly independent over $\mathbb{F}_q$, then \( \mathcal{C} \) is called \textit{non-degenerate}; otherwise, it is called \textit{degenerate}.
	All GRCs studied in this paper are non-degenerate.
\end{definition}


By applying suitable coordinate transformations, 
 $b$-symbol codes can be converted to Type-I regular GRCs with $A_i=X^i$, where $X$ is a cyclic shift matrix of order $n$. 
Moreover, Type-I GRCs employ distinct metrics, which can increase the coding gain with the number of retransmissions.
Since each $A_i$ is a permutation matrix, Type-I GRCs have a minimum Hamming distance of $m \cdot \mathrm{d}(C)$.

We now define the combinatorial block metrics for Type-I and Type-II GRCs.

\begin{definition}\label{Def_sub-block distance}
	Let $1\le r\le m$ be an integer and let $\binom{[m]}{r}$ denote the set consisting of all $r$-element subsets of $[m]$.
	For an $[(n,m),k]_q$ GRC $\mathcal{C}$ and $T=\{i_1,i_2,\ldots,i_r\} \in \binom{[m]}{r}$, we define the \textit{ $r$-th sub-block code} $\mathcal{C}^T$ of $\mathcal{C}$ as
	\begin{equation}
		\mathcal{C}^T=\{\mathbf{c}^T=(\mathbf{c}_{i_1},\mathbf{c}_{i_2},\ldots,\mathbf{c}_{i_r}) \mid \mathbf{c}=(\mathbf{c}_1,\mathbf{c}_2,\ldots,\mathbf{c}_m)\in \mathcal{C}, \text{ where } \mathbf{c}_i\in \mathbb{F}_q^n\}.
	\end{equation}

	
	We define the \textit{$r$-th minimum sub-block distance} of $\mathcal{C}$   as
	\begin{equation}
		\mathbf{d}_r(\mathcal{C})=\min \left\{ \mathrm{d}_r(\mathcal{C}^T) \,\bigg|\, T\in \binom{[m]}{r}  \right\}. 
	\end{equation}

	 The \textit{$r$-th minimum sub-Hamming distance} of $\mathcal{C}$ is defined as 
	\begin{equation}
		\underline{\mathbf{d}}_r(\mathcal{C})=\min \left\{ \mathrm{d}(\mathcal{C}^T) \,\bigg|\, T\in \binom{[m]}{r}  \right\}. 
	\end{equation}

	We call the sequences $(\mathbf{d}_1(\mathcal{C}),\mathbf{d}_2(\mathcal{C}),\ldots,\mathbf{d}_m(\mathcal{C}))$ and $(\underline{\mathbf{d}}_1(\mathcal{C}),\underline{\mathbf{d}}_2(\mathcal{C}),\ldots,\underline{\mathbf{d}}_m(\mathcal{C}))$ \textit{sub-block distance hierarchy (SBDH)} and \textit{sub-Hamming distance hierarchy (SHDH)} of $\mathcal{C}$, respectively. 
	If for all $r$, there is no $[(n,r),k,\mathbf{d}_r(\mathcal{C})+1]_q$ Type-I (resp. Type-II) GRC, then we call $\mathcal{C}$ a \textit{SBDH-optimal} Type-I (resp. Type-II) GRC. 
	Since permutations do not increase the Hamming distance of Type-I GRCs, we only consider the SHDH of Type-II GRCs. 
	If for all $r \in [m]$, there is no $[(n,m),k]_q$ Type-II GRC with $r$-th minimum sub-Hamming distance $\underline{\mathbf{d}}_r(\mathcal{C})+1$, then we call $\mathcal{C}$ a \textit{SHDH-optimal} Type-II GRC. 
\end{definition}

Definition \ref{Def_sub-block distance} formalizes the combinatorial metrics for GRCs.  
The decoding of a GRC $\mathcal{C}$ can be decomposed into decoding its $r$-th sub-block codes $\mathcal{C}^T$, for $r \in [m]$. 
Since $\mathcal{C}$ has $2^m - 1$ distinct subsets, the decoding process involves up to $2^m - 1$ sub-block codes. 
Unlike classical repetition codes under the Hamming metric, it is advantageous to view a codeword $\mathbf{c}$ of a GRC $\mathcal{C}$ as an array, i.e., \(\phi_m(\mathbf{c})\). 
The error correction mechanism addresses up to $\lfloor \frac{\mathbf{d}_r(\mathcal{C})-1}{2} \rfloor$ column errors in any \(r\) rows of \(\phi_m(\mathbf{c})\), resembling array codes. 
If GRC $\mathcal{C}$ is of Type-II, then it is also able to correct up to $\lfloor \frac{\underline{\mathbf{d}}_r(\mathcal{C})-1}{2} \rfloor$ bit errors in any \(r\) rows of \(\phi_m(\mathbf{c})\). This combinatorial metric exponentially amplifies the number of correctable error patterns of GRCs.  
Table \ref{Comparison of Error Correction Capabilities} compares the number of error patterns that can be corrected for the Type-I and Type-II GRCs, classical repetition codes, $b$-symbol codes, and IR-linear codes. 
Therefore, in CRC-aided communication environments, GRCs exhibit stronger error-correction capability.

\begin{table*}[ht]
	\caption{Comparison of Correctable Error Patterns of  Different Types of Codes}\label{Comparison of Error Correction Capabilities}
	\centering
	\resizebox{\textwidth}{!}{
			\begin{tabular}{ccccc}
				\toprule
				Codes Type        & \makecell[c]{Correctable Error Patterns \\ Under Hamming Metric} & \makecell[c]{Correctable Error Patterns \\Under Block Metric ($2\le r\le m$)} & Chase Combining & \makecell[c]{Total of Correctable\\ Error Patterns} \\ \midrule
				Type-I GRC     &                           $m$                            &                            $2^m-m-1$                            &  $\checkmark$   &                        $2^m$                        \\
				$b$-symbol codes &                         $m$                          &     $1$                                                  &    $\checkmark$     &      $m+2$                                 \\
				Classic repetition Codes &                            $m$                             &                            $\times$                             &  $\checkmark$   &                        $m+1$                        \\
				Type-II GRC     &                         $2^m-1$                          &                             $2^m-m-1$                             &    $\times$     &                           $2^{m+1}-m-2$                    \\
				IR-linear codes &                         $2m-1$                          &     $\times$                                                  &    $\times$     &      $2m-1$                                 \\
				\bottomrule
			\end{tabular}%
		}
	\end{table*}

	The retransmission framework using the Type-I GRC is shown in Figure  \ref{packet_tran II}, where the ``Permutator" is determined by $A_i$ and ``Combiner" is implements Chase Combining.	
	Similarly, the retransmission scheme using the Type-II GRC is shown in Figure  \ref{packet_tran III}, where the ``Linear Transformator" is defined by $B_i$.	  
	If Bob decodes the received codewords solely based on the Hamming metric, these schemes reduce to the repeated communication models shown in Figure \ref{fig_retransmission channels}.

	\begin{figure}[h]
		\centering
		\includegraphics[width=115mm]{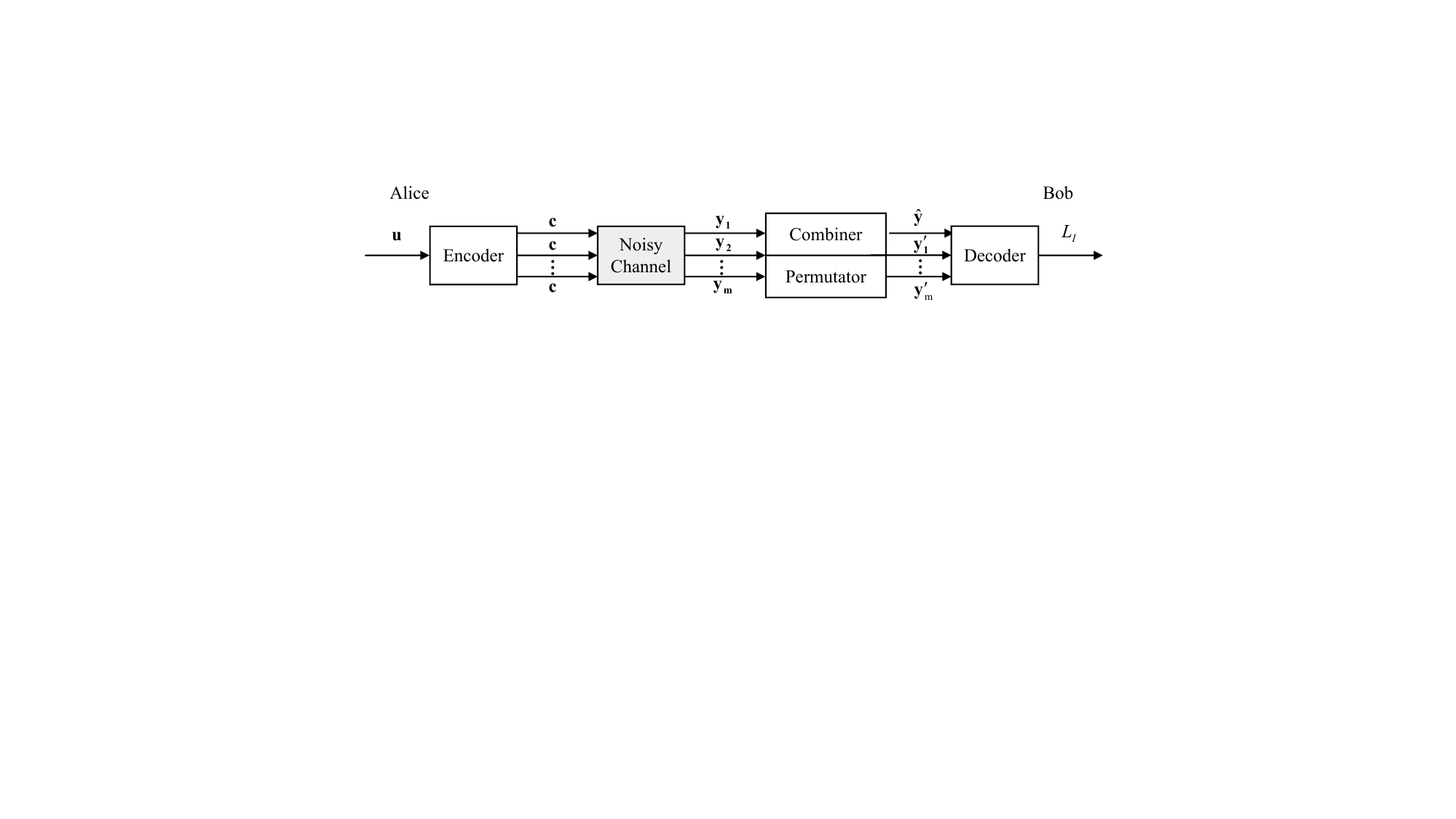}
		\caption{Modified Type-I repeated communication model using Type-I GRCs: 
			Once Bob receives $m\ge 2$ codewords: $\mathbf{y}_1,\ldots,\mathbf{y}_{m}$, Bob takes Chase Combining and permutations to them get $\hat{\mathbf{y}}$ and $\mathbf{y}_i^{\prime}=\mathbf{y}_iA_{i-1}$, respectively, where $i\ge 2$ and $A_i\in Ps_{n}(\mathbb{F}_q)$.
		}
		\label{packet_tran II}
	\end{figure}

	\begin{figure}[h]
		\centering
		\includegraphics[width=120mm]{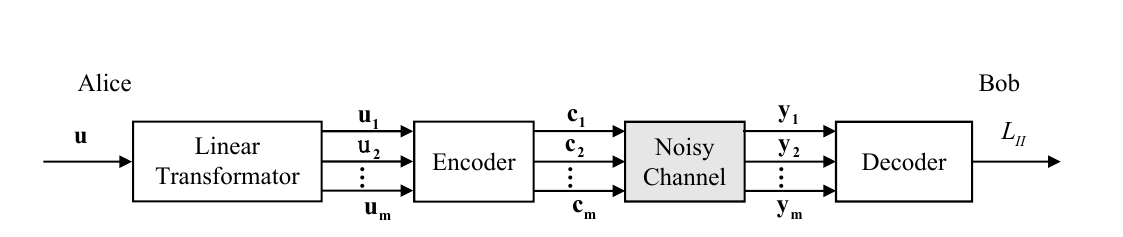}
		\caption{Modified Type-II repeated communication model using Type-II GRCs:  Alice performs a non-singular linear transformation $B_i$ on message $\mathbf{u}$ before retransmitting it each time. i.e.,  $\mathbf{c}_i=\mathbf{u}B_{i-1}G$ for $i\ge 2$, where  $B_i \in GL_k(\mathbb{F}_q)$. 
		}
		\label{packet_tran III}
	\end{figure}

\subsection{Basic idea for the decoding of generalized repetition codes}

In this subsection, we introduce a basic decoding method for GRCs. 
Without loss of generality, assume $\mathbf{z}=(\mathbf{z}_0,\mathbf{z}_1,\ldots,\mathbf{z}_{m-1})$ is the sequence received by Bob after $m$ transmissions in the above two schemes. 
According to Definition \ref{Def_sub-block distance}, the decoding of GRC  $\mathcal{C}$ under combinatorial metrics is equivalent to decoding $[(n,|T|),k]_q$ sub-block codes $\mathcal{C}^T$ of $\mathcal{C}$ for all non-empty subsets $T\in[m]$. 
If $\mathbf{z}$ is a codeword of Type-I GRC, then we only need to decode $\mathbf{z}^{T}$ once under block metric;
if $\mathbf{z}$ is a codeword of Type-II GRC, then we need to decode $\mathbf{z}^{T}$ under both Hamming and block metrics.

In practical communication, codewords typically include CRC bits, enabling users to verify the correctness of the decoded result after decoding. Therefore, the combinatorial metric structure of GRCs enables multi-round decoding, where decoding stops early once a decoded result passes CRC. 
As decoding complexity increases with $|T|$, the process proceeds iteratively by increasing $|T|$ until reaching \(m \). 
For convenience of exposition, we define the maximum value of \( |T| \) as the \textit{decoding depth}. In the following example, we simulate the performance of GRCs constructed from Gray codes at different decoding depths.

\begin{example}
	Set $q=2$ and $n=23$. 
	The binary Golay code $\mathcal{G}_{23}$ is a $[23,12,7]_2$ cyclic code with generator polynomial  $g(x)=x^{11} + x^9 + x^7 + x^6 + x^5 +  x + 1$. 
	Let $f_1(x)=x^9 + x^6 + x^5 + x^4 + x^3 + x + 1$, $f_2(x)=x^{12} + x^{11} + x^{10} + x^9 + x^8 + x^5 + x$, and $f_3(x)=x^{11} + x^8 + x^7 + x^6 + x^5 + x^3 + x$.  
	Let $\mathbf{g}_1(x)=(g(x),xg(x),x^2g(x),x^3g(x))$ and $\mathbf{g}_2(x)=(g(x),f_1(x)g(x),f_2(x)g(x),f_3(x)g(x))$. 
	
	
	Let $\mathcal{C}_I$ and $\mathcal{C}_{II}$ be two quasi-cyclic codes generated by $\mathbf{g}_1(x)$ and $\mathbf{g}_2(x)$, respectively. 
	Since $\gcd(f_i(x),x^n-1)=1$, $f_i(x)g(x)$ also generates $\mathcal{G}_{23}$ and $\mathbf{d}_1(\mathcal{C}_I)=\mathbf{d}_1(\mathcal{C}_{II})=7$. 
	Moreover, it is easy to use Magma to compute that $\mathcal{C}_I$ has SBDH $(7,11,13,15)$, and the SBDH and SHDH of $\mathcal{C}_{II}$ are $(7,12,16,19)$ and $(7,14,24,36)$, respectively.

	Suppose Alice and Bob use $\mathcal{C}_I$ and $\mathcal{C}_{II}$ to transmit messages. 
	Thanks to the special structure of GRCs, Bob can perform multiple rounds of decoding with the combined decoder after receiving the retransmitted codeword. 
	Suppose $\mathbf{z}=(\mathbf{z}_1,\mathbf{z}_2,\mathbf{z}_3,\mathbf{z}_4)$ is a codeword of $\mathcal{C}_{I}$ or $\mathcal{C}_{II}$. 
	According to the above definition, $\phi_4(\mathbf{z})$ is a $4\times 23$ matrix.

%
%

	Let $\mathbf{z}^{\prime}=(\mathbf{z}_1,\mathbf{z}_2X^{-1},\mathbf{z}_3X^{-2},\mathbf{z}_4X^{-3})$, where $X$ is a cyclic shift matrix of order $n$. 
	In practice, there are many ways to implement Chase Combining, both ``soft" and ``hard", but in this paper we only use the most basic method, which is majority voting.  
	For a specific column $\phi_4(\mathbf{z}^{\prime})$, if the number of erroneous bits is less than half, the error is termed a \textit{partial column-error}; otherwise, it is classified as a \textit{majority column-error}. 
	According to the SBDH of \( \mathcal{C}_{I} \), if $\mathbf{z}$ is a codeword of $\mathcal{C}_{I}$, then the following types of error patterns in $\phi_4(\mathbf{z})$ and $\phi_4(\mathbf{z}^{\prime})$ can be corrected: 
	\begin{enumerate}
		\item Errors of no more than 3 bits in any single row of $\phi_4(\mathbf{z})$;
		\item Errors in no more than 5 columns across any two rows of $\phi_4(\mathbf{z})$;
		\item Errors in no more than 6 columns across any three rows of $\phi_4(\mathbf{z})$;
		\item Errors in no more than 7 columns across all four rows of $\phi_4(\mathbf{z})$;
		\item All partial column-error and no more than $3$ majority column-errors of $\phi_4(\mathbf{z}^{\prime})$.
	\end{enumerate}

	According to SBDH and SHDH  of \( \mathcal{C}_{II} \), if $\mathbf{z}$ is a codeword of $\mathcal{C}_{II}$, then the following types of error patterns in $\phi_4(\mathbf{z})$ can be corrected: 
	\begin{enumerate}
		\item Errors of no more than 3 bits in any single row of $\phi_4(\mathbf{z})$; 
		\item Errors in no more than 6 bits or 5 columns across any two rows of $\phi_4(\mathbf{z})$;
		\item Errors in no more than 11 bits or 7 columns across any three rows of $\phi_4(\mathbf{z})$;
		\item Errors in no more than 17 bits or 9 columns across all four rows of $\phi_4(\mathbf{z})$;  
	\end{enumerate}
	
	In contrast, the classical repetition Golay code $C_1$ has parameters $[92,12,28]_2$ and $4$-symbol Golay code $C_2$ has a minimum $4$-symbol distance $15$. 
	According to Grassl's codetable \cite{Grassltable}, the optimal $[92,12]_2$ linear code $C_3$ has minimum distance $40$.

	Here, $C_3$ is treated as an IR-linear code, which involves dividing it into four equal-length segments and transmitting them sequentially over four communication rounds. 

	We conduct simulations on these codes at an SNR of $-5$ dB, as shown in Figure \ref{fig_second_case}.
	It can be seen from (a) of Figure \ref{fig_second_case} that when $m \geq 2$, the FER of $b$-symbol code $C_2$ is lower than that of classical repetition code $C_1$. 
	When $m \geq 3$, the performance of Type-I GRC $\mathcal{C}_I$ is surpasses than that of the $b$-symbol code $C_2$. 
	As the decoding depth increases, the FER of $\mathcal{C}_I$ decreases, but the gain from further increasing the depth diminishes.

	Similarly, the FER of Type-II GRC decreases with increasing decoding depth, and the gain from decoding depth diminishes as \( m \) increases. Specifically, when \( m = 2 \), the FER of Type-II GRC with a decoding depth of 2 is lower than that of \( C_3 \). 
	For \(m = 3 \) and \(m = 4 \), $\mathcal{C}_{II}$ outperforms \(C_3 \) at decoding depths of $3$ and $4$, respectively.
	\hfill $\square$

	\begin{figure*}[ht]
		\centering
		\subfloat[]{\includegraphics[scale=0.22]{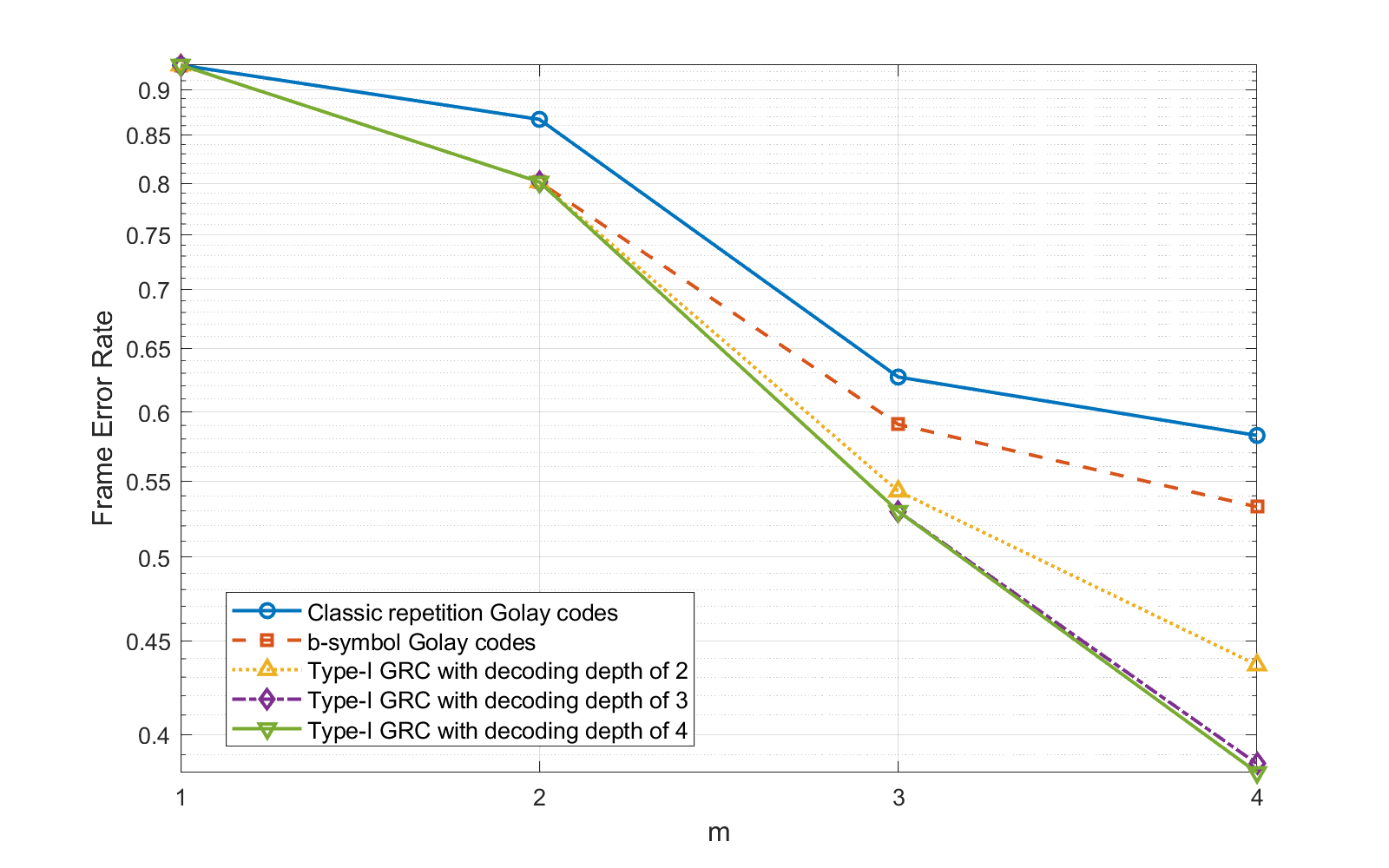}%
		}
		\subfloat[]{\includegraphics[scale=0.22]{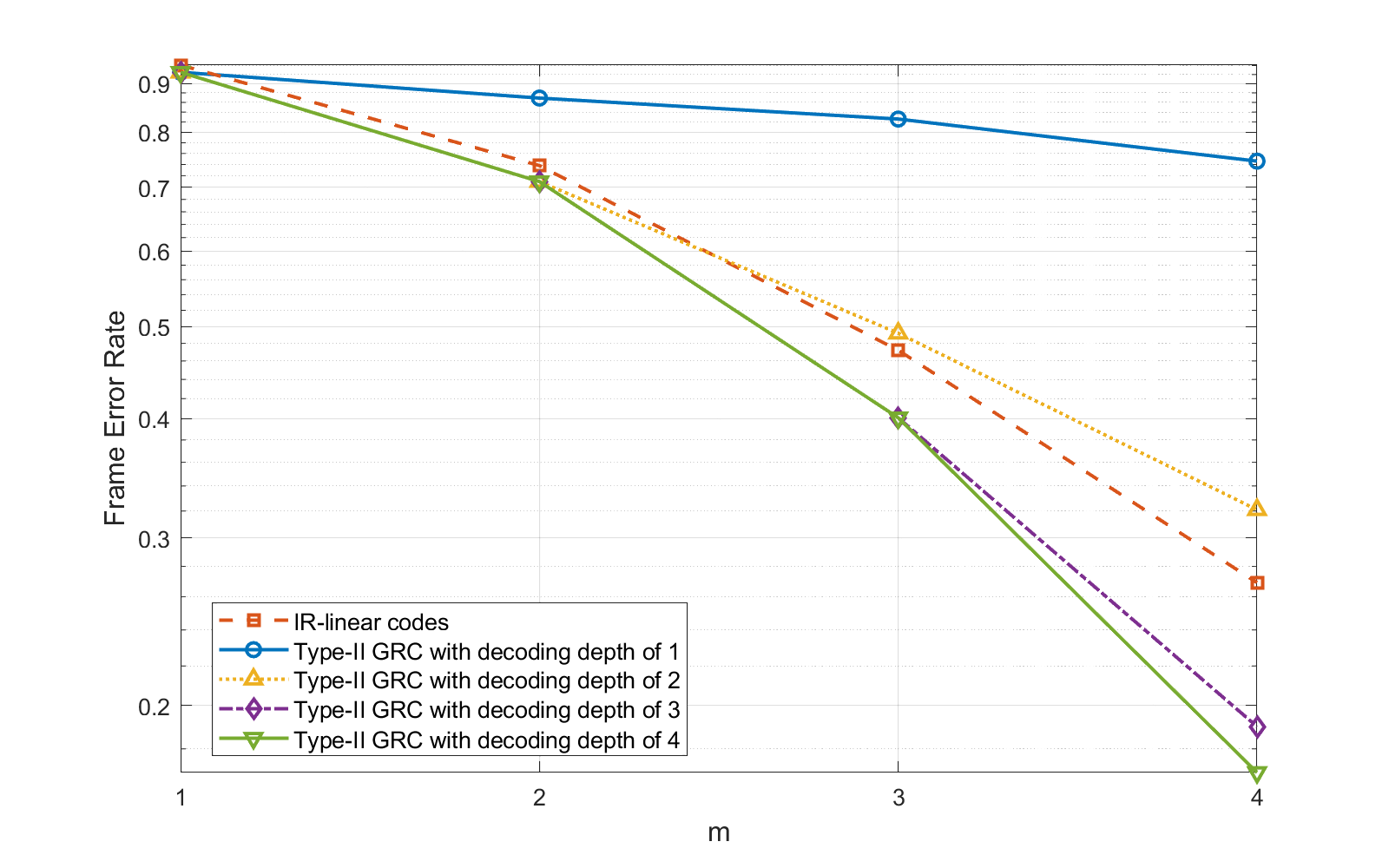}%
		}
		\caption{Comparison of FER between GRCs and Classical Repetition Codes under ML Decoding Algorithm under SNR=$-5$ dB. (a) Type-I GRC $\mathcal{C}_I$ with different decoding depths VS. classical repetition code $C_1$ and $b$-symbol code $C_2$.   
			(b) Type-II GRC $\mathcal{C}_{II}$ VS. IR-linear code \( C_3 \).}
		\label{fig_second_case}
	\end{figure*}
\end{example}

As shown by the theoretical analysis and simulation results above, well-designed GRCs exhibit superior FER performance compared to classical repetition codes, $b$-symbol codes, and IR-linear codes. Therefore, designing GRCs with suitable parameters is crucial. The subsequent sections present the theoretical bounds and construction methods for GRCs.

\section{Bounds and constructions of Type-I generalized repetition codes} \label{Sect IV}
 
In this section, we refine the Griesmer bound in Lemma \ref{Griesmer_Bound} for Type-I GRCs using finite geometry techniques. 
We then propose methods to construct Type-I regular GRCs with bounded SBDH by leveraging cyclic codes, extended cyclic codes, and quasi-cyclic codes.

\subsection{Upper bounds on Type-I generalized repetition codes}

\begin{definition}
	Suppose $\mathcal{C}$ is an $[(n,m),k]_q$ linear block metric code with generator matrix $\mathcal{G}_m=(G_0,G_1,\ldots,G_{m-1})$, where $G_i\in \mathbb{F}_q^{k\times n}$. 
	For $\mathbf{p}=(\alpha_0,\alpha_1,\ldots,\alpha_{m-1})^T\in PG(m-1,q)$, denote $ \mathbf{p}\circ \mathcal{G}_m=\sum_{i=0}^{m-1} \alpha_i G_i$. 
	Let $\mathfrak{X}_m(\mathcal{G}_m)$ denote the $k\times nN_m$ matrix determined by the horizontal join of $N_m$ different matrices $\mathbf{p}_0\circ \mathcal{G}_m,\mathbf{p}_1\circ \mathcal{G}_m,\ldots,\mathbf{p}_{N_m-1}\circ \mathcal{G}_m$,  where $\mathbf{p}_0,\mathbf{p}_1,\ldots,\mathbf{p}_{N_m-1}$ are all points of $PG(m-1,q)$.
	Let $\mathfrak{X}_m(\mathcal{C})$ be the $[nN_m,k]_q$ linear code generated by $\mathfrak{X}_m(\mathcal{G}_m)$.
\end{definition}

\begin{lemma}\label{expand_flats}
	If $\mathcal{C}$ is an $[(n,m),k,d_m]_q$ linear block metric code with block weight distribution $1+\mathsf{A}_{d_m}z^{d_m}+\mathsf{A}_{d_m+1}z^{d_m+1}+\cdots+\mathsf{A}_nz^n$, then $\mathfrak{X}_m(\mathcal{C})$ is an  $[nN_{m},k, q^{m-1}d_m]_q$ linear code with weight distribution $1+\mathsf{A}_{d_m}z^{q^{m-1}d_m}+\mathsf{A}_{d_m+1}z^{q^{m-1}(d_m+1)}+\cdots+\mathsf{A}_nz^{q^{m-1}n}$.
\end{lemma}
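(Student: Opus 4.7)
The plan is to relate the Hamming weight of an arbitrary codeword of $\mathfrak{X}_m(\mathcal{C})$ to the block weight of the corresponding codeword of $\mathcal{C}$ through a direct counting argument over $PG(m-1,q)$. I would parametrize codewords by messages $\mathbf{u}\in\mathbb{F}_q^k$, write the block codeword as $\mathbf{u}\mathcal{G}_m=(\mathbf{u}G_0,\ldots,\mathbf{u}G_{m-1})$, and observe that for each coordinate $\ell\in[n]$ the $\ell$-th column of $\phi_m(\mathbf{u}\mathcal{G}_m)$ is some vector $\mathbf{v}_\ell\in\mathbb{F}_q^m$. By the definition of $\mathrm{w}_m$, the block weight $\mathrm{w}_m(\mathbf{u}\mathcal{G}_m)$ equals the number of indices $\ell$ with $\mathbf{v}_\ell\neq\mathbf{0}$.

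Next I would expand the codeword of $\mathfrak{X}_m(\mathcal{C})$ explicitly. The $\ell$-th coordinate inside the $i$-th block of $\mathbf{u}\,\mathfrak{X}_m(\mathcal{G}_m)$ equals $\mathbf{p}_i^T\mathbf{v}_\ell$, where $\mathbf{p}_i$ is a fixed nonzero representative of the $i$-th point of $PG(m-1,q)$. Summing indicators over pairs $(i,\ell)$ gives
\begin{equation*}
\mathrm{w}\bigl(\mathbf{u}\,\mathfrak{X}_m(\mathcal{G}_m)\bigr)=\sum_{\ell=1}^{n}\#\{i:\mathbf{p}_i^T\mathbf{v}_\ell\neq 0\}.
\end{equation*}
The geometric step is to note that for a fixed nonzero $\mathbf{v}_\ell$ the equation $\mathbf{p}^T\mathbf{v}_\ell=0$ cuts out a hyperplane of $PG(m-1,q)$, containing exactly $N_{m-1}=(q^{m-1}-1)/(q-1)$ points, so its complement contains $N_m-N_{m-1}=q^{m-1}$ points; the case $\mathbf{v}_\ell=\mathbf{0}$ contributes $0$.

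Combining these observations yields the pointwise identity $\mathrm{w}(\mathbf{u}\,\mathfrak{X}_m(\mathcal{G}_m))=q^{m-1}\cdot \mathrm{w}_m(\mathbf{u}\mathcal{G}_m)$, from which the full statement follows at once. The minimum Hamming distance of $\mathfrak{X}_m(\mathcal{C})$ is $q^{m-1}d_m$; the weight distribution is obtained term-by-term by matching each $\mathsf{A}_w$ to the number of codewords of Hamming weight $q^{m-1}w$; and injectivity of the encoder $\mathbf{u}\mapsto\mathbf{u}\,\mathfrak{X}_m(\mathcal{G}_m)$, hence dimension $k$, follows because $\mathbf{u}\,\mathfrak{X}_m(\mathcal{G}_m)=\mathbf{0}$ forces $\mathrm{w}_m(\mathbf{u}\mathcal{G}_m)=0$ and so $\mathbf{u}=\mathbf{0}$ by injectivity of the original encoder for $\mathcal{C}$.

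I expect the main obstacle to be careful bookkeeping rather than a genuine difficulty. One must verify that the choice of representatives $\mathbf{p}_i$ of the projective points—which scales each block $\mathbf{u}(\mathbf{p}_i\circ \mathcal{G}_m)$ by a nonzero constant—does not alter the per-block Hamming weight, so that the identity above is independent of these choices. Once this is made explicit, the counting with hyperplanes is the only nontrivial step and reduces to the standard identity $|PG(m-1,q)\setminus H|=q^{m-1}$ for any hyperplane $H$.
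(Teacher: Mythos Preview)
Your proof is correct. You proceed by a direct computation in $PG(m-1,q)$: fixing a message $\mathbf{u}$, you regard the $\ell$-th block-coordinate of $\mathbf{u}\mathcal{G}_m$ as a vector $\mathbf{v}_\ell\in\mathbb{F}_q^m$ and, for each nonzero $\mathbf{v}_\ell$, count the $q^{m-1}$ projective points $\mathbf{p}_i$ with $\mathbf{p}_i^T\mathbf{v}_\ell\neq 0$. The paper takes a dual route in $PG(k-1,q)$, using the correspondence from its geometry subsection between an $[(n,m),k]_q$ block metric code and a multiset of $n$ $(m-1)$-flats: a codeword of block weight $w$ corresponds to a hyperplane $H$ of $PG(k-1,q)$ containing exactly $n-w$ of these flats, and after expanding each flat to its $N_m$ constituent points, each flat not contained in $H$ meets $H$ in an $(m-2)$-flat and so contributes $N_m-N_{m-1}=q^{m-1}$ points outside $H$. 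The underlying hyperplane-complement identity is the same in both arguments, but the paper invokes the ambient $(k-1)$-dimensional projective geometry and the code-to-flats dictionary, whereas your argument stays at the level of coordinates and needs no geometric preliminaries. Your version is more self-contained; the paper's framing pays off because the same flats-in-$PG(k-1,q)$ picture is reused immediately afterward to establish the refined Griesmer-type bound for Type-I GRCs.
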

\begin{proof}
	Let $H$ be a hyperplane of $PG(k-1,q)$ and let $\mathcal{F}_i$ be an  $(m-1)$-flat of $\mathcal{C}$. 
	If $\mathcal{F}_i$ lies in $H$, then $\dim(H\cap \mathcal{F}_i)=m$. Otherwise, $\dim(H\cap \mathcal{F}_i)=m-1$.
	If $H$ contains $n-w$ of $(m-1)$-flats of $\mathcal{C}$, where $w\ge d$, then $H$ contains $(n-w)N_{m}+wN_m=nN_m-wq^{m-1}$ points of $\mathfrak{X}_m(\mathcal{C})$.
	Therefore, $\mathfrak{X}_m(\mathcal{C})$ has parameters $[nN_{m},k, q^{m-1}d_m]_q$, and a weight distribution is similar to $\mathcal{C}$, but the weights of $\mathfrak{X}_m(\mathcal{C})$ are magnified by a factor of $q^{m-1}$. 
\end{proof}

Since Type-I GRCs possess a particular algebraic and geometric structure, the Griesmer bound in the Lemma \ref{Griesmer_Bound} for Type-I GRCs can be improved. 
\begin{theorem}\label{non-existence of $b$-symbol codes}
	Let $N_m=\frac{q^m-1}{q-1}$. 
	Assume that $s\equiv \left\lceil \frac{d_m}{q^{k-m}} \right\rceil \pmod{N_{m}}$ and  $n^*\equiv n\pmod{N_{k}}$.
	If $\mathcal{C}$ is an $[(n,m),k,d_m]_q$ non-degenerate Type-I GRC, then the following holds:
	
	(1) If $m> s$, then
	\begin{equation}
		nN_m\ge g_{q,m}(k,d_m)+m-s.
	\end{equation}
	
	(2) If $s=m$ and $mn^*> N_k$, then 
	\begin{equation}\label{E_s=b}
		nN_m\ge g_{q,m}(k,d_m)+1.
	\end{equation}
\end{theorem}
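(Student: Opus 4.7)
The plan is to recast the problem in ordinary Hamming-metric language via Lemma~\ref{expand_flats}. Applying the operator $\mathfrak{X}_m$ converts the Type-I GRC $\mathcal{C}$ into a classical $[nN_m, k, q^{m-1}d_m]_q$ linear code $\mathfrak{X}_m(\mathcal{C})$, and the usual Griesmer bound immediately yields $nN_m \geq g_{q,m}(k, d_m)$. The task is to sharpen this by $m-s$ in case (1) or by $1$ in case (2) using the additional structure: the projective multiset underlying $\mathfrak{X}_m(\mathcal{C})$ is a (multiset) union of $n$ genuine $(m-1)$-flats of $PG(k-1,q)$, a feature ensured by the non-degeneracy hypothesis that $I_n, A_1, \ldots, A_{m-1}$ are linearly independent.

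Next I would iterate the classical residual-code argument geometrically. At each step I choose a hyperplane $H$ realizing the current minimum block weight and pass to the residual obtained by restricting to $H$. Each $(m-1)$-flat either lies fully in $H$ and survives as an $(m-1)$-flat, or meets $H$ in an $(m-2)$-flat and shrinks accordingly. The residual code is therefore still a union of flats (of mixed dimensions), its minimum distance descends by the usual factor of $q$, and after $k-1$ steps the code is one-dimensional with length precisely $\lceil d_m/q^{k-m}\rceil$ in the Griesmer-tight case. Crucially, this final length equals the multiplicity of a single projective point, which is the number of original flats (and their residual descendants) meeting that point.

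For case (1), the tight Griesmer chain requires the final one-dimensional residual to have length exactly $\lceil d_m/q^{k-m}\rceil$. Because the allowable multiplicity at a single point, given the flat decomposition, is constrained to a controlled residue class modulo $N_m$ (reflecting that each surviving $(m-1)$-flat through that point contributes $N_m$ units to a fibered count while partially-residual flats contribute only bounded increments up to $m$), I would show that the residue $s$ of $\lceil d_m/q^{k-m}\rceil$ modulo $N_m$ can be realized tightly only when $s \geq m$. When $m > s$, the required residue class leaves $m - s$ unused slots, so the one-dimensional residual length must be inflated by at least $m-s$, giving $nN_m \geq g_{q,m}(k, d_m) + (m-s)$.

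For case (2), where $s = m$, the modular obstruction at the last residual step disappears, so the improvement must come from a global count. Here I would use double counting of incidences between the $n$ flats and the $N_k$ hyperplanes of $PG(k-1,q)$: by incidence reversal $\sum_H \#\{i : \mathcal{F}_i \subset H\} = n N_{k-m}$, combined with the codeword bound $\#\{i : \mathcal{F}_i \subset H\} \leq n - d_m$ for every hyperplane $H$. Reducing $n$ modulo $N_k$ to $n^*$ and exploiting $m n^* > N_k$, I would show that no $H$ can simultaneously meet every flat in equality, producing a strict slack that is incompatible with a Griesmer-tight configuration and hence forcing $nN_m \geq g_{q,m}(k, d_m) + 1$. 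The main obstacle I anticipate is case (1): tracking exactly how the mixed $(m-1), (m-2), \ldots$-flat structure propagates through the iterated residual steps, and verifying that the resulting multiplicity at the final one-dimensional stage is constrained to precisely the residue classes $\{0, 1, \ldots, m\} \pmod{N_m}$, so that the defect is exactly $m-s$ rather than merely bounded by it.
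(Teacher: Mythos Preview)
Your approach overlooks the one structural feature that distinguishes a Type-I GRC from a general non-degenerate block metric code: the matrices $A_1,\ldots,A_{m-1}$ are \emph{permutations}, so $G,GA_1,\ldots,GA_{m-1}$ all have identical column multisets. Hence every column of $G$, viewed as a point of $PG(k-1,q)$, already occurs with multiplicity at least $m$ inside the block $(G,GA_1,\ldots,GA_{m-1})\subset\mathfrak{X}_m(\mathcal{G}_m)$. Non-degeneracy alone (which is all you invoke) guarantees only that the $n$ column-blocks span genuine $(m-1)$-flats; it does not constrain point multiplicities, and your claimed restriction ``multiplicity $\in\{0,1,\ldots,m\}\pmod{N_m}$'' is false in general---a point lies on however many of the $n$ flats happen to contain it, an arbitrary integer. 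Your iterated-residual scheme, if it worked, would therefore prove the bound for \emph{all} non-degenerate block codes, which is strictly stronger than the theorem and not true.

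The paper's proof is much shorter and uses exactly the permutation property you omit. Having located a point $\mathbf{p}$ of multiplicity $\geq m$ in $\mathfrak{X}_m(\mathcal{C})$ (or $\geq \epsilon N_m+m$ after a pigeonhole step when $\lceil d_m/q^{k-m}\rceil\geq N_m$), one performs a \emph{single} residual: row-reduce so that $\mathbf{p}$ becomes $e_1$, delete the $m$ repeated columns, and observe that the remaining $k-1$ rows generate an $[nN_m-m,\,k-1,\,\geq q^{m-1}d_m]_q$ code. The ordinary Griesmer bound on this residual gives $nN_m-m\geq g_{q,m}(k,d_m)-\lceil d_m/q^{k-m}\rceil$, i.e.\ $nN_m\geq g_{q,m}(k,d_m)+m-s$. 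Case (2) is handled identically after checking, via a direct pigeonhole using $mn^*>N_k$ on the split of $\mathfrak{X}_m(\mathcal{C})$ into the $mn$ permuted columns and the $(N_m-m)n$ remaining ones, that some point in fact has multiplicity $\geq m+1$. Your double-counting proposal for (2) aims at the wrong target: showing that no hyperplane achieves equality in $\#\{i:\mathcal{F}_i\subset H\}\le n-d_m$ would at best say the minimum distance is not attained at every hyperplane, which does not by itself sharpen the Griesmer inequality by $1$.
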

\begin{proof}
	(1) By Lemma \ref{expand_flats}, $\mathfrak{X}_m(\mathcal{C})$ is a linear code with parameters $[nN_m,k,q^{m-1}d_m]_q$.  
	Let $s=\epsilon N_{m}+\zeta$, where $ 1\le \zeta \le m-1$. 
	For $\epsilon=0$, there exist at least  $n$ points lying on the $m$ $(m-1)$-flats of $\mathcal{C}$, respectively. Then the generator matrix of $\mathfrak{X}_m(\mathcal{C})$ is 
	\begin{equation}
		G=\left(\begin{array}{c|c}
			\mathbf{1}_m
			&
			\begin{array}{r}
				\cdots 
			\end{array}\\ \hline
			\mathbf{0}
			&
			\begin{array}{c}
				G^{\prime}  \\
			\end{array} 
			
		\end{array}\right).
	\end{equation}

	Let $C^{\prime}$ be the code generated by $G^{\prime}$. Then, $C^{\prime}$ has parameters $[nN_{m}-m,k-1,q^{m-1}d_m]_q$.
	Since $g_{q,m}(k-1,d_m)=\sum\limits_{i=0}^{k-2}\left\lceil\frac{d_m}{q^{i-m+1}}\right\rceil=g_{q,m}(k,d_m)-\left\lceil\frac{d_m}{q^{k-m}}\right\rceil=g_{q,m}(k,d_m)-s\le nN_{m}-m$, we have $g_{q,m}(k,d_m)+m-s\le nN_{m}$.

	If $\epsilon,\zeta\ge 1$, then we have $nN_m\ge g_q(k,q^{m-1}d_m)> \epsilon N_mN_k$.  
	So $n>\epsilon N_k$.  
	Therefore, there exists a point $\mathbf{p}$ that repeats at least $m(\epsilon+1)$ times in the generator matrix of $\mathcal{C}$.  
	Since $\frac{n(N_m-m)}{N_k}>\epsilon(N_m-m)$, the same point $\mathbf{p}$ repeats at least $\epsilon(N_m-m)+m(\epsilon+1)=\epsilon N_m+m$ times in $\mathfrak{X}_m(\mathcal{C})$.  
	Similarly, we can deduce that $nN_{m}\ge g_{q,m}(k,d_m)+m-s$.

	(2) 
	The $nN_m$ points (which may include repetitions) of $\mathfrak{X}_m(\mathcal{C})$ can be divided into two parts: the first part is $mn$ points of $\mathcal{C}$, and the second part is the remaining $(N_m-m)n$ points.
	
	Suppose $n=\epsilon N_k+n^*$. 
	As $m\ge2$, $N_m\ge 2m-1$.
	Since $mn^*>N_k$, we have $n^*+(N_m-m)n^*\ge mn^*>N_k$.
	If $\epsilon=0$, then the first and second parts have a nonempty intersection.
	Thus, $\mathfrak{X}_m(\mathcal{C})$ has a point $\mathbf{p}$ repeats at least $m+1$ times.

	If $\epsilon\ge 1$, then one can easily see that $\mathfrak{X}_m(\mathcal{C})$ has a point $\mathbf{p}$ that repeats at least $\epsilon N_m+m+1$ times.
	Therefore, by a method similar to that in (1), we have that Equation (\ref{E_s=b}) holds.
\end{proof}

	 Baumert and McEliece \cite{baumert1973note} proved that optimal linear codes always exist that can achieve the Griesmer bound, given a sufficiently large code length. 
	Nonetheless, Theorem \ref{non-existence of $b$-symbol codes} suggests that even if the code length is sufficiently large, Type-I GRCs do not necessarily reach the Griesmer bound in Lemma \ref{Griesmer_Bound}.

\begin{example}
	In \cite{Luo2024GriesmerBA}, Luo et al. constructed two classes of $b$-symbol optimal codes with parameters $(q^k-1,k+1,q^k-q^{k-b}-1)_q^b$ and $(q^k,k+1,q^k-q^{k-b})_q^b$, respectively, where $b\le k$. 
	They are equivalent to $[(q^k-1,b),k+1,q^k-q^{k-b}-1]_q$ and $[(q^k,b),k+1,q^k-q^{k-b}]_q$ Type-I GRCs. 
	However, according to the Griesmer bound in Lemma \ref{Griesmer_Bound}, there may exist $[(q^k-1,b),k+b,q^k-q^{k-b}-1]_q$ and $[(q^k,b),k+b,q^k-q^{k-b}]_q$ Type-I GRCs, those two codes have the same distances but higher dimensions. 
	Theorem \ref{non-existence of $b$-symbol codes} implies that no Type-I GRCs exist with parameters $[(q^k-1,2),k+2,q^k-q^{k-2}-1]_q$ or $[(q^k,2),k+2,q^k-q^{k-2}]_q$.
%
	\hfill $\square$ \end{example}

According to Definition \ref{D_repetition code}, the construction problem of Type-I GRCs involves two key aspects. Firstly, it is to construct a special linear code $C$ under the Hamming metric. Secondly, it involves designing appropriate permutation matrix $A_1,\ldots,A_{m-1}$. 
Based on this, for practical applications, we focus on constructing Type-I regular GRCs. 
Let $C$ be an $[n,k]_q$ linear code with generator matrix $G$. Then, by Definition \ref{D_repetition code}, an $[(n,m),k]_q$ Type-I regular GRC $\mathcal{C}$ based on $C$ has a generator matrix
\begin{equation}\label{Eq_regular}
	\mathcal{G}=(G,GA,\ldots,GA^{m-1}),
\end{equation}
where $A\in Ps(n,\mathbb{F}_q)$. 
Suppose $\sigma$ is the corresponding permutation of $A$. For an integer $m\le n$, we define the Type-I regular GRC of $C$ under permutation $\sigma$ as $C_{\sigma,m}=C\times \sigma(C)\times \cdots \times\sigma^{m-1}(C)$. 
For a codeword $\mathbf{c}$ of $C$, we define $\sigma^{(m)}(\mathbf{c})=(\mathbf{c},\sigma(\mathbf{c}),\ldots,\sigma^{m-1}(\mathbf{c}))$. 
The following theorem establishes a upper bound for Type-I regular GRC of $C_{\sigma,m}$. 

%

\begin{theorem}\label{Theo_MDS_bound}
	Let $C$ be an $[n,k]_q$ full length linear code with generator matrix $G$ and let $\sigma$ be a permutation with maximum cyclic length $\jmath \ge k$.  
	Then, for $m\le k $, the SBDH $ (\mathbf{d}_r(C_{\sigma,m}) )_{r = 1}^m$ of Type-I regular GRC $C_{\sigma,m}$ satisfies 
	\begin{equation}\label{Eq_MDS_bound}
		\mathbf{d}_r(C_{\sigma,m})\le n-k+r.
	\end{equation} 
\end{theorem}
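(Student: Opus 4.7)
The plan is to exhibit a single index set $T \in \binom{[m]}{r}$ and a single non-zero codeword of the sub-block code $C_{\sigma,m}^T$ whose block weight already matches the claimed upper bound. The argument combines the cycle structure of $\sigma$ with a standard dimension-count on $C$.

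First, I would fix $T = \{1, 2, \ldots, r\}$, so that every codeword of $C_{\sigma,m}^T$ has the form $\sigma^{(r)}(\mathbf{v}) = (\mathbf{v}, \sigma(\mathbf{v}), \ldots, \sigma^{r-1}(\mathbf{v}))$ for some $\mathbf{v} \in C$. Unpacking the definition of block weight, the $i$-th block of this vector is $(v_i, v_{\sigma(i)}, \ldots, v_{\sigma^{r-1}(i)})$, which vanishes precisely when $\{i, \sigma(i), \ldots, \sigma^{r-1}(i)\}$ is disjoint from $S := \text{Supp}(\mathbf{v})$. Hence the block weight of $\sigma^{(r)}(\mathbf{v})$ equals $\bigl|\bigcup_{l=0}^{r-1}\sigma^{-l}(S)\bigr|$, and it suffices to produce a non-zero $\mathbf{v} \in C$ for which this union has complement of size at least $k - r$ in $[n]$.

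Next, I would exploit the hypothesis $\jmath \ge k$ to select this codeword. Let $(a_1, a_2, \ldots, a_{\jmath})$ be a longest cycle of $\sigma$ and set $U := \{a_1, \ldots, a_{k-1}\}$. Because $|U| = k - 1 < k = \dim C$, the coordinate projection $C \to \mathbb{F}_q^{U}$ has a non-trivial kernel, so some non-zero $\mathbf{v} \in C$ vanishes on $U$. For every $j \in \{1, \ldots, k-r\}$ (a valid range since $r \le m \le k$), the length-$r$ orbit segment $\{a_j, a_{j+1}, \ldots, a_{j+r-1}\}$ lies inside $U$ and therefore misses $S$; this places the $k - r$ distinct elements $a_1, \ldots, a_{k-r}$ outside $\bigcup_{l=0}^{r-1}\sigma^{-l}(S)$, yielding $\mathrm{d}_r(C_{\sigma,m}^T) \le n - (k - r) = n - k + r$ and hence $\mathbf{d}_r(C_{\sigma,m}) \le n - k + r$.

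I expect the main conceptual hurdle to be the initial translation step: recognizing that, under the choice $T = \{1, \ldots, r\}$, the block weight is governed by how $\text{Supp}(\mathbf{v})$ spreads out under the inverse powers $\sigma^{-l}$, and that a cycle of length at least $k$ is exactly what permits forcing $\mathbf{v}$ to vanish on a contiguous arc of length $k - 1$ while still covering $k - r$ entire orbit segments of length $r$. Once this viewpoint is adopted, the rest is index bookkeeping; notably, the full-length hypothesis on $C$ is not invoked here, only that $\dim C = k$, while the hypothesis $\jmath \ge k$ is used exactly to fit the arc $U$ inside a single cycle.
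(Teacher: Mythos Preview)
Your proof is correct, and it takes a genuinely different route from the paper's. The paper decomposes $\sigma$ into disjoint cycles $\sigma_1,\ldots,\sigma_s$ (with $\sigma_1$ of length $\jmath$), observes that the array $\phi_m(\sigma^{(m)}(\mathbf{c}))$ splits column-wise into $s$ pieces each of which behaves like a $b$-symbol code, and then projects $C$ onto the $\jmath$ coordinates of the longest cycle to obtain an $m$-symbol code $C_b$ of dimension $k$; the $b$-symbol Singleton bound (Lemma~\ref{Singleton_bound_b}) applied to $C_b$ then gives $d_m' \le \jmath - k + m$, whence $\mathbf{d}_m(C_{\sigma,m}) \le (n-\jmath) + d_m' \le n-k+m$, and similarly for each $r$. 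The full-length hypothesis is invoked precisely to assert that this projection keeps dimension $k$. Your argument bypasses the black-box appeal to Lemma~\ref{Singleton_bound_b} entirely: by forcing a nonzero $\mathbf{v}\in C$ to vanish on a contiguous arc $U=\{a_1,\ldots,a_{k-1}\}$ of the long cycle, you directly exhibit $k-r$ zero blocks, which is in effect an in-line proof of the Singleton bound tailored to this setting. What your approach buys is that it is entirely self-contained, and---as you correctly observe---it never uses the full-length hypothesis, showing that the bound holds in slightly greater generality than the statement records. What the paper's approach buys is a structural explanation (the bound is \emph{because} the restriction to one cycle is a $b$-symbol code), at the cost of needing the projected code to have full dimension $k$.
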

\begin{proof}
	By the properties of permutations, \( \sigma \) can be decomposed into disjoint cycles. 
	We can express the permutation \( \sigma \) as $\sigma = \sigma_1 \sigma_2 \cdots \sigma_s$, where each \( \sigma_i \) denotes a disjoint cycle. 
	Without loss of generality, assume that the coordinates involved in the permutations \( \sigma_1, \sigma_2, \ldots, \sigma_s \) are arranged in increasing order and the length of $\sigma_1$ is $\jmath$.
	Then, for a codeword \( \mathbf{c} \) of \( C \), we can partition it into \( s \) subvectors according to the coordinates involved in the permutation \( \sigma_i \), and 
	$\mathbf{c} = (\mathbf{c}_1, \mathbf{c}_2, \ldots, \mathbf{c}_s)$.
	
	Then, the array form of codeword $\sigma^{(m)}(\mathbf{c})$ of $C_{\sigma,m}$ can be denoted by 
	\begin{equation}
		\phi_m(\sigma^{(m)}(\mathbf{c}))=\left(\begin{array}{cccc}
			\mathbf{c}_1& \mathbf{c}_2& \cdots& \mathbf{c}_s\\
			\sigma_1(\mathbf{c}_1)& \sigma_2(\mathbf{c}_2)& \cdots& \sigma_s(\mathbf{c}_s)\\
			\vdots &\vdots&\ddots&\vdots\\
			\sigma_1^{m-1}(\mathbf{c}_1)& \sigma_2^{m-1}(\mathbf{c}_2)& \cdots& \sigma_s^{m-1}(\mathbf{c}_s)\\
		\end{array}\right).
	\end{equation}
	 Since each part of \( \phi_m(\sigma^{(m)}(\mathbf{c})) \), when expanded individually into a row vector, can be regarded as a codeword of a \( b \)-symbol code, the Type-I regular GRC \( C_{\sigma,m} \) can be decomposed into \( s \) \( b \)-symbol codes.
	 Due to the length $\jmath$ of $\sigma_1$ satisfy $\jmath\ge k\ge m$ and $C$ is full length, 
	 the first \( \jmath \) coordinates of \( C \) can generate a $[(\jmath,m),k,d_m']$  \( m \)-symbol code \( C_b \) under the action of \( \sigma_1 \).
	Thus, we have $\mathbf{d}_m(C_{\sigma,m})\le n-\jmath+d_m'$.
	By the Singleton bound on $b$-symbol codes in Lemma \ref{Singleton_bound_b}, we get $d_m'\le \jmath-k+m$ and $\mathbf{d}_m(C_{\sigma,m})\le n-k+m.$ 
 	A similar argument can be used to show that \( C_{\sigma,m}^{[r]} \) satisfies \( \mathbf{d}_r(C_{\sigma,m}^{[r]}) \leq n - k + r \).
 	Since $\mathbf{d}_r(C_{\sigma,m}) \le \mathbf{d}_r(C_{\sigma,m}^{[r]}) \le n-k+r$, we have $\mathbf{d}_r(C_{\sigma,m})\le n-k+r.$
	This completes the proof.
\end{proof}

\subsection{Constructions of Type-I regular generalized repetition codes from cyclic and quasi-cyclic codes}
It is well known that quasi-cyclic codes are an efficient generalization of cyclic codes and are asymptotically good \cite{kasami1974gilbert,Ling2003}. Numerous computational results \cite{Grassltable} and specialized databases \cite{Chentable} are available for quasi-cyclic codes. Thus, we consider leveraging cyclic and quasi-cyclic codes to construct Type-I regular GRCs.

As shown in the Section \ref{Sec II}, replacing $A$ with cyclic shift matrix $X$ in Equation (\ref{Eq_regular}) makes $\mathrm{d}_m(C_{\pi_\ell,m})$ be the minimum $m$-symbol distance of $C$.  
Extensive research has been conducted on constructing linear codes with larger minimum $b$-symbol distances; see \cite{chee2013maximum,kai2015construction,chen2017constacyclic,Ding2018MaximumDS,zhu2022complete}. 
However, as illustrated in Figure \ref{FER_of_b_symbol_Golay_codes}, focusing solely on the $b$-symbol distance of $C$ may reduce coding gain as $b$ increases.  
Therefore, unlike prior work, this paper emphasizes the SBDH of Type-I regular GRC $C_{\pi_\ell,m}$, not just $\mathrm{d}_m(C_{\pi_\ell,m})$.

The following theorem provides a lower bound on the SBDH of Type-I regular GRC $C_{\pi,m}$ from cyclic codes. 

\begin{theorem}\label{Theo_SBDH_lower_bounds}
		Let $C$ be an $[n,k,d]_q$ cyclic code with generator polynomial $g(x)$. 
 If $m\le \kappa(g(x))$, then the SBDH $ (\mathbf{d}_r(C_{\pi,m}) )_{r = 1}^m$ of Type-I regular GRC $C_{\pi,m}$ satisfies 
 \begin{equation} 
 	\mathbf{d}_r(C_{\pi,m})\ge g_q(r,d).
 \end{equation}    
\end{theorem}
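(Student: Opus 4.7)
The plan is to reduce the lower bound on each $r$-th minimum sub-block distance to the classical Griesmer bound (Lemma \ref{Griesmer_Bound} with $m=1$) applied to a carefully chosen subcode of $C$. Fix any $T=\{i_1<\cdots<i_r\}\in\binom{[m]}{r}$. Because $C$ is cyclic, applying $\pi^{-i_1}$ simultaneously to every block preserves the block weight of any sub-block codeword, so without loss of generality $i_1=0\le i_2<\cdots<i_r\le m-1$. For each nonzero $\mathbf{c}\in C$, let $C_{\mathbf{c},T}\subseteq C$ denote the $\mathbb{F}_q$-span of the $r$ shifts $\pi^{i_1}(\mathbf{c}),\ldots,\pi^{i_r}(\mathbf{c})$. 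Viewing $(\pi^{i_s}(\mathbf{c}))_{s=1}^{r}$ as an $r\times n$ array, its block weight equals the number of nonzero columns, which is exactly $|\mathrm{Supp}(C_{\mathbf{c},T})|$. Moreover $C_{\mathbf{c},T}\subseteq C$, so each of its nonzero codewords has Hamming weight at least $d$.

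The heart of the argument is the claim $\dim_{\mathbb{F}_q}C_{\mathbf{c},T}=r$, which is where the hypothesis $m\le\kappa(g(x))$ enters. A nontrivial dependency $\sum_{s}\alpha_s\pi^{i_s}(\mathbf{c})=0$ translates, via the polynomial encoding, to $f(x)c(x)\equiv 0\pmod{x^n-1}$ for $f(x)=\sum_{s}\alpha_s x^{i_s}$, a nonzero polynomial of degree at most $i_r\le m-1<\kappa(g(x))$. Writing $c(x)=a(x)g(x)$ with $a\not\equiv 0\pmod{h(x)}$, this forces $h(x)\mid f(x)a(x)$. Put $u(x)=\gcd(f(x),h(x))$: every irreducible factor of $u(x)$ divides $f(x)$ (and thus has degree strictly less than $\kappa(g(x))$) as well as $h(x)$ (whose irreducible factors have degree either $1$ or at least $\kappa(g(x))$ by definition of $\kappa$). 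Combining both constraints forces every irreducible factor of $u(x)$ to be linear, whence $\mathbf{c}$ must lie in the cyclic subcode $((h(x)/u(x))g(x))\subseteq C$, i.e., in the direct sum of those one-dimensional irreducible cyclic subcodes of $C$ corresponding to the linear factors of $u(x)$.

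Once $\dim C_{\mathbf{c},T}=r$ is established, the Griesmer bound applied to $C_{\mathbf{c},T}$ as an ordinary $[|\mathrm{Supp}(C_{\mathbf{c},T})|,r,\ge d]_q$ code immediately gives $|\mathrm{Supp}(C_{\mathbf{c},T})|\ge g_q(r,d)$; this is precisely the block weight of the sub-block codeword, and taking the minimum over $\mathbf{c}$ and $T$ yields $\mathbf{d}_r(C_{\pi,m})\ge g_q(r,d)$. The main obstacle is the residual degenerate situation in which $\mathbf{c}$ already lies in the direct sum of the one-dimensional irreducible cyclic subcodes of $C$—that is, $\mathbf{c}$ is a linear combination of eigenvectors of $\pi$ with eigenvalues in $\mathbb{F}_q^{*}$. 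Each such eigenvector is proportional to $(x^n-1)/(x-\gamma)$ and therefore has full Hamming weight $n$, and this case must be closed by a direct analysis of the $r\times n$ array that exploits the explicit eigenvector structure together with the Griesmer inequality $n\ge g_q(k,d)\ge g_q(r,d)$ for $C$ itself.
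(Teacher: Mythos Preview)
Your overall strategy coincides with the paper's: both identify the $r$-th sub-block weight of $\mathbf{c}_I^T$ with the effective length $n(C_{\mathbf{c},T})$ of the subcode of $C$ spanned by the chosen cyclic shifts, argue that this subcode has dimension $r$ in the generic situation using $m\le\kappa(g(x))$, and then invoke the classical Griesmer bound (Lemma~\ref{Griesmer_Bound} with $m=1$) on $C_{\mathbf{c},T}$, whose minimum distance is at least $d$.

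The only real difference is how the degenerate case is dispatched, and here your proposal has a loose end. The paper's dichotomy is on whether $\mathrm{w}(\mathbf{c})=n$. If $\mathrm{w}(\mathbf{c})=n$, the block weight is trivially $n\ge g_q(k,d)\ge g_q(r,d)$, since one row of the array already has full support. If $\mathrm{w}(\mathbf{c})<n$, the paper puts $\Delta(x)=\gcd(c(x),x^n-1)$ and asserts $\deg(\Delta(x))\le n-\kappa(g(x))$, so the smallest cyclic code containing $\mathbf{c}$ has dimension at least $\kappa(g(x))\ge m$, and hence $c(x),xc(x),\dots,x^{m-1}c(x)$ are linearly independent; in particular any $r$ of the shifts are independent and the Griesmer step applies. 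Your dichotomy instead isolates the case where $\mathbf{c}$ lies in the direct sum of the one-dimensional $\pi$-eigenspaces and leaves it to ``a direct analysis \ldots\ together with the Griesmer inequality $n\ge g_q(k,d)\ge g_q(r,d)$.'' That hint does not close the argument as stated: a nontrivial combination of two or more $\pi$-eigenvectors need not have Hamming weight $n$, and then $C_{\mathbf{c},T}$ can have dimension strictly less than $r$ and support strictly smaller than $n$, so neither Griesmer on $C_{\mathbf{c},T}$ nor the bare inequality $n\ge g_q(r,d)$ yields the desired bound on $|\mathrm{Supp}(C_{\mathbf{c},T})|$. The paper's split avoids this difficulty because the ``trivial'' branch $\mathrm{w}(\mathbf{c})=n$ is immediate, and the linear-independence claim is made only under the complementary assumption $\mathrm{w}(\mathbf{c})<n$.
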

\begin{proof}
	Suppose $\mathbf{c}$ is a codeword of $C$ and $c(x)$ is the coefficient polynomial of $\mathbf{c}$.   
	Let $\mathbf{c}_I=\pi^{(m)}(\mathbf{c})$.  
	Let $\Delta(x)=\gcd(x^n-1,c(x))$. 
	Then, we have $g(x) \mid \Delta(x)$ and $c(x)=a(x)\Delta(x)$, where $a(x)\in \mathbb{R}_{q,n}$ with degree less than $n-\deg(\Delta(x))$. 
	
	According to the definition of $\kappa(g(x))$, if $\mathrm{w}(\mathbf{c})<n$, then we have $\deg(\Delta(x))\le n-\kappa(g(x))$. 
	By the theory of cyclic codes \cite{huffman2010fundamentals}, it follows that $c(x),xc(x),\ldots,x^{m-1}c(x)$ are linearly independent polynomials. 
	Thus, we get that, for $T\in \binom{[m]}{r}$,  $\phi_r(\mathbf{c}_I^T)$ generates an $[n,\mathrm{rank}(\phi_r(\mathbf{c}_I^T))]_q$ linear code $C_r$.  
	By the Griesmer bound,  we have $n(C_r)=w_r(\mathbf{c}_I^T)\ge g_q(r,d)$. 
\end{proof}

The following example demonstrates the validity of Theorem \ref{Theo_SBDH_lower_bounds}. 
\begin{example}
	Let $q=2$ and $n=15$. Consider $g(x)=x^{11} + x^{10} + x^9 + x^8 + x^6 + x^4 + x^3 + 1$. Then, $g(x)$ generates a $4$-dimensional cyclic Simplex code $C$. 
	Since the parity check polynomial of $C$ is an irreducible polynomial, we have $\kappa (g(x))=4$. 
	By Theorem \ref{Theo_SBDH_lower_bounds}, $\mathbf{g}(x)=(g(x),xg(x),x^2g(x),x^3g(x))$ generates a $[(15,4),4]_2$ Type-I regular GRC with SBDH $(\mathbf{d}_1(\mathcal{C})=8, \mathbf{d}_2(\mathcal{C})\ge 12,\mathbf{d}_3(\mathcal{C})=14,\mathbf{d}_4(\mathcal{C})=15)$. 
	
	However, if $m>\kappa (g(x))$, the lower bounds in Theorem \ref{Theo_SBDH_lower_bounds} may not hold. 
	Let $C_1$ denote the dual code of $C$ and $h(x)=\frac{x^{15}-1}{g(x)}$. 
	Then, $C_1$ is the $[15,11,3]_2$ cyclic Hamming code.
	Since $g(x)=(x+1)(x^2 + x + 1)(x^4 + x + 1)(x^4 + x^3 + x^2 + x + 1)$, we have $\kappa (h(x))=2$. 
	It is easy to check that $\mathbf{h}(x)=(h(x),xh(x),x^2h(x),x^3h(x),x^4h(x),x^5h(x))$ generates a $[(15,6),11]_2$ Type-I GRC $\mathcal{C}_1$ with $\mathbf{d}_2(\mathcal{C}_1)=3$.  \hfill $\square$
\end{example}

Extended cyclic codes, derived from classical cyclic codes, play a vital role in modern communication and storage systems.  
Many well-known code classes are extended cyclic codes, including first-order Reed-Muller codes, extended BCH codes, and extended Golay codes \cite{macwilliams1977theory}.
In the following theorem, we give a method to construct Type-I GRC with bounded the SBDH of Type-I regular GRC from extended cyclic codes.  

\begin{theorem}\label{Theo_extended_cyclic}
	Let $\pi'=(1,2,\ldots,n)(n+1)$ and let $C$ be an $[n,k,d]_q$ cyclic code with generator polynomial $g(x)$.   
	If $\mathrm{d}(\widehat{C})=d+1$ and $ m\le \kappa(g(x))$, 
	then the SBDH $ (\mathbf{d}_r(\widehat{C}_{\pi',m}) )_{r = 1}^m$ of Type-I regular GRC $\widehat{C}_{\pi',m}$ satisfies 
	\begin{equation} 
		\mathbf{d}_r(\widehat{C}_{\pi',m})\ge g_q(r,d+1).
	\end{equation}    
	 
\end{theorem}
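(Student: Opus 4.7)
The plan is to adapt the proof of Theorem \ref{Theo_SBDH_lower_bounds} to the extended setting by exploiting the fact that $\pi'$ cyclically shifts the first $n$ coordinates while fixing the parity coordinate. Thus, for any $\mathbf{c}=(\mathbf{c}',c_{n+1})\in\widehat{C}$ (with $c_{n+1}=-\sum_{i=1}^{n} c'_i$) and any $j\ge 0$, we have $\pi'^{j}(\mathbf{c})=(\pi^{j}(\mathbf{c}'),c_{n+1})\in\widehat{C}$, where $\pi^{j}(\mathbf{c}')$ corresponds to the polynomial $x^{j}c(x)$ in $\mathbb{R}_{q,n}$. This reduces the analysis to the cyclic structure of $C$ on the first $n$ coordinates, plus bookkeeping for the $(n+1)$-th column.

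Let $\mathbf{c}$ be a nonzero codeword of $\widehat{C}$ and fix $T\in\binom{[m]}{r}$. The key step is to introduce the subcode $\widehat{C_r}\subseteq\widehat{C}$ generated by the rows of $\phi_r(\pi'^{(m)}(\mathbf{c})^{T})$, namely by the set $\{\pi'^{j}(\mathbf{c}):j\in T\}$. Since the projections of these $r$ vectors onto the first $n$ coordinates are the cyclic shifts $\{x^{j}c(x):j\in T\}$, which are linearly independent under the hypothesis $m\le \kappa(g(x))$ (exactly the argument used in the proof of Theorem \ref{Theo_SBDH_lower_bounds} via the bound $\deg(\Delta(x))\le n-\kappa(g(x))$ with $\Delta(x)=\gcd(x^{n}-1,c(x))$), the full vectors are also linearly independent, and so $\dim\widehat{C_r}=r$. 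Because $\widehat{C_r}\subseteq\widehat{C}$, we have $\mathrm{d}(\widehat{C_r})\ge \mathrm{d}(\widehat{C})=d+1$. Applying the Griesmer bound to the $[n(\widehat{C_r}),r,\ge d+1]_{q}$ code $\widehat{C_r}$ then yields $n(\widehat{C_r})\ge g_q(r,d+1)$. Since $n(\widehat{C_r})$ equals the number of nonzero columns of $\phi_r(\pi'^{(m)}(\mathbf{c})^{T})$, i.e., $\mathrm{w}_r(\pi'^{(m)}(\mathbf{c})^{T})$, this gives the desired inequality $\mathbf{d}_r(\widehat{C}_{\pi',m})\ge g_q(r,d+1)$.

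The main obstacle will be the boundary case $\mathrm{w}(\mathbf{c}')=n$, where the linear-independence argument based on $\kappa(g(x))$ need not apply as stated. In this regime, however, every column in the first $n$ positions of $\phi_r(\pi'^{(m)}(\mathbf{c})^{T})$ is automatically nonzero, so the block weight equals $n$ if $c_{n+1}=0$ and $n+1$ if $c_{n+1}\ne 0$. I would close the argument by combining this with the inequality $g_q(r,d+1)\le g_q(k,d+1)\le n+1$ (the Griesmer bound for $\widehat{C}$ itself) for $r\le k$, and handling the delicate sub-case where $r=k=\kappa(g(x))$, $\widehat{C}$ is Griesmer-optimal, and $c_{n+1}=0$ by a refined analysis of $\Delta(x)$: when $h(x)$ is irreducible, $\Delta(x)=g(x)$ regardless of $\mathrm{w}(\mathbf{c}')$, so the shifts $\{x^{j}c(x):j\in T\}$ remain linearly independent, the $\widehat{C_r}$ Griesmer argument from the previous paragraph still applies, and it would force $n\ge n+1$, ruling out the edge case.
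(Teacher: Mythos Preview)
Your proposal is correct and follows essentially the same approach as the paper: form the subcode of $\widehat{C}$ generated by the relevant rows of $\phi_r(\pi'^{(m)}(\mathbf{c})^{T})$, note that its minimum distance is at least $d+1$, and apply the Griesmer bound to its effective length. Your containment argument $\widehat{C_r}\subseteq\widehat{C}\Rightarrow \mathrm{d}(\widehat{C_r})\ge d+1$ is slightly more direct than the paper's detour through ``every weight-$d$ codeword of $C$ has nonzero parity,'' and you are more careful than the paper about the boundary case $\mathrm{w}(\mathbf{c}')=n$, which the paper simply excludes at the outset without further comment.
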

\begin{proof}
	Suppose $\mathbf{c}$ is a nonzero codeword of $C$ with weight less than $n$ and $\mathbf{c}_I=\pi'^{(m)}(\mathbf{c})$.  
	According to Theorem \ref{Theo_SBDH_lower_bounds}, for all $T\in \binom{[m]}{r}$, $g_q(r,d)\le \mathrm{w}_r(\mathbf{c}_I^T)$. 
	Let $C_r$ be a linear code generated by $\phi_r(\mathbf{c}_I^T)$. 
	It follows that $C_r$ is a subcode of $C$. 
	Since $\mathrm{d}(\widehat{C})=d+1$, for all minimum weight codewords $\mathbf{c}$ of $C$, we have $\widehat{\mathbf{c}}=(\mathbf{c},\alpha)$, where $\alpha\in \mathbb{F}_q^*$. 
	This implies that if $\mathrm{d}(C_r)=d$, then $\mathrm{d}(\widehat{C}_r)=d+1$. 
	Hence, $\mathrm{w}_r(\mathbf{c}_I^T)=n(\widehat{C}_r)\ge g_q(r,d+1)$. 
	This completes the proof.
\end{proof}
%



\begin{example}
	According to Chapter 1.9 of \cite{huffman2010fundamentals}, there exist only two Golay codes: the $[23,11,7]_2$ code $\mathcal{G}_1$ and the $[11,6,5]_3$ code $\mathcal{G}_2$, both of which are cyclic.
	Their generator polynomials are $g_1(x)=x^{11} + x^9 + x^7 + x^6 + x^5 +  x + 1$ and $g_2(x)=x^5 + 2x^3 + x^2 + 2x + 2$. 
	Since $\kappa (g_1(x))=11$ and $\kappa (g_2(x))=5$, by Theorem \ref{Theo_SBDH_lower_bounds}, we directly obtain $[(23,11),12]_2$ and $[(11,5),6]_3$ Type-I regular GRCs with SBDH  $(7,11,13,15,16,17,18,19,20,21,22)$ and $(5,7,8,9,10)$, respectively. 
	
	Moreover, by Theorem \ref{Theo_extended_cyclic}, one can use the extended binary and ternary Golay codes with parameters $[24,12,8]_2$ and $[12,6,6]_3$ to deduce $[(24,11),12]_2$ and $[(12,5),6]_3$ Type-I GRCs with SBDH  $(8,12,14,16,17,18,19,20,21,22,23)$ and $(6,8,9,10,11)$, respectively.   \hfill $\square$
\end{example}

Let \(\pi_m=(1,\ldots,n)(n+1,\ldots,2n)\cdots((m-1)n+1,\ldots,m n )\) denote the quasi-cyclic permutation, with its corresponding permutation matrix denoted by  \(X_{m}\).

\begin{theorem}\label{Theo_QC_Type-I}
		Suppose $C$ is an $[\ell n,k,d]_q$ quasi-cyclic code with generator  $\mathbf{g}(x)=(g(x),g(x)f_1(x),\ldots,g(x)f_{\ell-1}(x))$, where $g(x)$ and $f_i(x)$ are polynomials in $\mathbb{R}_{q,n}$ such that $g(x)\mid(x^n-1)$ and $\gcd(f_1(x),\ldots,f_{\ell-1}(x),x^n-1)=1$. 
	Let $\delta=\#\left\{ f_i(x)\,\bigg|\, \gcd\left( f_i(x),\frac{x^n-1}{g(x)} \right)\ne 1, i \in [\ell-1] \right \}$. 
	 If $m\le \kappa(g(x))$, then the SBDH $ (\mathbf{d}_r(C_{\pi_\ell,m}) )_{r = 1}^m$ of Type-I regular GRC $C_{\pi_\ell,m}$ satisfies 
	 \begin{equation}
	 	 \begin{array}{cl}
	 		g_q(r,d)\le \mathbf{d}_r(C_{\pi_\ell,m}) \le \ell n-k+r,& \delta=0,\\
	 		\min \{ g_q(r,d),(\ell-\delta)n \}\le \mathbf{d}_r(C_{\pi_\ell,m}) \le \min \left\{ \ell n-k+r,\left(\ell-1\right) n \right\}, & \delta >0.
	 	\end{array} 
	 \end{equation} 
\end{theorem}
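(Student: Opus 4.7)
My plan is to establish the upper and lower bounds separately, handling the cases $\delta = 0$ and $\delta > 0$ together.

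For the upper bound, I would first apply Theorem~\ref{Theo_MDS_bound}: the permutation $\pi_\ell$ is a product of $\ell$ disjoint $n$-cycles, so its maximum cyclic length equals $n$, and $k = n - \deg g \le n$; thus Theorem~\ref{Theo_MDS_bound} directly yields $\mathbf{d}_r(C_{\pi_\ell,m}) \le \ell n - k + r$. To derive the alternative upper bound $(\ell - 1)n$ when $\delta > 0$, I would exhibit an explicit short codeword: pick any $i$ with $u(x) := \gcd(f_i(x), h(x)) \ne 1$, set $a(x) = h(x)/u(x)$, and consider $\mathbf{c} = a(x)\mathbf{g}(x)$. Using $g(x) h(x) = x^n - 1$ and $u \mid f_i$, a one-line computation gives $a(x) g(x) f_i(x) \equiv 0 \pmod{x^n - 1}$, so the $(i{+}1)$-st block of $\mathbf{c}$ vanishes; because $\pi_\ell$ permutes each length-$n$ block cyclically within itself, every shift $\pi_\ell^j(\mathbf{c})$ retains that zero block. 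Consequently the $n$ columns of $\phi_r(\mathbf{c}_I^T)$ corresponding to that block are identically zero for every $T$, yielding $\mathbf{d}_r(C_{\pi_\ell,m}) \le (\ell - 1)n$.

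For the lower bound, the key structural observation is that for any codeword $\mathbf{c} = a(x) \mathbf{g}(x) \in C$ and any $T \in \binom{[m]}{r}$, the $r \times \ell n$ array $\phi_r(\mathbf{c}_I^T)$ splits into $\ell$ submatrices $M_j$ of size $r \times n$, where $M_j$ consists of the $r$ cyclic shifts (indexed by $T$) of the block $b_j(x) := a(x) g(x) f_{j-1}(x) \bmod (x^n - 1)$ (with $f_0 = 1$). Each nonzero $b_j$ is a codeword of the cyclic code $\langle g \rangle$ of minimum distance $d$, and the hypothesis $m \le \kappa(g(x))$ lets me invoke Theorem~\ref{Theo_SBDH_lower_bounds} on each such block, giving the per-block column count $W_j := \mathrm{w}_r(\pi^{(m)}(b_j)^T) \ge g_q(r,d)$. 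Since $\mathrm{w}_r(\mathbf{c}_I^T) = \sum_{j=1}^{\ell} W_j$ and block $1$ is nonzero whenever $\mathbf{c}$ is (because $f_0 = 1$ makes $b_1 = 0$ equivalent to $h \mid a$, i.e., $\mathbf{c} = 0$), I already obtain $\mathbf{d}_r(C_{\pi_\ell,m}) \ge g_q(r,d)$. The explicit equivalence $b_j = 0 \Leftrightarrow h/\gcd(a,h) \mid f_{j-1}$ further shows that only the $f_{j-1}$'s in the $\delta$-set can annihilate blocks, so every nonzero codeword retains at least $\ell - \delta$ nonzero blocks; combined with the Griesmer inequality $g_q(r,d) \le n \le (\ell - \delta) n$ (valid because $r \le m \le \kappa(g) \le k$ and $\ell - \delta \ge 1$), both branches of the stated minimum $\min\{g_q(r,d),(\ell - \delta)n\}$ are simultaneously dominated by $\mathbf{d}_r(C_{\pi_\ell,m})$.

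The main obstacle will be the degenerate situation where the cyclic span of $a(x)$ in $\mathbb{F}_q[x]/h(x)$ has dimension strictly less than $m$, which by the definition of $\kappa(g)$ is possible only when $h(x)$ admits linear irreducible factors. In that regime the $r$ shifts $\pi_\ell^{t_i}(\mathbf{c})$ may be linearly dependent, so one cannot simply quote Theorem~\ref{Theo_SBDH_lower_bounds} for the rows of $\phi_r(\mathbf{c}_I^T)$. The workaround is to decompose $a(x)$ via the Chinese Remainder Theorem into its components in the one-dimensional eigenspaces of multiplication-by-$x$ associated with the linear factors $h_i(x) = x - \alpha_i$ of $h(x)$; on each such eigenspace the cyclic shift acts by the scalar $\alpha_i$, so the nonzero blocks of $\mathbf{c}$ are nonzero scalar multiples of $(x^n - 1)/(x - \alpha_i)$ and hence attain full Hamming weight $n$. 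This precisely recovers the $(\ell - \delta)n$ alternative bound from the minimum and completes the lower bound in every sub-case.
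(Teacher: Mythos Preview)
Your upper-bound argument and your careful treatment of the degenerate linear-factor case are fine (indeed more thorough than the paper's), but the core of your lower-bound argument has a genuine gap.

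You decompose $\phi_r(\mathbf{c}_I^T)$ into $\ell$ blocks $M_j$ and invoke Theorem~\ref{Theo_SBDH_lower_bounds} on each block separately, writing ``each nonzero $b_j$ is a codeword of the cyclic code $\langle g\rangle$ of minimum distance $d$''. But $d$ in the statement is the minimum Hamming distance of the \emph{quasi-cyclic} code $C$, not of the cyclic code $\langle g\rangle$. Theorem~\ref{Theo_SBDH_lower_bounds} applied to a block only gives $W_j \ge g_q(r,d_0)$ with $d_0 = d(\langle g\rangle)$, and in general $d_0 < d$ (already for $\ell=2$, $\delta=0$ one has $d \ge 2d_0$). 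So neither the claim $W_1\ge g_q(r,d)$ nor the sum $\sum_j W_j \ge g_q(r,d)$ follows from your per-block reasoning. There is also a secondary issue: for $j\ge 2$ the cyclic span of $b_j = a g f_{j-1}$ can be a proper cyclic subcode of $\langle g\rangle$ of small dimension even when $b_j\ne 0$, so the rank-$r$ hypothesis needed for the per-block Griesmer step is not guaranteed blockwise.

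The paper's route avoids both problems by \emph{not} splitting into blocks. It observes (via the proof of Theorem~\ref{Theo_SBDH_lower_bounds} applied to block~1, which is $a(x)g(x)$) that the $r$ full quasi-cyclic shifts $\pi_\ell^{t_1}(\mathbf{c}),\ldots,\pi_\ell^{t_r}(\mathbf{c})$ are linearly independent in $\mathbb{F}_q^{\ell n}$; hence the rows of $\phi_r(\mathbf{c}_I^T)$ span an $r$-dimensional subcode of $C$ itself, with minimum distance $\ge d$, and a single application of Griesmer to \emph{that} subcode gives $\mathrm{w}_r(\mathbf{c}_I^T)=n(C_r)\ge g_q(r,d)$. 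The case where block~1 lies in a one-dimensional cyclic eigenspace (your CRT analysis) is then exactly the situation in which some sub-vectors can vanish, yielding the alternative $(\ell-\delta)n$ bound. Replacing your blockwise invocation of Theorem~\ref{Theo_SBDH_lower_bounds} by this single global rank-plus-Griesmer step repairs the argument.
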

\begin{proof}
	Write nonzero codeword $\mathbf{c}$ of $C$ as $(\mathbf{c}_1,\mathbf{c}_2,\ldots,\mathbf{c}_m)$, where $\mathbf{c}_i\in \mathbb{F}_q^n$. 
	$\mathbf{c}_i$ can be regard as a codeword of a cyclic code with polynomial $g(x)f_{i-1}(x)$, where $i\ge 2$. 
	If $\gcd\left( f_i(x),\frac{x^n-1}{g(x)} \right)\ne 1$, then $\mathbf{c}_i\ne \mathbf{0}$. 
	Thus, we have that for all codeword $\mathbf{c}$ of $C$, $\mathbf{c}$ has at most $\delta$ zero sub-vectors $\mathbf{c}_i$. 
		Let $\mathbf{c}_I=\pi^{(m)}_\ell (\mathbf{c})$. Moreover, according to the proof of Theorem \ref{Theo_SBDH_lower_bounds}, for all $T\in \binom{[m]}{r}$, $\mathrm{rank}(\phi_r(\mathbf{c}_I^T))=r$.
	If all sub-vectors of $\mathbf{c}$ are nonzero vectors, then we have $\mathrm{w}_r(\phi_r(\mathbf{c}_I^T))\ge g_q(r,d)$. 
	If $\mathbf{c}$ has $z$ zero sub-vectors, then $\mathrm{w}_r(\phi_r(\mathbf{c}_I^T))$ can only be bounded by $(\ell-z)n$, where $z\le \delta$. 
	Hence, we have $\mathbf{d}_r(C_{\pi_\ell,m})\ge \min \{ g_q(r,d),(\ell-\delta)n \}$. 
	However, if $\delta> 0$, then $C$ has at least one codeword $\mathbf{c}$ with one sub-vector $\mathbf{c}_i=\mathbf{0}$. It follows that  $\mathrm{w}_r(\mathbf{c}_I^T)\le (m-1)n$. 
\end{proof}

In Chen's quasi-cyclic code table \cite{Chentable}, there are numerous good quasi-cyclic codes. 
By means of Theorem \ref{Theo_QC_Type-I}, one can directly obtain a large number of Type-I regular GRCs with good parameters from known quasi-cyclic codes.

\section{Construction of Type-II generalized repetition codes from linear codes}	\label{Sect V}

The previous section presents methods for constructing Type-I regular GRCs from (extended) cyclic and quasi-cyclic codes with bounded SBDH.
However, most linear codes are neither cyclic nor quasi-cyclic \cite{Grassltable}, and Type-I GRCs derived from such codes must satisfy $m \le \kappa(g(x))$.
In particular, Type-I regular GRCs are bounded by the upper bound given in Theorem \ref{Theo_MDS_bound}.
Therefore, in this section, we propose a method to construct Type-II GRCs with bounded SBDH from general linear codes.
Moreover, we show that Type-II GRCs can exceed the bound in Theorem \ref{Theo_MDS_bound}.
We also propose a method to use quasi-cyclic codes to construct Type-II GRCs and obtain a large number of Type-II GRCs with good SBDH and SHDH by computer search. 

\subsection{Construction of Type-II generalized repetition codes with bounded SBDH from general linear codes}

 We first establish an upper bound on the minimum sub-Hamming distance for Type-II GRCs.
\begin{theorem}\label{Theorem_Type-II_Bound}
	Suppose $\mathcal{C}$ is an $[(n,m),k,d_m]_q$ non-degenerate Type-II GRC. 
	Then,  
	\begin{equation}
		\underline{\mathbf{d}}_m(\mathcal{C}) \le q^{m-1}d_m -\left(\frac{q^m-1}{q-1}-m \right)\underline{\mathbf{d}}_1(\mathcal{C})
	\end{equation}
\end{theorem}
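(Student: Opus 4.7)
The plan is to exploit the geometric expansion given by Lemma \ref{expand_flats} and then decompose the expanded Hamming weight along the $N_m$ sub-blocks indexed by points of $PG(m-1,q)$. First I would pass from $\mathcal{C}$ to its expanded code $\mathfrak{X}_m(\mathcal{C})$, which Lemma \ref{expand_flats} guarantees is an $[nN_m,k,q^{m-1}d_m]_q$ linear code whose Hamming weights are exactly $q^{m-1}$ times the block weights of $\mathcal{C}$, where $N_m=\frac{q^m-1}{q-1}$.

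Fix a codeword $\mathbf{c}=(\mathbf{u}G,\mathbf{u}B_1G,\ldots,\mathbf{u}B_{m-1}G)\in\mathcal{C}$ of block weight exactly $d_m$; its image $\tilde{\mathbf{c}}\in\mathfrak{X}_m(\mathcal{C})$ then has Hamming weight precisely $q^{m-1}d_m$. I split the $N_m$ sub-blocks of $\tilde{\mathbf{c}}$ into two groups. Group (a) consists of the $m$ "coordinate" sub-blocks attached to the unit projective points, which are exactly the blocks $\mathbf{u}B_iG$ already appearing in $\mathbf{c}$; their Hamming weights sum to $\mathrm{w}(\mathbf{c})$. Group (b) consists of the remaining $N_m-m$ "combination" sub-blocks of the form $\mathbf{u}M_{\mathbf{p}}G$, where $M_{\mathbf{p}}=\sum_{i=0}^{m-1}\alpha_iB_i$ with $B_0=I_k$ ranges over non-coordinate projective combinations. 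Since every row of $M_{\mathbf{p}}G$ remains in the row space of $G$, each group-(b) sub-block is a codeword of $C$, hence has Hamming weight either $0$ or at least $\mathrm{d}(C)=\underline{\mathbf{d}}_1(\mathcal{C})$.

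The crux is to lower-bound the total contribution of group (b) by $(N_m-m)\underline{\mathbf{d}}_1(\mathcal{C})$, for which I need each such sub-block to be nonzero, i.e., $\mathbf{u}M_{\mathbf{p}}\neq\mathbf{0}$ for every $\mathbf{p}\in PG(m-1,q)$. This is precisely the step that invokes the non-degeneracy hypothesis: the linear independence of $I_k,B_1,\ldots,B_{m-1}$ over $\mathbb{F}_q$ rules out $M_{\mathbf{p}}=0$ as a matrix. The delicate point is to promote this matrix-level non-vanishing to the vector-level statement $\mathbf{u}M_{\mathbf{p}}\neq\mathbf{0}$ for the specific message $\mathbf{u}$ of our chosen codeword; a natural route is to argue that among the minimum block-weight codewords one can always pick one whose message $\mathbf{u}$ avoids the union of the left kernels of the $M_{\mathbf{p}}$, a union which non-degeneracy forces to be a proper subset of $\mathbb{F}_q^k$.

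Once this nonvanishing is established, summing the sub-block weights gives
\[
q^{m-1}d_m=\mathrm{w}(\mathbf{c})+\sum_{\text{(b)}}\mathrm{w}(\mathbf{u}M_{\mathbf{p}}G)\;\ge\;\mathrm{w}(\mathbf{c})+(N_m-m)\,\underline{\mathbf{d}}_1(\mathcal{C}),
\]
so that $\mathrm{w}(\mathbf{c})\le q^{m-1}d_m-(N_m-m)\underline{\mathbf{d}}_1(\mathcal{C})$, and the inequality $\underline{\mathbf{d}}_m(\mathcal{C})\le\mathrm{w}(\mathbf{c})$ finishes the proof. The step I anticipate as the main obstacle is exactly the vector-level nonvanishing discussed above, because matrix-level non-degeneracy does not a priori preclude a particular message from lying in the left kernel of some singular $M_{\mathbf{p}}$; handling this carefully—either by selecting a good minimum-block-weight codeword or by a uniform averaging argument over the $q^k$ messages—is where the real work lies.
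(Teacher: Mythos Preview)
Your approach is essentially identical to the paper's: both pass to the expanded code $\mathfrak{X}_m(\mathcal{C})$ via Lemma~\ref{expand_flats}, split its $N_m$ sub-blocks into the $m$ coordinate blocks (which reconstitute $\mathcal{C}$ itself) and the remaining $N_m-m$ ``combination'' blocks, observe that each of the latter is a codeword of $C$, and derive the bound by subtracting a lower bound of $(N_m-m)\underline{\mathbf{d}}_1(\mathcal{C})$ for the combination part from the total Hamming weight $q^{m-1}d_m$.

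The only difference is in level of detail. The paper's proof is terser than yours: it simply asserts that the code obtained from $\mathfrak{X}_m(\mathcal{C})$ by deleting the $N_m-m$ non-coordinate blocks has minimum Hamming distance at most $q^{m-1}d_m-(N_m-m)\underline{\mathbf{d}}_1(\mathcal{C})$, without singling out a specific codeword or justifying that each deleted block carries weight at least $\underline{\mathbf{d}}_1(\mathcal{C})$. In particular, the vector-level nonvanishing issue you isolate---that non-degeneracy guarantees $M_{\mathbf{p}}\neq 0$ as a matrix but not $\mathbf{u}M_{\mathbf{p}}\neq 0$ for the specific minimum-block-weight message $\mathbf{u}$---is not addressed in the paper's argument either. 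So you have not missed an idea present in the paper; rather, you have made explicit a step the paper takes for granted.
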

\begin{proof}
	Let $\mathcal{C}$ have the generator matrix $\mathcal{G}_m = (G, B_1G, \ldots, B_{m-1}G)$.
	By Lemma \ref{expand_flats}, $\mathfrak{X}_m(\mathcal{C})$ is an  $[nN_{m},k, q^{m-1}d_m]_q$ linear code whose generator matrix is $\mathfrak{X}_m(\mathcal{G}_m)=(\mathbf{p}_0\circ \mathcal{G}_m,\mathbf{p}_1\circ \mathcal{G}_m,\ldots,\mathbf{p}_{N_m-1}\circ \mathcal{G}_m)$,  where $\mathbf{p}_0,\mathbf{p}_1,\ldots,\mathbf{p}_{N_m-1}$ are all points of $PG(m-1,q)$.
	
	Without loss of generality, assume the first $m$ points $\mathbf{p}_0, \mathbf{p}_1, \ldots, \mathbf{p}_{m-1}$ correspond to the standard basis vectors of $\mathbb{F}_q^m$, i.e., these points form an identical matrix. 
	Then,  $\mathfrak{X}_m(\mathcal{G}_m)=(\mathcal{G}_m,\mathbf{p}_m\circ \mathcal{G}_m,\ldots,\mathbf{p}_{N_m-1}\circ \mathcal{G}_m)$.  
	The code obtained by removing the blocks $(\mathbf{p}_m \circ \mathcal{G}_m, \ldots, \mathbf{p}_{N_m-1} \circ \mathcal{G}_m)$ from $\mathfrak{X}_m(\mathcal{G}_m)$ has a minimum Hamming distance at most $q^{m-1}d_m - (N_m - m)\underline{\mathbf{d}}_1(\mathcal{C})$.
	It follows that 
	$\underline{\mathbf{d}}_m(\mathcal{C}) \le q^{m-1}d_m -(N_m-m)\underline{\mathbf{d}}_1(\mathcal{C})$. 
	Since $N_m = \frac{q^m-1}{q-1}$, the proof is complete.
\end{proof}

When using only a single code for communication, one always desires \( \underline{\mathbf{d}}_1(\mathcal{C}) \) to be as large as possible, so that the receiver can perform decoding with low complexity. However, this theorem reveals that there exists a certain trade-off among \( \mathbf{d}_m(\mathcal{C}) \), \(\underline{\mathbf{d}}_m(\mathcal{C}) \), and \( \underline{\mathbf{d}}_1(\mathcal{C}) \); that is, it may not be possible to simultaneously achieve large values of \( \mathbf{d}_m(\mathcal{C}) \), \(\underline{\mathbf{d}}_m(\mathcal{C}) \), and \( \underline{\mathbf{d}}_1(\mathcal{C}) \).
\begin{example}
	Let $q=2$ and $n=15$. Let $g(x)=x^9 + x^7 + x^6 + x^3 + x^2 + 1$ and $f(x)=x^3 + x^2 + 1$, where $g(x)$ is a generator polynomial of $[15,6,6]_2$ cyclic code. 
	Then, it is easy to check that $\mathbf{g}(x)=(g(x),f(x)g(x))$ generates a $[30,6,12]_{2}$ quasi-cyclic code $\mathcal{C}$.  
 
	One can use Magma~\cite{bosma1997magma} to compute that $\mathbf{d}_1(\mathcal{C})=\underline{\mathbf{d}}_1(\mathcal{C})=6$, $\mathbf{d}_2(\mathcal{C})=11$, and $\underline{\mathbf{d}}_2(\mathcal{C})=12$. 
	By Theorem \ref{Theo_MDS_bound}, $\mathcal{C}$ has optimal minimum block distance $11$.  
	In contrast, according to \cite{Grassltable}, the minimum Hamming distance of the optimal $[30,6]_2$ linear code is $14$.
	However, according to Theorem \ref{Theorem_Type-II_Bound}, when $\mathbf{d}_2(\mathcal{C})=11$, \( \underline{\mathbf{d}}_2(\mathcal{C}) \) can be at most $12$. This reveals the constraint relationship between the minimum block distance and the Hamming distance for Type-II GRCs.	
	\hfill $\square$
\end{example}
 
Next, we present a method for constructing Type-II GRCs with bounded \(\mathbf{d}_m(\mathcal{C})\) using general linear codes. 
The following lemma presents a property of the characteristic polynomial of a matrix.

\begin{lemma}[Characteristic polynomial,\cite{2010GeneralizationOS}]
	For any monic polynomial $f(x)$ of degree $k$ over $\mathbb{F}_q$, there exists a matrix $B\in GL_k(\mathbb{F}_q)$ such that the characteristic polynomial of $B$ is $f(x)$.
\end{lemma}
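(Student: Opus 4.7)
The plan is to prove this by giving an explicit construction, namely the companion matrix. Write $f(x) = x^k + a_{k-1}x^{k-1} + \cdots + a_1 x + a_0$, and define
\begin{equation*}
B = \begin{pmatrix}
0 & 0 & \cdots & 0 & -a_0 \\
1 & 0 & \cdots & 0 & -a_1 \\
0 & 1 & \cdots & 0 & -a_2 \\
\vdots & & \ddots & & \vdots \\
0 & 0 & \cdots & 1 & -a_{k-1}
\end{pmatrix}\in \mathbb{F}_q^{k\times k}.
\end{equation*}
This is a concrete matrix depending only on the coefficients of $f$, so the existence claim reduces to verifying two properties of $B$.

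First, I would compute the characteristic polynomial $\det(xI_k - B)$. The cleanest route is induction on $k$ via cofactor expansion along the first row: the top-left $(k-1)\times(k-1)$ minor of $xI_k - B$ is again of companion form for $x^{k-1}+a_{k-1}x^{k-2}+\cdots+a_1$, and the other nonzero entry in the first row contributes $a_0$ times a lower-triangular determinant equal to $1$. Collecting terms gives exactly $f(x)$. Alternatively, one can verify directly that the standard basis vector $e_1$ is a cyclic vector for $B$ (since $B^j e_1 = e_{j+1}$ for $j<k$) so that the minimal polynomial of $B$ has degree $k$ and must therefore coincide with the characteristic polynomial; reading off the recursion $Be_k = -\sum_{i=0}^{k-1}a_i e_{i+1}$ then identifies this polynomial as $f(x)$.

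Second, I would check that $B\in GL_k(\mathbb{F}_q)$, i.e.\ that $B$ is invertible. Since the characteristic polynomial of $B$ equals $f(x)$, we have $\det(B) = (-1)^k f(0) = (-1)^k a_0$. Thus $B$ is invertible precisely when $a_0 \neq 0$. This is the only delicate point of the statement: the lemma is only meaningful under the implicit hypothesis that $f(0)\neq 0$, which is automatic whenever one wants a matrix in $GL_k(\mathbb{F}_q)$ whose characteristic polynomial is $f$ (since any invertible matrix has a nonzero-constant-term characteristic polynomial). I would mention this explicitly so that later applications, where $f$ will be chosen as a divisor of $x^n-1$ or a product of irreducible factors coprime to $x$, satisfy the assumption automatically. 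Beyond this observation there is no real obstacle; the construction is classical and the verification is a direct determinant computation.
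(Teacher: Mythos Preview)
The paper does not actually prove this lemma; it is quoted from an external reference and used immediately afterward only in the special case where $f$ is irreducible. Your companion-matrix argument is the standard proof and is correct, and you rightly flag the point the paper glosses over: the statement as written (for \emph{any} monic $f$) is false, since any $B$ with characteristic polynomial divisible by $x$ is singular. The implicit hypothesis $f(0)\neq 0$ that you add is exactly what is needed, and it is satisfied in the paper's only application (Theorem~\ref{nonCyclic_mound}), where $f$ is taken to be irreducible of degree $k\ge 2$.
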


The existence of an irreducible polynomial of any degree $k$ over $\mathbb{F}_q$ is guaranteed by the properties of finite fields. One straightforward method is to take the generator polynomial of the $k$-th order cyclic Hamming code over $\mathbb{F}_q$.

\begin{theorem}\label{nonCyclic_mound} 
	Let $B\in GL_k(\mathbb{F}_q)$, whose characteristic polynomial is irreducible over $\mathbb{F}_q$.
	Let $C$ be an $[n,k]_q$ linear code with generator matrix $G$.
	Then, $\mathcal{G}_k=(G,BG,\ldots, B^{k-1}G)$ generates an $[(n,k),k]_{q}$ Type-II GRC with SBDH $\mathcal{C}$ has SBDH $ (\mathbf{d}_r(\mathcal{C})\ge g_q(r,d) )_{r = 1}^k$.	
\end{theorem}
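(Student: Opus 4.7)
The plan is to mirror the structure of the proof of Theorem \ref{Theo_SBDH_lower_bounds}, replacing the cyclic-shift argument with an algebraic argument that exploits the irreducibility of the characteristic polynomial of $B$. The goal, for each nonempty $T=\{i_1,\ldots,i_r\}\in\binom{[k]}{r}$ and each nonzero codeword $\mathbf{c}$ of $\mathcal{C}$, is to show that the array $\phi_r(\mathbf{c}^T)$ has rank $r$ as a matrix over $\mathbb{F}_q$; the bound $\mathbf{d}_r(\mathcal{C})\ge g_q(r,d)$ then drops out of the Griesmer bound applied to the $r$-dimensional subcode of $C$ generated by the rows of $\phi_r(\mathbf{c}^T)$, exactly as in the cyclic case.

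First I would check that $\mathcal{G}_k$ really generates a code of dimension $k$: if $\mathbf{u}\mathcal{G}_k=\mathbf{0}$ then in particular $\mathbf{u}G=\mathbf{0}$, and since $G$ has rank $k$ this forces $\mathbf{u}=\mathbf{0}$. So writing a typical codeword as $\mathbf{c}=\mathbf{u}\mathcal{G}_k$ with $\mathbf{u}\in\mathbb{F}_q^k\setminus\{\mathbf{0}\}$, the $j$-th coordinate block of $\mathbf{c}$ is $\mathbf{u}B^{j-1}G$, and for $T=\{i_1,\ldots,i_r\}$ the $r$-th sub-block codeword is $\mathbf{c}^T=(\mathbf{u}B^{i_1-1}G,\ldots,\mathbf{u}B^{i_r-1}G)$.

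The core step is the claim that for every nonzero $\mathbf{u}\in\mathbb{F}_q^k$ the vectors $\mathbf{u},\mathbf{u}B,\ldots,\mathbf{u}B^{k-1}$ are linearly independent over $\mathbb{F}_q$. I would prove this by considering the annihilator $\mathrm{Ann}(\mathbf{u})=\{f(x)\in\mathbb{F}_q[x]\mid \mathbf{u}f(B)=\mathbf{0}\}$, which is an ideal of $\mathbb{F}_q[x]$ containing the characteristic polynomial $p(x)$ (by Cayley--Hamilton). Since $p(x)$ is irreducible and equals the minimal polynomial of $B$, the ideal $\mathrm{Ann}(\mathbf{u})$ is either $(p(x))$ or the whole ring $\mathbb{F}_q[x]$. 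The latter would give $\mathbf{u}=\mathbf{u}\cdot I=\mathbf{0}$, contradicting the choice of $\mathbf{u}$. Hence $\mathrm{Ann}(\mathbf{u})=(p(x))$, so no nonzero polynomial of degree $<k$ annihilates $\mathbf{u}$, which is exactly the desired linear independence. In particular, any $r$ of these $k$ vectors are linearly independent, so $\mathbf{u}B^{i_1-1},\ldots,\mathbf{u}B^{i_r-1}$ form $r$ independent vectors in $\mathbb{F}_q^k$.

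Multiplying on the right by $G$ (rank $k$) preserves independence, so the $r$ rows $\mathbf{u}B^{i_j-1}G$ of $\phi_r(\mathbf{c}^T)$ are linearly independent codewords of $C$; the row space is therefore an $[n(C'),r,\ge d]_q$ subcode $C'$ of $C$. The block weight $\mathrm{w}_r(\mathbf{c}^T)$ counts the nonzero columns of $\phi_r(\mathbf{c}^T)$, which equals the effective length $n(C')$. Applying the Griesmer bound gives $\mathrm{w}_r(\mathbf{c}^T)=n(C')\ge g_q(r,d)$. Taking the minimum over nonzero $\mathbf{c}$ and over $T\in\binom{[k]}{r}$ yields $\mathbf{d}_r(\mathcal{C})\ge g_q(r,d)$ for all $1\le r\le k$.

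The only genuinely nontrivial step is the irreducibility argument giving linear independence of the $\mathbf{u}B^i$; everything else (dimension count, reduction to the Griesmer bound on a subcode of $C$) is routine and parallels the cyclic case treated earlier. The irreducibility hypothesis is essential here, since without it some nonzero $\mathbf{u}$ could lie in a proper $B$-invariant subspace, destroying the rank argument on which the bound depends.
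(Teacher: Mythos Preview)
Your proof is correct and reaches the same conclusion as the paper, but the core linear-independence step is argued differently. The paper diagonalises $B$ over the splitting field of its characteristic polynomial $p(x)$: writing $B=VDV^{-1}$ with $D=\mathrm{diag}(\lambda_1,\ldots,\lambda_k)$, it observes that for any nonzero $(\alpha_0,\ldots,\alpha_{k-1})\in\mathbb{F}_q^k$ the polynomial $f_\alpha(x)=\sum\alpha_ix^i$ has degree $<k$ over $\mathbb{F}_q$ and therefore cannot vanish at any $\lambda_i$ (whose minimal polynomial over $\mathbb{F}_q$ is the irreducible $p$ of degree $k$), so $\sum\alpha_iB^i=Vf_\alpha(D)V^{-1}$ is invertible. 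This is equivalent to your statement that no nonzero polynomial of degree $<k$ lies in $\mathrm{Ann}(\mathbf{u})$, but phrased as invertibility of the matrix $\sum\alpha_iB^i$ rather than cyclicity of each nonzero $\mathbf{u}$.

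Your annihilator/minimal-polynomial argument is more elementary: it stays entirely over $\mathbb{F}_q$, avoids extension fields and diagonalisation, and uses only Cayley--Hamilton plus the PID structure of $\mathbb{F}_q[x]$. The paper's eigenvalue route, while slightly heavier, makes the role of the irreducibility hypothesis very concrete (the eigenvalues of $B$ live in $\mathbb{F}_{q^k}$ and cannot be roots of lower-degree $\mathbb{F}_q$-polynomials). Once either version of the independence claim is established, both proofs finish identically: the rows of $\phi_r(\mathbf{c}^T)$ span an $r$-dimensional subcode of $C$, and Griesmer gives the bound on its effective length.
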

\begin{proof}
	Let $\mathbf{u}$ be a vector in $\mathbb{F}_q^k$ and define  $\mathbf{c}_{k}=\mathbf{u}(G,BG,\ldots, B^{k-1}G)$.
	Suppose the eigenvalues of $B$ are $\lambda_1,\lambda_2,\ldots,\lambda_k$, where $\lambda_i\notin \mathbb{F}_q$.
	Let $D=\mathrm{diag}(\lambda_1,\lambda_2,\ldots,\lambda_k)$ be eigenvalue matrix of $B$. 
	Then, according to the matrix theory, $B^i$ can be written in the form of  $VD^iV^{-1}$, where $V\in GL_k(\mathbb{F}_q)$.
	Thus, $\mathbf{c}_{k}=\mathbf{u}V(I,D,\ldots, D^{k-1})V^{-1}G$.

	Suppose the characteristic polynomial of $B$ is $f_B(\lambda)=1+\zeta \lambda+\cdots+\zeta_k \lambda^k$. 
	Let $f_{\mathbf{\alpha}}(\lambda)=\alpha_0+\alpha_1\lambda+\cdots+\alpha_{m-1}\lambda^{m-1}$, where $m\in [k]$ and $(\alpha_0,\alpha_1,\ldots,\alpha_{m-1})^T\in PG(m-1,q)$. 
	Then, write $f_{\mathbf{\alpha}}(\lambda)$ as $f_{\mathbf{\alpha}}(\lambda)=f_{\mathbf{\alpha}}^{(1)}(\lambda) f_{\mathbf{\alpha}}^{(2)}(\lambda)$, where all roots of $f_{\mathbf{\alpha}}^{(1)}(\lambda)$ lie in $\mathbb{F}_q$ and $f_{\mathbf{\alpha}}^{(2)}(\lambda)$ has no root in $\mathbb{F}_q$.
	Let $E=\mathrm{lcm}(k,\deg(f_{\mathbf{\alpha}}^{(2)}(\lambda)))$. Then, both  $f_B(\lambda)$ and $f_{\mathbf{\alpha}}^{(2)}(\lambda)$ can be completely decomposed over $\mathbb{F}_{q^E}$.
	As $\deg(f_{\mathbf{\alpha}}^{(2)}(\lambda))\le k-1$ and $f_B(\lambda)$ is an irreducible polynomial over $\mathbb{F}_q$, 
	all the roots of $f_B(\lambda)$ are not conjugate to the roots of $f_{\mathbf{\alpha}}^{(2)}(\lambda)$. 
	Therefore, for all $(\alpha_0,\alpha_1,\ldots,\alpha_{k-1})^T\in PG(k-1,q)$, we have $f_{\mathbf{\alpha}}(\lambda_i)\ne0$. 
	It follows that $\mathrm{rank}(\alpha_0I_k+\alpha_1D+\cdots+\alpha_{k-1}D^{k-1})=k$ and 
	$\mathrm{rank}(\phi_k(\mathbf{c}_{k}))=k$.
	
	Let $C_{\mathbf{c}_{k}}$ be the $[n,k]$ linear code spanned by the rows of $\phi_k(\mathbf{c}_{k})$. Since row vector $\mathbf{u}B^iG$ belongs to $C$, the code $C_{\mathbf{c}_{k}}$ is a subcode of $C$ and hence has minimum distance at least $d$.
%
	Due to $\mathrm{w}_k(\mathbf{c}_{k})=|\mathrm{Supp}(\phi_k(\mathbf{c}_{k}))|$, by Lemma \ref{Griesmer_Bound}, we have $\mathrm{w}_k(\mathbf{c}_{k})\ge  g_q(k,d)$. 
	Similarly, as arbitrary $r$ of $\mathbf{u}G, \mathbf{u}BG, \ldots, \mathbf{u}B^{k-1}G$ are linearly independent, for all $T\in \binom{[k]}{r}$, we have $\mathrm{w}_r(\mathbf{c}_{k}^T)\ge  g_q(r,d)$. 
	Thus, $\mathcal{C}$ has SBDH $ (\mathbf{d}_r(\mathcal{C})\ge g_q(r,d) )_{r = 1}^k$.	
\end{proof}





Now, we show that the lower bound in Theorem \ref{nonCyclic_mound} is tight in many cases and that it can help us to directly construct SBDH-optimal Type-II GRCs from existing linear codes.

\begin{example}
	In \cite{Grassltable}, the $[11,4,5]_2$ linear code $C$ has a generator matrix $G$. 
	Following Theorem \ref{nonCyclic_mound}, we select a matrix $B \in GL_4(\mathbb{F}_2)$ with characteristic polynomial $f(x) = x^4 + x + 1$, defined as follows:
	$$
	G=\left(\begin{array}{ccccccccccc}
		1&0&0&1&0&1&0&1&1&0&1\\
		0&1&0&1&1&0&0&1&0&1&1\\
		0&0&1&1&1&1&0&0&1&1&1\\
		0&0&0&0&0&0&1&1&1&1&1\\
	\end{array}\right), B=\left(\begin{array}{cccc}
		0&1&0&0\\
		0&0&1&0\\
		0&0&0&1\\
		1&1&0&0\\
	\end{array}\right).$$
	
	Then, by Magma \cite{bosma1997magma}, one can verify that $(G,BG,\ldots, B^{k-1}G)$ generates an $[(11,4),4,11]_2$ Type-II GRC $\mathcal{C}$ with SBDH $(\mathbf{d}_1(\mathcal{C})=5,\mathbf{d}_2(\mathcal{C})=8,\mathbf{d}_3(\mathcal{C})=10,\mathbf{d}_4(\mathcal{C})=11 )$.	 
	By Lemma \ref{Griesmer_Bound}, no $[(11,r),4,\mathbf{d}_4(\mathcal{C})+1]_2$ code exists, so $\mathcal{C}$ is a SBDH-optimal Type-II GRC.
	\hfill $\square$ \end{example}

In \cite{solomon1965algebraically}, Solomon and Stiffler proposed a class of optimal linear codes obtained by puncturing Simplex codes, as shown in the following lemma.

\begin{lemma}[Solomon and Stiffler code, \cite{solomon1965algebraically,Belov1974}]\label{SS_Code}
	Let $k > k_1 > k_2 >  \cdots > k_t \ge 1$ and $s$ be positive integers that satisfy $\sum _{i=1}^{\min \{s+1, t\}} k_{i} \leq s k$. Then, there exists a Griesmer code with parameters 
	\begin{equation}
		\left[sN_k-\sum_{i=1}^{t}N_{k_i}, k, s q^{k-1}-\sum_{i=1}^{t} q^{k_{i}-1}\right]_q.
	\end{equation}
\end{lemma}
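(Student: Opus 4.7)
The plan is to give a projective-geometric construction that realizes the code as a carefully pruned multiset of points of $PG(k-1,q)$. First I would identify the $k$-dimensional Simplex code with the multiset of all $N_k$ projective points, each with multiplicity one. Taking $s$ disjoint copies produces an $[sN_k,k,sq^{k-1}]_q$ code in which every point of $PG(k-1,q)$ appears with multiplicity exactly $s$. I then choose a family of flats $F_1,\ldots,F_t$ of projective dimensions $k_1-1,\ldots,k_t-1$ and delete the columns corresponding to their point sets from this multiset. Since a $(k_i-1)$-flat contains $N_{k_i}$ points, the resulting code has length $sN_k-\sum_{i=1}^{t} N_{k_i}$, as required.

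Next I would compute the minimum distance via a hyperplane count, exploiting the duality between codewords of a projective linear code and hyperplanes of $PG(k-1,q)$. For any hyperplane $H$, the intersection $|H\cap F_i|$ equals $N_{k_i}$ when $F_i\subseteq H$ and $N_{k_i-1}$ otherwise; hence the number of surviving columns contained in $H$ is at most $sN_{k-1}-\sum_{i=1}^{t} N_{k_i-1}$, with equality whenever $H$ avoids every $F_i$. Subtracting from the length yields a minimum weight of at least
\[
\bigl( sN_k-\textstyle\sum_i N_{k_i}\bigr)-\bigl(sN_{k-1}-\sum_i N_{k_i-1}\bigr)=sq^{k-1}-\sum_{i=1}^{t} q^{k_i-1},
\]
using $N_k-N_{k-1}=q^{k-1}$ and $N_{k_i}-N_{k_i-1}=q^{k_i-1}$. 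A short arithmetic check confirms that this quantity coincides with the Griesmer sum $g_q\bigl(k,\,sq^{k-1}-\sum_i q^{k_i-1}\bigr)$, so the resulting code automatically meets the Griesmer bound once it is shown to exist.

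The main obstacle is the existence step: guaranteeing that the flats $F_1,\ldots,F_t$ can be chosen so that (i) the point-deletion is legal, meaning no point is removed more often than its multiplicity $s$ permits, and (ii) there is at least one hyperplane that contains none of the $F_i$, so the distance bound above is attained. My plan is to construct the $F_i$ in order of decreasing dimension by a greedy scheme that assigns each flat to one of the $s$ layers of the Simplex multiset and arranges the flats within a single layer to be pairwise disjoint. Since two projective flats of dimensions $k_i-1$ and $k_j-1$ inside $PG(k-1,q)$ are disjoint exactly when $k_i+k_j\le k$, combining this pairwise constraint with a pigeonhole across the $s$ layers leads to the stated hypothesis $\sum_{i=1}^{\min\{s+1,t\}} k_i\le sk$ as the precise feasibility threshold. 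Because this combinatorial packing is the delicate part and is already laid out in Belov's refinement \cite{Belov1974} of the Solomon--Stiffler construction, I would either reproduce that greedy placement in detail or invoke the cited literature to conclude.
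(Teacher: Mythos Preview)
The paper does not prove this lemma at all: it is quoted verbatim from the literature with the attribution \cite{solomon1965algebraically,Belov1974} and used as a black box in the subsequent Theorem~\ref{cor_SS-deduce}. Your proposal is a faithful sketch of the classical Solomon--Stiffler/Belov argument (puncture $s$ copies of the Simplex code by deleting the point-sets of a packable family of flats, then compute the minimum weight by a hyperplane count), so there is nothing to compare against and no discrepancy to report.
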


According to Theorem \ref{nonCyclic_mound} and Lemma \ref{SS_Code}, we have the following theorem.

\begin{theorem} \label{cor_SS-deduce} 
	Let $k > k_1\ge 1$ and $0\le \Delta \le q-1$ be integers. 
	Let $C$ be an $[\frac{q^{k}-q^{k_1}}{q-1}-\Delta,k,q^{k-1}-q^{k_1-1}-\Delta]_q$ linear code with generator matrix $G$. Then, $\mathcal{G}_k=(G,BG,\ldots, B^{k-1}G)$ generates an SBDH-optimal $\left [(\frac{q^{k}-q^{k_1}}{q-1}-\Delta,k),k \right ]_{q}$ Type-II GRC $\mathcal{C}$ with SBDH  $ 
	\left(\mathbf{d}_r(\mathcal{C})\ge g_q\left(r,q^{k-1}-q^{k_1-1}-\Delta\right) \right)_{r = 1}^k$, where $B\in GL_k(\mathbb{F}_q)$ has no eigenvalue lying in $\mathbb{F}_q$.
\end{theorem}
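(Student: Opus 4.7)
The plan is to combine Theorem~\ref{nonCyclic_mound} with the Griesmer bound of Lemma~\ref{Griesmer_Bound}. A code $C$ with the stated parameters is furnished by Lemma~\ref{SS_Code} applied with $s=t=1$ and, when $\Delta\ge 1$, followed by puncturing $\Delta$ carefully chosen coordinates; a routine Griesmer-bound calculation shows that each puncturing step reduces both the length and the minimum distance by exactly one, so the resulting $C$ is itself a Griesmer code with $n=g_q(k,d)$ where $d=q^{k-1}-q^{k_1-1}-\Delta$.

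With such a $C$ in hand, I would apply Theorem~\ref{nonCyclic_mound} to the pair $(C,B)$. The hypothesis that $B$ has no eigenvalue in $\mathbb{F}_q$ ensures, via the same rank argument used there (with the minimal polynomial of $B$ playing the role of its characteristic polynomial), that for every nonzero $f_\alpha\in\mathbb{F}_q[\lambda]$ of degree at most $k-1$ the matrix $f_\alpha(B)$ is invertible. Hence $\mathcal{G}_k=(G,BG,\ldots,B^{k-1}G)$ generates an $[(n,k),k]_q$ Type-II GRC $\mathcal{C}$ with $\mathbf{d}_r(\mathcal{C})\ge g_q(r,d)$ for every $r\in[k]$, exactly as claimed.

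For SBDH-optimality I would assume, toward a contradiction, the existence of an $[(n,r),k,\,g_q(r,d)+1]_q$ Type-II GRC and apply Lemma~\ref{Griesmer_Bound}: such a code must satisfy
\[
nN_r\;\ge\;g_{q,r}\!\bigl(k,\,g_q(r,d)+1\bigr)\;=\;g_q\!\bigl(k,\,q^{r-1}(g_q(r,d)+1)\bigr).
\]
Using the identity $q^{r-1}\lceil x/q^j\rceil=\lceil q^{r-1}x/q^j\rceil$ valid for $j\le r-1$, the right-hand side splits as $(g_q(r,d)+1)N_r+\sum_{j=1}^{k-r}\lceil (g_q(r,d)+1)/q^j\rceil$, which can then be compared with $nN_r=g_q(k,d)\,N_r$ after substituting the Solomon--Stiffler form of $d$.

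The main obstacle is this last arithmetic comparison: the gap $g_q(k,q^{r-1}(g_q(r,d)+1))-g_q(k,d)\,N_r$ is governed by the residues of $d$ modulo powers of $q$, so proving that it is strictly positive for every $r\in[k-1]$ and every admissible pair $(k_1,\Delta)$ requires a case split inside each ceiling according to whether $i<k_1$ or $i\ge k_1$, with particular care at the boundary case $\Delta=q-1$. Once this numerical lemma is in place, the SBDH-optimality of $\mathcal{C}$, together with its explicit parameters, follows at once.
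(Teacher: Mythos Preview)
Your overall plan coincides with the paper's: invoke Theorem~\ref{nonCyclic_mound} for the lower bound $\mathbf{d}_r(\mathcal{C})\ge g_q(r,d)$, then rule out $[(n,r),k,g_q(r,d)+1]_q$ codes via the block Griesmer bound of Lemma~\ref{Griesmer_Bound}. The paper organizes the arithmetic by computing $nN_m-g_{q,m}(k,D)$ with $D=g_q(m,d)$ explicitly (splitting into the cases $\Delta=0$ and $1\le\Delta\le q-1$, and within each into $k_1\le m$ versus $k_1\ge m+1$), showing this gap is strictly less than $N_m$, and then checking separately that $g_{q,m}(k,D+1)-g_{q,m}(k,D)\ge N_m$. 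Your proposed direct comparison of $nN_r$ with $g_{q,r}(k,D+1)$ is the same computation rearranged, and the case split you anticipate matches the paper's.

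There is, however, a genuine error in your handling of the hypothesis ``$B$ has no eigenvalue in $\mathbb{F}_q$''. This condition does \emph{not} guarantee that $f_\alpha(B)$ is invertible for every nonzero $f_\alpha\in\mathbb{F}_q[\lambda]$ of degree at most $k-1$. For example, with $k=4$ take $B=\mathrm{diag}(A,A)$ where $A\in GL_2(\mathbb{F}_q)$ has an irreducible characteristic polynomial $p$ of degree~$2$: then $B$ has no $\mathbb{F}_q$-eigenvalue, yet $p(B)=0$ with $\deg p=2\le k-1$, so the rank argument collapses. What the argument of Theorem~\ref{nonCyclic_mound} actually needs is that the minimal polynomial of $B$ be irreducible of degree $k$ (equivalently, that the characteristic polynomial be irreducible), which is strictly stronger than the absence of rational eigenvalues. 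The paper's own proof simply cites Theorem~\ref{nonCyclic_mound} and thus tacitly uses this stronger hypothesis; your parenthetical ``with the minimal polynomial of $B$ playing the role of its characteristic polynomial'' does not bridge the gap. You should either strengthen the hypothesis on $B$ to match Theorem~\ref{nonCyclic_mound}, or acknowledge that the stated condition is insufficient as written.
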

\begin{proof}
	We denote $d=q^{k-1}-q^{k_1-1}-\Delta$ and $n=\frac{q^{k}-q^{k_1}}{q-1}-\Delta$.
	By Theorem \ref{nonCyclic_mound}, $\mathcal{C}$ has SBDH $ (\mathbf{d}_r(\mathcal{C})\ge g_q(r,d) )_{r = 1}^k $.	
	It is to prove that, for $2\le m\le k$, $[(n,m),k,g(m,d)]_q$ linear block metric codes are  optimal.
	Let $m\le k$ and $D=g(m,d)$. 
	Since $C$ is a Grismer code, we have $n=g(k,d)$. 
	  Now, we prove that $[(n,m),k,g(m,d)]_q$ linear block metric codes are  optimal in two cases.
	  
	  \textbf{Case 1:} If $\Delta=0$,
	we calculate

	\begin{align}
		\begin{array}{rl}
			nN_m-g_{q,m}(k,D)=&\left(\sum\limits_{i=0}^{k-1}
			\left\lceil  \frac{d}{q^i}\right\rceil \right) \left(\sum\limits_{i=0}^{k-1}
			\left\lceil  \frac{q^{m-1}}{q^i}\right\rceil \right)-
			\sum\limits_{i=0}^{k-1}
			\left\lceil  \frac{D}{q^i}\right\rceil				 \\
			=& \sum\limits_{i=0}^{k-m-1} \left(
			q^{m-1} \left\lceil  \frac{d}{q^{m+i}}\right\rceil+ q^{m-2}\left\lceil  \frac{d}{q^{m+i}}\right\rceil +\cdots+ \left\lceil  \frac{d}{q^{m+i}}\right\rceil\right)\\
			&-
				\sum\limits_{i=1}^{k-m} 
				\left\lceil
				 \frac{1}{q^i}\left( d+\left\lceil  \frac{d}{q}\right\rceil+\cdots +\left\lceil  \frac{d}{q^{m-1}}\right\rceil \right)
				 \right\rceil\\	 
		=&\left\{
				\begin{array}{ll}
					\sum\limits_{i=1}^{\min\{ k_1-1,k-k_1\}} 
					\left\lfloor
					\frac{N_{k_1}}{q^i}
					\right\rfloor	,& k_1\le m,\\
					\sum\limits_{i=1}^{\min\{ m-1,k-k_1\}} 
					\left\lfloor
					\frac{N_{m}}{q^i}
					\right\rfloor,	& k_1\ge m+1.
				\end{array}\right.
				\\
		\end{array}
	\end{align}

Given that $\max \left\{ \sum\limits_{i=1}^{\min\{ k_1-1,k-k_1\}} 
\left\lfloor
\frac{N_{k_1}}{q^i}
\right\rfloor, \sum\limits_{i=1}^{\min\{ m-1,k-k_1\}} 
\left\lfloor
\frac{N_{m}}{q^i}
\right\rfloor\right\} =\sum\limits_{i=1}^{\min\{ m-1\}} 
iq^{m-i-1} < N_m$,  and thus we obtain $ nN_m-g_{q,m}(k,D)< N_m$.
Due to 
		\begin{equation}
		\begin{array}{rl}
			g_{q,m}(k,D+1)-g_{q,m}(k,D)=&\sum\limits_{i=0}^{k-1}\left\lceil  \frac{q^{m-1}(D+1)}{q^{i}}\right\rceil-
			\sum\limits_{i=0}^{k-1}\left\lceil  \frac{q^{m-1}D}{q^{i}}\right\rceil\\
			=&\left\{
			\begin{array}{ll}
				N_m+k^{\prime},& q\mid D,\\
				N_m,	& q\nmid D,
			\end{array}\right.
			\\
		\end{array}
	\end{equation}
where $k^{\prime}$ is the maximum integer such that $q^{k^{\prime}}\mid D$,  we get $nN_m< g_{q,m}(k,D+1)$.
Thus, $\mathcal{C}$ is a SBDH-optimal Type-II GRC.

 \textbf{Case 2:} If $1\le \Delta\le q-1$,  based on above results, we have 
 \begin{equation}
 	\begin{array}{rl}
 		nN_m-g_{q,m}(k,D)=& \left\{
 		\begin{array}{ll}
 			\sum\limits_{i=1}^{\min\{ k_1-1,k-k_1\}} 
 			\left\lfloor
 			\frac{N_{k_1}+\Delta}{q^i}
 			\right\rfloor	,& k_1\le m,\\
 			\sum\limits_{i=1}^{\min\{ m-1,k-k_1\}} 
 			\left\lfloor
 			\frac{N_{m}}{q^i}
 			\right\rfloor,	& k_1\ge m+1.
 		\end{array}\right.
 		\\
 	\end{array}
 \end{equation}

For $k_1\ge m+1$, $nN_m-g_{q,m}(k,D)$ has the same value as Case 1, it is  optimal.
For $k_1\le m$, we need to compute the value of $ 			\sum\limits_{i=1}^{\min\{ k_1-1,k-k_1\}} 
\left\lfloor
\frac{N_{k_1}+\Delta}{q^i}
\right\rfloor$. 
If $q=2$, $ \sum\limits_{i=1}^{\min\{ k_1-1,k-k_1\}} 
\left\lfloor
\frac{N_{k_1}+\Delta}{q^i}
\right\rfloor= \sum\limits_{i=1}^{\min\{ k_1-1,k-k_1\}} 2^{i-1}\le N_m-1$.
If $q>2$,  $\sum\limits_{i=1}^{\min\{ k_1-1,k-k_1\}} 
\left\lfloor
\frac{N_{k_1}+\Delta}{q^i}
\right\rfloor= \sum\limits_{i=1}^{\min\{ k_1-1,k-k_1\}} 
\left\lfloor
\frac{N_{k_1}}{q^i}
\right\rfloor + \left\lfloor
\frac{1+\Delta}{q^i}
\right\rfloor < N_m$.
 Thus, for $k_1\le m$, this code is also  optimal.
The calculations confirm that $\mathcal{C}$ is also a SBDH-optimal Type-II GRC.
The theorem follows.
\end{proof}


\begin{example}
	Setting \( q = 2 \), \( k = 5 \), and \( k_1 = 3 \), according to Theorem \ref{cor_SS-deduce}, we obtain a SBDH-optimal Type-II GRC \( \mathcal{C}  \) with parameters \( [(24,5), 5, 24]_2 \). This code features a SBDH  \( (\mathbf{d}_1 = 12, \mathbf{d}_2 = 18, \mathbf{d}_3 = 21, \mathbf{d}_4 = 23, \mathbf{d}_5 = 24 ) \), which corresponds to five  optimal codes with parameters \( [24, 5, 12]_2 \), \( [(24,2), 5, 18]_2 \), \( [(24,3), 5, 21]_2 \), \( [(24,4), 5, 23]_2 \), and \( [(24,5), 5, 24]_2 \), respectively.
In particular, by Lemma \ref{Griesmer_Bound}, there are no linear codes that have parameters $[24,3,18]_4$ and $[24,2,23]_{16}$, so the codes  \( [(24,2), 5, 18]_2 \) and \( [(24,4), 5, 23]_2 \) outperform linear codes. 
As for the \( [(24,5), 5, 24]_2 \) code, the Singleton bound is achieved.
Therefore, the Type-II GRC  \( \mathcal{C}  \) not only performs optimally in each transmission under  block metric, but also outperforms linear codes over $\mathbb{F}_{q^m}$ in error-correction.
\hfill $\square$ \end{example}

\subsection{Construction of Type-II generalized repetition codes from quasi-cyclic codes}

In this subsection, we propose a method to construct good Type-II quasi-cyclic GRCs via computer search.

\begin{theorem} \label{The_lowerbound_QC}
	Suppose $\mathcal{C}$ is a quasi-cyclic code with generator  $\mathbf{g}(x)=(f_1(x)g(x),g(x)f_2(x),\ldots,g(x)f_{m}(x))$, where $1\le m\le n-\deg(g(x))$, $g(x)$ and $f_i(x)$ are polynomials in $\mathbb{R}_{q,n}$, such that $g(x)\mid(x^n-1)$ and $\gcd(f_1(x),\ldots,f_{m}(x),x^n-1)=1$. 
	Let $C$ be a cyclic code of length $n$ generated by $g(x)$.  
	Under the conditions that:
	\begin{enumerate}
		\item $\deg(f_1(x)) < \deg(f_2(x)) < \cdots < \deg(f_{m}(x))$,
		\item $\gcd\left(f_i(x), \frac{x^n - 1}{g(x)}\right) = 1$ for all $i \in [m]$, 
	\end{enumerate}
	the SBDH \((\mathbf{d}_r(\mathcal{C}))_{r = 1}^m\) and SHDH \((\underline{\mathbf{d}}_r(\mathcal{C}))_{r = 1}^m\) of $\mathcal{C}$ satisfy the following lower bounds:
	$$
	\mathbf{d}_r(\mathcal{C}) \geq g_q(r, \mathrm{d}(C)) \quad \text{and} \quad \underline{\mathbf{d}}_r(\mathcal{C}) \geq r \cdot \mathrm{d}(C).
	$$ 
\end{theorem}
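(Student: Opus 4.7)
The plan is to mirror the strategy of Theorem~\ref{Theo_SBDH_lower_bounds} in the $1$-generator quasi-cyclic setting: write every codeword of $\mathcal{C}$ as $\mathbf{c}=u(x)\cdot\mathbf{g}(x)$ with $i$-th block $\mathbf{c}_i=u(x)f_i(x)g(x)$, reduce both bounds to properties of the $\mathbb{F}_q$-subcode of $C=\langle g(x)\rangle$ spanned by the chosen blocks, and finally apply the classical Griesmer bound. Throughout, set $h(x)=(x^n-1)/g(x)$ for brevity.

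First I would establish that every block of a nonzero codeword is nonzero. The block $\mathbf{c}_i$ vanishes in $\mathbb{R}_{q,n}$ exactly when $h(x)\mid u(x)f_i(x)$; condition~(2), $\gcd(f_i(x),h(x))=1$, collapses this to $h(x)\mid u(x)$, a condition independent of $i$. Consequently either every $\mathbf{c}_i$ vanishes (and $\mathbf{c}=\mathbf{0}$) or every $\mathbf{c}_i$ is a nonzero codeword of $C$, in which case $\mathrm{w}(\mathbf{c}_i)\ge \mathrm{d}(C)$. For any $T=\{i_1,\dots,i_r\}\in\binom{[m]}{r}$ this gives $\mathrm{w}(\mathbf{c}^T)=\sum_{j=1}^{r}\mathrm{w}(\mathbf{c}_{i_j})\ge r\cdot\mathrm{d}(C)$, establishing $\underline{\mathbf{d}}_r(\mathcal{C})\ge r\cdot\mathrm{d}(C)$.

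For the sub-block bound I would introduce the $\mathbb{F}_q$-span $C^*_{\mathbf{c},T}:=\langle \mathbf{c}_{i_1},\ldots,\mathbf{c}_{i_r}\rangle\subseteq C$. Since a coordinate contributes to $\mathrm{w}_r(\mathbf{c}^T)$ precisely when at least one $\mathbf{c}_{i_j}$ is nonzero there, $\mathrm{w}_r(\mathbf{c}^T)=n(C^*_{\mathbf{c},T})$. The key claim is $\dim_{\mathbb{F}_q}C^*_{\mathbf{c},T}=r$: a relation $\sum_j \alpha_j\mathbf{c}_{i_j}=\mathbf{0}$ amounts to $p(x)u(x)g(x)\equiv 0\pmod{x^n-1}$ with $p(x)=\sum_j\alpha_j f_{i_j}(x)$. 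Condition~(1) guarantees $p(x)\ne 0$ as a polynomial, since the strictly increasing degrees prevent cancellation of the leading term $\alpha_{j^*}f_{i_{j^*}}(x)$ for the largest $j^*$ with $\alpha_{j^*}\ne 0$; condition~(2) is then used to transfer this polynomial nonvanishing through $\mathbb{R}_{q,n}/\langle h(x)\rangle$ and force $h(x)\mid u(x)$, contradicting $\mathbf{c}\ne\mathbf{0}$. With linear independence in hand, $C^*_{\mathbf{c},T}$ is an $[n(C^*_{\mathbf{c},T}),\,r,\,\ge\mathrm{d}(C)]_q$ linear subcode of $C$, and Lemma~\ref{Griesmer_Bound} with $m=1$ yields $n(C^*_{\mathbf{c},T})\ge g_q(r,\mathrm{d}(C))$, which is exactly $\mathbf{d}_r(\mathcal{C})\ge g_q(r,\mathrm{d}(C))$.

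The hardest step will be the linear-independence claim. The relation $p(x)u(x)g(x)\equiv 0\pmod{x^n-1}$ is equivalent to $h(x)\mid p(x)u(x)$, which in principle can be satisfied by $u(x)$ a proper multiple of $h(x)/\gcd(h(x),p(x))$; the necessity of the cap $m\le\kappa(g(x))$ in Theorem~\ref{Theo_SBDH_lower_bounds} reflects exactly this pitfall. The bulk of the technical work therefore lies in showing that, under conditions~(1) and~(2) simultaneously, every nontrivial $\mathbb{F}_q$-combination $p(x)$ of the $f_{i_j}(x)$ is coprime---modulo the divisor structure of $u(x)$---to $h(x)$. Once that rank statement is secured, the block non-vanishing, the identification of $\mathrm{w}_r(\mathbf{c}^T)$ with the effective length of a subcode, and the concluding Griesmer step are all routine.
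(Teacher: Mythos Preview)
Your argument for the sub-Hamming bound $\underline{\mathbf{d}}_r(\mathcal{C})\ge r\cdot\mathrm{d}(C)$ is correct and matches the paper's. The gap is in the sub-block bound, precisely at the step you yourself flag as ``hardest''. You propose to show that for every nonzero $\mathbf{c}$ and every $T$ the blocks $\mathbf{c}_{i_1},\dots,\mathbf{c}_{i_r}$ are $\mathbb{F}_q$-linearly independent, by arguing that conditions~(1) and~(2) force any nontrivial $p(x)=\sum_j\alpha_jf_{i_j}(x)$ to be coprime to the relevant cofactor of $h(x)$. This is simply false: condition~(2) guarantees each $f_i$ is a unit modulo $h(x)$, but says nothing about $\mathbb{F}_q$-linear combinations of the $f_i$. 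Concretely, take $q=2$, $n=7$, $g(x)=x^3+x+1$, $h(x)=(x+1)(x^3+x^2+1)$, $f_1=1$, $f_2=x$ (all hypotheses hold with $m=2$). For $u(x)=x^3+x^2+1$ one gets $p(x)=f_1+f_2=x+1$, hence $p(x)u(x)=h(x)$ and $\mathbf{c}_1=\mathbf{c}_2$. So the rank of $\phi_r(\mathbf{c}^T)$ can drop below $r$, and no amount of ``technical work'' on conditions~(1) and~(2) will rule this out.

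The paper does not attempt to prove independence in all cases. Instead it splits on $\gcd(u(x),h(x))$. When this gcd is $1$, a dependence $h(x)\mid p(x)u(x)$ collapses to $h(x)\mid p(x)$, and then the degree constraints coming from condition~(1) together with $m\le\deg h(x)$ force $p=0$; this is the regime where your outline goes through. When $\gcd(u(x),h(x))\ne 1$, the paper abandons the rank argument entirely and instead argues that the span $C_1=\langle\mathbf{c}_{i_1},\dots,\mathbf{c}_{i_r}\rangle$ has effective length $n$: each $\mathbf{c}_{i_j}=u(x)f_{i_j}(x)g(x)$ lies in (and generates, as an ideal) the same nonzero cyclic subcode $\langle g(x)\gcd(u(x),h(x))\rangle$ of $C$, and the paper uses this cyclic structure to conclude $n(C_1)=n\ge g_q(r,\mathrm{d}(C))$. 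In the counterexample above this is exactly what happens: $\mathbf{c}_1=\mathbf{c}_2$ is the all-ones word, so $n(C_1)=7\ge 5=g_q(2,3)$. Your proposal is missing this second branch; without it the argument cannot close.
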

\begin{proof}
	Let $\mathbf{c}=(\mathbf{c}_1,\mathbf{c}_2,\ldots,\mathbf{c}_{m})$ be a codeword of $\mathcal{C}$ with the associated polynomial  $\mathbf{c(x)}=a(x)\mathbf{g}(x)=(c_1(x),c_2(x),\ldots,c_m(x))$, where $a(x)\in \mathbb{R}_{q,n}$. 
	Since $\gcd\left(f_i(x), \frac{x^n - 1}{g(x)}\right) = 1$, for all $i \in [m]$, polynomial $c_i(x)=a(x)f_i(x)g(x)$ is associated with the same cyclic code $C$. 
	Thus, we have $\underline{\mathbf{d}}_r(\mathcal{C}) \geq r \cdot \mathrm{d}(C)$. 
	
	Let $g_i^{\prime}(x)=f_i(x)g(x) \pmod{ x^n-1}$. 
	As $\deg(f_1(x)) < \deg(f_2(x)) < \cdots < \deg(f_{m}(x))$ and $1\le m\le n-\deg(g(x))$,  we have that  $g_1^{\prime}(x), g_2^{\prime}(x),\ldots,g_m^{\prime}(x)$ are linearly independent polynomials. 
	Thus, $\mathrm{rank}(\phi_m(\mathbf{c}))=m$ for $\gcd\left(a(x),\frac{x^n-1}{g(x)} \right)=1$. 
	It follows that $\mathbf{d}_r(\mathcal{C}) \geq n(C_1)\ge g_q(r, \mathrm{d}(C))$, where $C_1$ is an $r$-dimensional linear code that generated  by $r$ (of the) vectors among $\mathbf{c}_1, \ldots, \mathbf{c}_m$.

	If $\gcd\left(a(x),\frac{x^n-1}{g(x)} \right)\ne 1$, then $\mathrm{rank}(\phi_m(\mathbf{c}))<m$ and $\phi_m(\mathbf{c})$ defines a cyclic subcode of $C$, which has effective length $n$. 
	Therefore, if the dimension of $C_1$ is less than $r$, then $C_1$ is a cyclic subcode of $C$ and $n(C_1)=n$. 
	Otherwise, we get $\mathbf{d}_r(\mathcal{C}) \geq n(C_1)\ge g_q(r, \mathrm{d}(C))$.
	In summary, we finish the proof. 
\end{proof}

Theorem \ref{The_lowerbound_QC} establishes lower bounds on the $r$-th minimum sub-block and sub-Hamming distances of special quasi-cyclic codes. 
This bound facilitate the search for high-performance Type-II quasi-cyclic GRCs.  
All Type-I regular GRCs derived from Theorems \ref{Theo_SBDH_lower_bounds}–\ref{Theo_QC_Type-I} are bounded by the upper bound stated in Theorem \ref{Theo_MDS_bound}.  
We now provide an example to show that Type-II quasi-cyclic GRCs can surpass the bound in Theorem \ref{Theo_MDS_bound} and attain the Singleton bound described in Lemma \ref{Singleton_bound}.

 
\begin{example}
	Let $F_{11}=\{0,1,\ldots,10\}$ and $n=10$. Then, the $[10,6,5]_{11}$ Reed-Solomon code has generator $g(x)=x^4 + 3x^3 + 5x^2 + 8x + 1$. Let  $f_1(x)=3x^6 + 8x^5 + 4x^4 + x^2 + 7x + 5$, $f_2(x)= 10x^6 + 5x^5 + 7x^4 + 7x^2 + 9x +2$, and $f_3(x)=9x^6 + 4x^5 + 7x^4 + 6x^2 + 6$. 
	Then, $\mathbf{g}(x)=(f_1(x)g(x),f_2(x)g(x),f_3(x)g(x))$ generates a $[30,6,20]_{11}$ quasi-cyclic code $\mathcal{C}$.  
	Since $\gcd(f_i(x),x^{10}-1)=1$ for $i\in [3]$, we have $\mathbf{d}_1(\mathcal{C})=5$. 
 
Moreover, it is easy to use Magma \cite{bosma1997magma} to verify that $\mathcal{C}$ has SBDH and SHDH $(\mathbf{d}_1(\mathcal{C})=5,\mathbf{d}_2(\mathcal{C})=8,\mathbf{d}_3(\mathcal{C})=9)$ and $(\underline{\mathbf{d}}_1(\mathcal{C})=5,\underline{\mathbf{d}}_2(\mathcal{C})=12,\underline{\mathbf{d}}_3(\mathcal{C})=20)$. 
Therefore, the sub-block code of $\mathcal{C}$ for $r=2$ has parameters $[(10,2),6,8]_{11}$, which attains the Singleton bound in Lemma \ref{Singleton_bound}. 
According to the definition in \cite{ball2025griesmer},  $3$-th sub-block code with parameters $[(10,3),6,9]_{11}$ is also a fractional MDS code. 
Thus, $\mathcal{C}$ is (fractional) MDS under block metrics for $r=1,2,3$.  
Moreover, one can verify that $\mathcal{C}$ beats the upper bound in Theorem \ref{Theo_MDS_bound} for both $r=2,3$. 
\hfill $\square$
\end{example}

 In classical coding theory, the primary performance metric for an \([n,k]_q\) linear code \(C\) is its minimum Hamming distance \(\mathrm{d}(C)\). The construction paradigm of Type-II GRCs, however, differs from that of classical linear codes. For an \([(n,m),k]_q\) Type-II GRC \(\mathcal{C}\), the core performance criteria are its SBDH \((\mathbf{d}_r(\mathcal{C}))_{r = 1}^m\) and SHDH \((\underline{\mathbf{d}}_r(\mathcal{C}))_{r = 1}^m\).
 Together, they define a multi-objective optimization problem.  
Guided by this framework, we construct many binary Type-II quasi-cyclic GRCs for $m=2$ with excellent parameters using the algebraic software Magma \cite{bosma1997magma}.  
 Table \ref{Table: Good_Type-II GRC} lists the construction details of these codes, where all generator polynomials are represented as hexadecimal coefficient vectors. 
 For example, the coefficient vector of polynomial $f(x)=x^5 + x^4 + x^3 + x + 1$ is $\mathbf{f}=(1, 1, 0, 1, 1, 1)$. 
 Since $\mathbf{f}$ has length $6$, we need to add two zeros to the beginning of $\mathbf{f}$ to obtain $\mathbf{f}^{\prime}=(0,0,1, 1, 0, 1, 1, 1)$. The hexadecimal value of $\mathbf{f}^{\prime}$ is $(3,7)$.

%
%

\begin{table*}[!]
	\caption{Good Type-II quasi-cyclic GRCs with $m=2$ and generator $\mathbf{g}(x)=(g_1(x),g_2(x))$}
	\label{Table: Good_Type-II GRC}
	\centering
	\resizebox{\textwidth}{!}{
		\begin{threeparttable}
			\begin{tabular}{ccccccc}
				\toprule
				  No.   &   Parameters    &                       $g_1(x)$                       &                       $g_2(x)$                       & $\mathbf{d}_1(\mathcal{C})$ & $\mathbf{d}_2(\mathcal{C})$ & $\underline{\mathbf{d}}_2(\mathcal{C})$ \\ \midrule
				  1    & $[(31,2),6]_2$  &              $ ( 2, 6, 3, C, A, D, D )$              &            $ ( 4, E, 1, A, 9, 1, 7, D )$             &            $15$             &            $23$             &                  $31$                   \\
				  2    & $[(31,2),10]_2$ &               $ ( 3, A, 6, B, 7, F) $                &                     $(6, B, 3)$                      &            $12$             &            $20$             &                  $24$                   \\
				  3    & $[(31,2),11]_2$ &               $ ( 1, 3, 3, C, B, 1) $                &                     $(4, D, 7)$                      &            $11$             &            $19$             &                  $24$                   \\
				  4    & $[(31,2),15]_2$ &                 $ ( 1, B, 1, 4, 9) $                 &                    $(3, E, 9, F)$                    &             $8$             &            $16$             &                  $20$                   \\
				  5    & $[(31,2),20]_2$ &                    $ (A, 2, B) $                     &                 $(1, 8, 1, 0, 8, B)$                 &             $6$             &            $14$             &                  $18$                   \\
				  6    & $[(31,2),21]_2$ &              $ ( 0, 0, 2, C, B, 9, B )$              &            $ ( 0, 9, 2, 0, 2, 1, D, 9 )$             &             $5$             &            $13$             &                  $17$                   \\
				  7    & $[(31,2),25]_2$ &            $ ( 1, B, D, 7, 0, 3, 3, 9 )$             &            $ ( 1, C, F, 5, 8, B, 2, 9 )$             &             $4$             &            $11$             &                  $16$                   \\
				  8    & $[(31,2),26]_2$ &            $ ( 1, 5, 7, C, F, 5, 7, F )$             &            $ ( 1, 3, 0, 2, 7, 3, F, 1 )$             &             $3$             &            $11$             &                  $15$                   \\
				  9    & $[(33,2),10]_2$ &             $( A, 0, A, 9, D, F, B, 9 )$             &             $( 7, 4, 0, 5, C, 9, 6, 9 )$             &            $12$             &            $21$             &                  $26$                   \\
				  10    & $[(33,2),13]_2$ &             $( 0, 9, 6, 5, 5, B, 7, 1 )$             &             $( 0, 8, 0, 8, 3, 9, 2, 7 )$             &            $10$             &            $18$             &                  $24$                   \\
				  11    & $[(33,2),20]_2$ &             $( 1, 6, 6, B, 2, 6, E, 5 )$             &             $( 2, D, 7, 5, E, 8, D, F )$             &             $6$             &            $15$             &                  $18$                   \\
				  12    & $[(33,2),22]_2$ &             $( 8, 5, 9, 0, E, 7, D, 9 )$             &             $( 0, 4, 8, B, C, 3, 5, D )$             &             $6$             &            $14$             &                  $18$                   \\
				  13    & $[(35,2),25]_2$ &           $( 4, 3, B, 2, 4, 8, 6, 7, B )$            &            $( 1, 5, 7, B, 5, A, 1,4, 3 )$            &             $4$             &            $14$             &                  $18$                   \\
				  14    & $[(35,2),27]_2$ &           $( 0, 2, E, 9, 2, 5, 3, 3, 7 )$            &            $( 2, B, C, 2, C, E, 6,A, 3 )$            &             $4$             &            $14$             &                  $16$                   \\
				  15    & $[(35,2),28]_2$ &           $( 3, 4, 9, 6, 4, A, 5, A, D )$            &            $( 3, 2, C, F, 2, 4, 4,1, F )$            &             $4$             &            $13$             &                  $16$                   \\
				  16    & $[(39,2),13]_2$ &          $( 3, C, 8, 0, B, 6, 1, 5, 6, F )$          &          $( 3, 3, 8, F, 3, 8,9, 5, 2, 7 )$           &            $12$             &            $23$             &                  $31$                   \\
				  17    & $[(39,2),24]_2$ &          $( 1, 8, D, D, 3, 3, 6, B, 1, F )$          &          $( 3, 4, 1, 6, A, 7,D, D, 0, 1 )$           &             $6$             &            $17$             &                  $20$                   \\
				  18    & $[(39,2),26]_2$ &          $( 1, 1, 6, 2, 6, 9, 2, A, B, D )$          &          $( 3, 5, B, 5, 6, F,A, B, D, D )$           &             $6$             &            $16$             &                  $20$                   \\
				  19    & $[(41,2),20]_2$ &           $( B, A, 2, D, E, A, 1, D, F )$            &          $( E, 8, 0, 6, 2, 2, 3,6, C, 3 )$           &            $10$             &            $21$             &                  $26$                   \\
				  20    & $[(41,2),21]_2$ &        $( 0, B, 9, 5, 6, A, 2, 3, B, 6, F )$         &          $( E, 3, 7, A, D,6, 0, E, 7, B )$           &             $9$             &            $20$             &                  $25$                   \\
				  21    & $[(43,2),14]_2$ &        $( 2, E, E, 0, 2, 6, 2, E, 5, F, 5 )$         &         $( 1, 5, 4, A, D,C, 5, A, 3, B, 5 )$         &            $14$             &            $27$             &                  $32$                   \\
				  22    & $[(43,2),15]_2$ &        $( 2, 7, 5, F, 9, 9, F, A, E, 5, F )$         &            $( 5, 5, 3, D, E,1, B, 9, B )$            &            $13$             &            $26$             &                  $32$                   \\
				  23    & $[(43,2),28]_2$ &          $( 3, 6, 4, 0, 8, F, F, 4, 2, F )$          &          $( 9, A, 6, 2, C, 5,0, 8, 3, F )$           &             $6$             &            $18$             &                  $22$                   \\
				  24    & $[(43,2),29]_2$ &        $( 7, 3, 8, 9, 0, 8, A, 2, 0, 2, 5 )$         &          $( 3, 9, E, 3, 2,9, 7, 6, 8, 1 )$           &             $6$             &            $17$             &                  $22$                   \\
				  25    & $[(45,2),35]_2$ &        $ ( 7, 5, E, D, 2, B, 1, 0, 7, C, 9 )$        &        $ ( B, F, 8, 6, 7,7, 2, 8, A, D, 5 )$         &             $4$             &            $16$             &                  $20$                   \\
				  26    & $[(47,2),23]_2$ &       $( 2, E, 0, 1, 5, 9, 0, 3, E, 1, 7, 3 )$       &       $( 0, C, 8, D,7, 5, 2, 9, 6, 0, B, 1 )$        &            $12$             &            $24$             &                  $28$                   \\
				  27    & $[(47,2),24]_2$ &       $( 7, 0, D, 2, 4, F, 7, 3, 3, 8, E, F )$       &       $( 3, B, 9, 8,2, 2, 3, F, 9, C, 5, 1 )$        &            $11$             &            $22$             &                  $28$                   \\
				  28    & $[(51,2),16]_2$ &     $ ( 3, D, 8, 6, 9, C, 0, 0, 2, 1, 6, 1, B )$     &     $ ( 5, 7, B,0, 6, 1, 3, 1, 4, 5, 7, 4, B )$      &            $16$             &            $30$             &                  $36$                   \\
				  29    & $[(51,2),17]_2$ &      $ ( 0, E, C, 7, B, 4, 2, 6, 2, 6, 5, D )$       &     $ ( 3, 9, 1, 7,E, 4, C, B, 6, 3, D, F, B )$      &            $16$             &            $28$             &                  $36$                   \\
				  30    & $[(51,2),19]_2$ &      $ ( 9, F, 1, 9, 9, 0, 0, D, 1, 0, 3, F )$       &     $ ( 1, 8, 4, 2,A, 2, 4, 7, 3, 7, 3, 3, 3 )$      &            $14$             &            $27$             &                  $35$                   \\
				  31    & $[(51,2),26]_2$ &     $ ( 1, 0, 6, 3, 9, 8, C, C, F, A, 6, C, 3 )$     &     $ ( 2, C, B,7, 9, 6, C, 6, 2, A, E, A, D )$      &            $10$             &            $22$             &                  $32$                   \\
				  32    & $[(51,2),40]_2$ &     $ ( 0, E, 3, C, 7, B, A, 7, 0, C, F, E, B )$     &     $ ( 0, 5, 2,2, 1, C, F, 9, A, D, E, 0, D )$      &             $4$             &            $17$             &                  $22$                   \\
				  33    & $[(51,2),42]_2$ &     $ ( 0, 8, A, A, 7, 8, 3, 0, 0, 7, 5, 3, 3 )$     &     $ ( 5, 4, 5,B, 5, 0, A, 0, 3, 1, 6, 3, B )$      &             $4$             &            $17$             &                  $20$                   \\
				  34    & $[(55,2),21]_2$ &   $ ( 6, 9, 9, D, 8, 8, E, 9, D, 2, A, 6, 5, 1 )$    &    $ ( 2, 1,F, 8, 6, 3, 2, 4, 5, 3, 3, F, 2, 7 )$    &            $15$             &            $29$             &                  $36$                   \\
				  35    & $[(55,2),24]_2$ &     $ ( 9, D, 2, 1, 4, 1, 8, 9, 0, 3, 3, F, 3 )$     &    $ ( 3, A, 7,5, 3, 2, 6, F, C, A, 5, 1, C, B )$    &            $12$             &            $27$             &                  $36$                   \\
				  36    & $[(55,2),34]_2$ &   $ ( 2, A, 4, F, 0, 9, 2, C, 1, 4, 6, F, A, 1 )$    &    $ ( 6, 0,3, C, 3, 7, 5, A, E, 7, B, F, 1, 7 )$    &             $8$             &            $22$             &                  $28$                   \\
				  37    & $[(63,2),9]_2$  & $( 4, 6, A, E, 3, 9, D, D, 4, 9, 8, 6, 3, 3, 2, 7 )$ &  $( B, 0, A, 3, 0, 0, C, A, C, 8, 7, 6, C, 7, F )$   &            $28$             &            $44$             &                  $56$                   \\
				  38    & $[(63,2),10]_2$ & $( 0, 3, 4, 0, 4, B, C, A, 7, 2, C, F, E, 1, 4, 1 )$ &  $( 0, F, 5, F, 7, C, B, F, B, 4, 3, 5, 8, D, 3 )$   &            $27$             &            $42$             &                  $56$                   \\
				  39    & $[(63,2),11]_2$ & $( 5, 6, 0, 8, F, C, C, E, C, 4, 6, 2, F, B, B, 5 )$ & $( 4, C, C, 2, 6, 8, E, 5, 9, 7, C, A, B, A, 0, 5 )$ &            $26$             &            $42$             &                  $52$                   \\
				  40    & $[(63,2),12]_2$ & $( 0, 6, 9, 5, 6, A, 6, E, 4, 4, 6, 2, 2, 1, D, 7 )$ & $( 2, E, 2, D, E, A, 7, 8, 7, 4, 7, E, B, 4, 3, B )$ &            $24$             &            $42$             &                  $52$                   \\
				  41    & $[(63,2),13]_2$ & $( 0, F, C, 5, 6, B, 9, 8, 0, 2, 0, 0, C, E, B, 5 )$ & $( 0, 3, 1, 4, 4, 7, F, 1, 1, 4, 6, 0, B, 2, 4, D )$ &            $24$             &            $40$             &                  $52$                   \\
				  42    & $[(63,2),15]_2$ & $( 3, 8, 7, 5, 3, 7, 9, 7, 7, 9, 3, 7, C, 5, 7, 1 )$ & $( 5, 2, 7, 9, 0, A, 4, 3, 4, 3, 2, C, A, 2, D, 7 )$ &            $24$             &            $36$             &                  $48$                   \\
				  43    & $[(63,2),16]_2$ & $( 2, 1, E, 9, 4, A, 4, 8, C, 2, 3, C, B, 1, 5, 9 )$ & $( 7, 6, 5, D, E, 3, 1, E, C, 9, 1, 1, 0, 9, E, 1 )$ &            $23$             &            $38$             &                  $48$                   \\
				  44    & $[(63,2),17]_2$ & $( 1, 1, 5, 5, D, 6, 0, E, 7, 8, 6, 6, 9, 8, A, 7 )$ & $( 7, 0, 9, 3, 6, A, 4, F, B, A, 0, D, 7, 2, E, B )$ &            $22$             &            $38$             &                  $48$                   \\
				  45    & $[(63,2),18]_2$ & $( 1, 7, 0, 5, F, 3, 9, F, 5, 7, 4, 2, F, A, 9, F )$ & $( 0, 7, D, C, D, C, F, C, 7, D, D, 5, 0, 4, 2, 3 )$ &            $21$             &            $36$             &                  $47$                   \\
				   46    & $[(63,2),20]_2$ &  $( 5, C, B, D, 8, A, 2, 5, 5, B, 2, 7, A, B, 5 )$   & $( 2,5, 8, 5, B, A, 0, F, 6, 1, 8, B, 1, 3, C, 7 )$  &            $18$             &            $35$             &                  $36$                   \\
				  47    & $[(63,2),21]_2$ & $( 3, F, E, D, 0, 6, 4, B, 3, F, 4, 1, 1, B, C, 9 )$ & $( 2, B, 0, 9, 3, 7, 4, 2, 0, B, 7, 5, 1, 9, 8, 1 )$ &            $18$             &            $34$             &                  $44$                   \\
				  48    & $[(63,2),23]_2$ & $( 3, 6, A, 7, 8, D, D, 1, C, 1, 3, 8, 5, 0, D, 3 )$ & $( 6, A, 3, 1, 5, E, 4, 0, D, 1, C, 7, E, A, 8, 3 )$ &            $16$             &            $33$             &                  $42$                   \\
				  49    & $[(63,2),24]_2$ &  $( 0, 5, 2, 9, 2, E, 8, F, A, 1, 5, 6, 4, 9, 3 )$   & $( 7,3, 5, 2, 5, 6, D, 2, B, 7, 2, 7, 1, C, 2, 3 )$  &            $16$             &            $33$             &                  $40$                   \\
				  50    & $[(63,2),27]_2$ & $( 3, F, 7, 9, 5, 0, 1, 8, C, 5, C, A, 9, 7, 1, 7 )$ & $( 1, B, D, 3, D, 4, A, 5, 1, C, 6, A, C, 7, 6, 1 )$ &            $16$             &            $30$             &                  $40$                   \\
				   51    & $[(63,2),28]_2$ & $( 2, 7, 7, B, F, B, 2, 1, 0, E, 1, C, 4, 8, 8, D )$ & $( 2, 1, 5, 3, 9, B, E, 9, E, 7, 4, 8, 0, 1, 4, 9 )$ &            $15$             &            $30$             &                  $39$                   \\
				  52    & $[(63,2),29]_2$ & $( 0, 0, A, 0, 2, 8, E, 1, 5, 8, 2, A, 2, F, 0, D )$ & $( 6, 1, 4, 2, 1, 6, 6, 4, 7, 2, 6, E, 4, F, C, 1 )$ &            $14$             &            $30$             &                  $38$                   \\
				  53    & $[(63,2),30]_2$ & $( 1, 6, 8, 1, 4, 1, 2, 4, 2, D, 9, 7, D, 3, 6, F )$ & $( 6, 9, 1, 9, 9, C, 7, F, 6, 5, 7, 4, 6, 7, 0, 9 )$ &            $13$             &            $29$             &                  $36$                   \\
				   54    & $[(63,2),31]_2$ & $( 6, C, 1, C, 2, 0, A, 4, C, 0, 4, 0, 3, E, 3, 3 )$ & $( 2, D, 6, D, D, 4, 8, 3, 3, E, 6, F, 9, 3, 5, 9 )$ &            $12$             &            $28$             &                  $36$                   \\
				   55    & $[(63,2),32]_2$ & $( 0, 1, 4, 7, 5, C, 0, A, 4, 9, 7, 2, 2, C, 9, 1 )$ & $( 3, A, C, D, 4, 2, E, E, 8, A, 0, 8, C, 0, F, 1 )$ &            $12$             &            $28$             &                  $34$                   \\
				  56    & $[(63,2),33]_2$ & $( 0, F, F, E, A, 3, F, 7, 3, D, E, 6, 1, D, 3, F )$ & $( 1, 7, 3, 4, F, F, E, 0, 7, 5, 2, 1, A, C, C, B )$ &            $12$             &            $27$             &                  $34$                   \\
				  \bottomrule
			\end{tabular}%
		\end{threeparttable}  		}  
	\end{table*}

	\section{Conclusion}\label{Sect VI}

In this paper, we investigated the error-correction problem in repetitive communication scenarios and introduced the concepts of Type-I and Type-II GRCs, as detailed in Definition \ref{D_repetition code}. We then formulated combinatorial metrics for GRCs. 
Due to their multiple metric properties, GRCs can correct more error patterns compared to classical linear codes, as shown in Table \ref{Comparison of Error Correction Capabilities} and Figure \ref{fig_second_case}. We further demonstrated the superior error-correction capability of GRCs through simulations. 
In Section \ref{Sect IV} and \ref{Sect V}, we established theoretical bounds and presented many explicit construction methods for both Type-I and Type-II GRCs. 
For future research, it would be worthwhile to investigate tighter bounds and develop low-complexity decoding algorithms for Type-I and Type-II GRCs.


	%

	\bibliographystyle{IEEEtran} 
	\bibliography{reference} 

\end{document}